\def\slashedarrowfill@#1#2#3#4#5{%
  $\m@th\thickmuskip0mu\medmuskip\thickmuskip\thinmuskip\thickmuskip
  \relax#5#1\mkern-7mu%
  \cleaders\hbox{$#5\mkern-2mu#2\mkern-2mu$}\hfill
  \mathclap{#3}\mathclap{#2}%
  \cleaders\hbox{$#5\mkern-2mu#2\mkern-2mu$}\hfill
  \mkern-7mu#4$%
}
\def\rightslashedarrowfill@{%
  \slashedarrowfill@\relbar\relbar\mapstochar\rightarrow}
\newcommand\xslashedrightarrow[2][]{%
  \ext@arrow 0055{\rightslashedarrowfill@}{#1}{#2}}
\def\slashedrightarrow{\xslashedrightarrow{}}
\newtheorem{theorem}{Theorem}[section]
\newtheorem{lemma}[theorem]{Lemma}
\newtheorem{corollary}[theorem]{Corollary}
\newtheorem{conjecture}{Conjecture}[section]
\newtheorem{problem}{Problem}[section]
\newtheorem{definition}{Definition}[section]
\newtheorem{example}{Example}[section]
\newtheorem{remark}{Remark}[section]
\newcommand{\group}[1]  {
  \mathbb{#1}
}
\newcommand{\catw}[1]  {
  \mathbf{#1}
}
\newcommand{\word}[1]  {
  \mathit{#1}
}
\newcommand{\mor}[3]  {
  #1 \colon #2 \rightarrow #3
}
\newcommand{\tuple}[1]  {
	\langle #1 \rangle
}
\newcommand{\cont}[1]  {
  \catw{Cont}(\group{#1})
}
\newcommand{\classifying}[1]  {
  \catw{ZFA}[#1]
}
\newcommand{\aut}[1]  {
  \mathit{Aut}(#1)
}
\newcommand{\struct}[1]  {
  \mathcal{#1}
}
\newcommand{\boolalg}[1]  {
  \catw{Bool}_{\catw{Set}[1]}
}
\tikzstyle{vecArrow} = [thick, decoration={markings,mark=at position
\tikzstyle{vecArrow2} = [thick, black, double distance=1.4pt, shorten >= 1.5pt,
\tikzstyle{innerWhite} = [semithick, white,line width=1.4pt, shorten >= 4.5pt]
\title{A note on Stone-Čech compactification in ZFA}
\author{Michał R.~Przybyłek}
\affil{School of Informatics, The University of Edinburgh\\Polish-Japanese Academy of Information Technology}
\date{}
\begin{document}
\maketitle

\begin{abstract}
Working inside Zermelo-Fraenkel Set Theory with Atoms over an $\omega$-categorical $\omega$-stable structure we provide a structure theorem for Stone-Čech compactification of definable sets. In particular, we prove that the Stone-Čech compactification of a definable set is definable, which allows us to encode some \emph{infinitary} constructions over definable sets as \emph{finitary} ones -- we show that for a definable set $X$ with its Stone-Čech compactification $\overline{X}$ the following holds: a) the powerset $\mathcal{P}(X)$ of $X$ is isomorphic to the finite-powerset $\mathcal{P}_{\textit{fin}}(\overline{X})$ of $\overline{X}$, b) the vector space $\mathcal{K}^X$ over a field $\mathcal{K}$ is the free vector space $F_{\mathcal{K}}(\overline{X})$ on $\overline{X}$ over $\mathcal{K}$, c) every probability measure on $X$ is tantamount to a \emph{discrete} measure on $\overline{X}$. This leads to some new results about equivalence of certain computational problems.

\end{abstract}

\section{Introduction}
\label{sec:introduction}

It is an old observation that goes back to Stanisław Ulam that one can separate ``small sets'' from ``large sets'' and ``large sets'' from ``very large sets'' by the existence of certain ultrafilters on the sets. For example, let us work in classical mathematics ZFC. Then a set is finite if and only if it is in a bijective correspondence with the set of ultrafilters on it, in which case, every ultrafilter is principal. Therefore, we may say that a set is \emph{infinite} if there is a non-principal ultrafilter on it\footnote{Of course, we do not need the full power of the Axiom of Choice, Boolean Prime ideal Theorem is sufficient. Note, however, that it is consistent with ZF \cite{blass1997model} and even with ZF+DC+Hahn-Banach Theorem \cite{pincus1977definability} that \emph{all} ultrafilters are principal.}. One may also ask about the existence of non-principal countably-additive ultrafilters on a set and it is well-known that the smallest set having such an ultrafilter\footnote{If it exists, because its existence is not provable from ZFC alone.} must be strongly inaccessible (therefore, it must be ``very large'', as the sets below it form an inner model of ZFC).

A main theme of this paper is the structure of ultrafilters on definable sets in Fraenkel-Mostowski permutational models of Set Theory with Atoms (ZFA). In this setting the Axiom of Choice fails (unless the permutational model is trivial), and the Boolean Prime Ideal Theorem (BPIT) may hold or fail, but, counter-intuitively, it is mostly irrelevant for our results. In fact, our main results concern permutational models over $\omega$-categorical $\omega$-stable structures (although we will discuss other structures in the paper), in which case BPIT fails for general Boolean algebras, but holds for power-set algebras (see Theorem~\ref{t:ultrafilters:zfa} from Appendix~\ref{sec:app:intro:theorem}). Examples of such structures include Example~\ref{e:pure:sets} and Example~\ref{e:vector:space}, but not Example~\ref{e:pure:sets:constants}, Example~\ref{e:rationals} nor Example~\ref{e:graph}.

\begin{example}[Pure sets]\label{e:pure:sets}
Let $\struct{N} = \{0,1,2,\dotsc\}$ be a countably infinite set over empty signature $\Xi$. Then the first order theory of $\struct{N}$ is $\omega$-categorical and $\omega$-stable, i.e.~there is exactly one model of the theory up to an isomorphism for every infinite cardinal number. This theory is called the theory of ``pure sets''.
\end{example}

\begin{example}[Pure sets with constants]\label{e:pure:sets:constants}
Let $\struct{N} \sqcup N$ be the structure from Example~\ref{e:pure:sets} over an extended signature consisting of all constants $n \in N$. Then the first order theory of $\struct{N} \sqcup N$  has countably many non-isomorphic countable models, therefore is not $\omega$-categorical. It is, however, $\omega$-stable, because adding countably many constants cannot change the stability of a structure.
\end{example}

\begin{example}[Vector space over a finite field]\label{e:vector:space}
Let $V_{\mathcal{F}}$ be the free $\aleph_0$-dimensional vector space over a finite field $\mathcal{F}$. We shall consider $V_{\mathcal{F}}$ with its natural vector-space structure, i.e.~$\struct{V}_{\mathcal{F}} = \tuple{V_{\mathcal{F}}, {+}, (-)r}$ for every $r \in \mathcal{F}$. This theory is both $\omega$-categorical and $\omega$-stable, because for every infinite cardinal $\kappa$ it has exactly one model (up to isomorphism) of cardinality $\kappa$ --- the free vector space on $\kappa$ base vectors.  
\end{example}

\begin{example}[Rational numbers with ordering]\label{e:rationals}
Let $\struct{Q} = \tuple{Q, {\leq}}$ be the structure whose universe is interpreted as the set of rational numbers $Q$ with a single binary relation ${\leq} \subseteq Q \times Q$ interpreted as the natural ordering of rational numbers. Then the first order theory of $\struct{Q}$ is $\omega$-categorical but not $\omega$-stable.
\end{example}

\begin{example}[Random graph]\label{e:graph}
Let $\struct{R}$ be a countable graph over signature consisting of a single binary relation $E$ and satisfying the following two axioms: (Simplicity Axiom) $R$ is symmetric and irreflexive; (Extension Axiom) if $V_0, V_1 \subset R$ are finite disjoint subsets, then there is $v \in R$ such that for every $v_0 \in V_0$ the relation $R(v, v_0)$ holds and for every $v_0 \in V_0$ the relation $R(v, v_1)$ does not hold. Structure $\struct{R}$ is $\omega$-categorical, but not $\omega$-stable.
\end{example}

Interestingly, definable sets in $\omega$-categorical $\omega$-stable structures behave like something intermediate between ``small sets'' and ``large sets'' --- they enjoy many closure properties of finite sets, but the closure operators deviate significantly from the identity.

First of all, in classical ZFC, the distinction between ``small sets'' and ``large sets'' is not only a matter of a mere existence of non-principal ultrafilters, i.e.~``large sets'' have an enormous number of non-principal ultrafilters, whereas small sets have none. That is, for a set $X$ the number of non-principal ultrafilters is either $0$ (in case $X$ is finite) or doubly-exponential: $2^{2^X}$ (in case $X$ is infinite). In contrast (see Theorem~\ref{t:stone:cech:compactification}), the number of non-principal ultrafilters on definable sets in our permutational models may be bounded by a polynomial. In fact, the set of ultrafilters on a definable set is always definable. For example, in the basic Fraenkel-Mostowski model, the set of atoms $N$ has only one non-principal ultrafilter (consisting of all cofinite subsets of $N$), and for the set of distinct pairs of atoms $N^{[2]}$, we have exactly $2N + 1$ non-principal ultrafilters.

Secondly, in classical ZFC, a vector space is isomorphic to its dual if and only if it is finite dimensional. Let us assume for simplicity that our base field is $2$. Then if $V$ is an infinite-dimensional vector space with a base $X$, then the dimension of its dual space grows exponentially in $X$: i.e.~the dimension of $2^X$ is exactly $2^{X}$. Therefore, the base of $2^X$ is isomorphic to the set of ultrafilters on $X$ if and only if $X$ is finite-dimensional. In contrast, for every definable set $X$ in our permutational models, the set of ultrafilters on $X$ is isomorphic to the base of $2^X$, which proves that dual spaces have basis and gives an explicit construction of the basis (see Theorem~\ref{t:free:space}) Moreover, since the space $2^X$ is just the power set $\mathcal{P}(X)$ of $X$ and the free vector space on a set is just the set $\mathcal{P}_{\word{fin}}(X)$ of finite subsets of $X$, Theorem~\ref{t:free:space} implies that for every definable $X$ we have that $\mathcal{P}(X) \approx \mathcal{P}_{\word{fin}}(Y)$ for some definable $Y$, i.e. $Y$ can be taken to be the set of ultrafilters on $X$. This means, that we can effectively, transfer theorems about \emph{finite} subsets of definable sets to \emph{all} subsets of definable sets. For some of the applications, see Subsection~\ref{sub:sec:machines} below.

Finally, in classical ZFC, a set $X$ is finite if and only if every measure $\mu$ on the full algebra of all subsets of $X$ is a \emph{finite} combination of mass-measures, i.e.~$\mu = \sum_{i=1}^n r_i x_i$, where $\sum_{i=1}^n r_i = 1$, each $r_i$ is positive, and $x_i$ is concentrated on a singleton. Of course, a mass measure on a set is just a principal ultrafilter on the set. Moreover, every countably-additive ultrafilter is tantamount to a measure taking values in $\{0, 1\}$. But for definable sets $X$ in our permutational models, being countably-additive is a vacuous condition, because every countable collection of subsets of $X$ must be essentially finite. 
Therefore, every ultrafilter on a definable set is tantamount to a $\{0, 1\}$-measure. As it turns out, every measure on a definable set is a finite combination of ultrafilters on the set (see Theorem~\ref{t:definable:measure}).

\subsection{Preliminaries}
In this section we fix our terminology and notation. We assume that the reader is familiar with basic concepts from category theory \cite{johnstone2003sketches} \cite{kelly1982basic}, model theory\cite{chang1990model} \cite{hodges1993model} and set theory \cite{halbeisen2017combinatorial}, \cite{jech1977axiom} \cite{jech2008axiom}. Sets will be usually denoted by capital Roman letters $A, B, X, Y$ etc. Infinite ordinals will be denoted by lower case Greek letters $\alpha, \beta, \lambda, \omega$, etc. Finite ordinal numbers will be denoted by lower case Roman letters $m, n, k$, etc. By convention we shall identify subsets $A_0 \subseteq A$ with their characteristic functions $\mor{A_0}{A}{2}$, so $x \in A_0$ is the same as $A_0(x) = 1$

Throughout the paper we will consider models of a complete countable single-sorted first-order theory with no finite models. Formulas will be denoted by lower case Greek letters $\psi, \phi, \theta, \dots$. We shall write $\phi(x_1, x_2, \dotsc, x_n)$ to indicate that the free variables in $\phi$ are in $x_1, x_2, \dotsc, x_n$. We will also write $\overline{x}$ for the sequence $x_1, x_2, \dotsc, x_n$ and then $|\overline{x}|$ for $n$ -- the length of the sequence. A sentence is a formula without free variables. Structures will be denoted by stylised capital Roman letters $\struct{A}$, $\struct{B}$, $\struct{N}$, etc. If $\struct{A}$ is a structure then its universum will be denoted by $A$. The elements of $A$ should be thought of as the ``atoms''. If $\phi(\overline{x}, \overline{y})$ is a formula and $\overline{a} \in A_0^{|\overline{y}|}$ is a sequence of elements in $A_0$ for some $A_0 \subseteq A$, then we call  $\phi(\overline{X}, \overline{a})$ a formula with parameters in $A_0$ or just formula with parameters in case $A_0 = A$. A (complete) $n$-type over $A_0 \subseteq A$ is just the maximal consistent set of formulas $\phi(\overline{x}, \overline{a})$ with $n$ free variables, i.e.~$|\overline{x}| = n$, and parameters $\overline{a}$ from $A_0$. Types will be usually denoted by lower case Roman letters $p, q, r, \dotsc$. The set of all $n$-types over $A_0$ will be denoted by $S_n(A_0)$. A type $p \in S_n(A_0)$ is \emph{definable} over $B_0 \subseteq A$ if for every $\phi(\overline{x}, \overline{y})$ there exists a formula $\psi(\overline{y}, \overline{b})$ with parameters in $B_0$ such that $\phi(\overline{x}, \overline{a}) \in p \Leftrightarrow \psi(\overline{a}, \overline{b})$. A type is definable if it is definable for some $B_0 \subseteq A$ and it is finitely definable if $B_0$ is finite.
We say that a set $D \subseteq A^n$ is \emph{definable} with parameters $B_0 \in A$ if there exists a formula $\phi(\overline{x}, \overline{b})$ such that $D = \{\overline{a} \in A^n \colon \phi(\overline{a}, \overline{b})\}$. A set defined by a formula $\phi(\overline{x}, \overline{b})$ will be denoted by $\phi(A, \overline{b})$. A $D \subseteq A^n$ is \emph{definable} if it is definable for some $B_0 \subseteq A$. Notice that for a complete theory two formulas are equivalent if and only if they define the same set (for any model of the theory). A theory is said to be $\omega$-categorical if for every natural $n$ the set $S_n(\emptyset)$ of $n$-types without parameters is finite. Equivalently, if for every natural $n$ there are only finitely many formulas $\phi(x_1, x_2, \dotsc, x_n)$ modulo the theory. A theory is said to be $\omega$-stable if for every natural $n$ the set $S_n(A)$ of $n$-types over universum of the model $A$ is countable. 

We shall speak about Morley rank and Morley degree of a formula in a few contexts. Morley rank together with Morley degree associate with every formula $\phi(\overline{x}, \overline{a})$ with parameters an invariant playing the role of a generalised dimension. Morley rank of a formula consists of a generalised ordinal number $\alpha$, which can be either $-1$, an ordinal number or $\infty$ symbol (i.e.~unbounded dimension). The below definition is inductive and starts by providing an upper bound on Morley rank $\mathit{MR}(\phi)$ of $\phi$. For every $\phi(A, \overline{a}) \neq \emptyset$ we have that $\mathit{MR}(\phi) \geq 0$ and if $\phi(A, \overline{a}) = \emptyset$ then we set $\mathit{MR}(\phi) = 0$. If $\alpha$ is a limit ordinal, then $\mathit{MR}(\phi) \geq 0$ if an only if $\mathit{MR}(\phi) \geq \beta$ for all $\beta < \alpha$. For any ordinal $\alpha$ we have that $\mathit{MR}(\phi) \geq \alpha + 1$ if and only if there is an \emph{infinite} sequence of pairwise disjoint formulas $\psi_1(\overline{x}, \overline{a_1}), \psi_2(\overline{x}, \overline{a_2}), \dotsc$ with parameters such that for every $i$ we have that $\mathit{MR}(\psi_i) \geq \alpha$ and $\phi(A, \overline{a}) = \bigsqcup_{i} \psi_i(A, \overline{a_i})$. Then the Morley rank of formula $\phi$ is defined as the biggest $\alpha$ such that $\mathit{MR}(\phi) \geq \alpha$ or $\infty$ is such an $\alpha$ does not exists. If the Morley Rank of $\phi(\overline{x}, \overline{a})$ is an \emph{ordinal number} $\alpha$ then we define the Morley degree $\mathit{MR}(\phi)$ of $\phi$ to be the greatest natural number $k$ such that there are $k$ pairwise disjoint formulas $\psi_i(\overline{x}, \overline{a_i})$ with Morley rank $\alpha$ such that $\phi(A, \overline{a}) = \bigsqcup_{i} \psi_i(A, \overline{a_i})$. It is a standard result of Model Theory that in a $\omega$-categorical $\omega$-stable theory, every formula has an integer Morley rank.

Let $\struct{A}$ be an algebraic structure (both operations and relations are allowed) with universum $A$. We shall think of elements of $\struct{A}$ as ``atoms''. A von Neumann-like hierarchy $V_\alpha(\struct{A})$ of sets with atoms $\struct{A}$ can be defined by transfinite induction \cite{mostowski1939unabhangigkeit}, \cite{halbeisen2017combinatorial}:
\begin{itemize}
    \item $V_0(\struct{A}) = A$
    \item $V_{\alpha + 1}(\struct{A}) = \mathcal{P}(V_{\alpha}(\struct{A})) \cup V_{\alpha}(\struct{A})$
    \item $V_{\lambda}(\struct{A}) = \bigcup_{\alpha < \lambda} V_{\alpha}(\struct{A})$ if $\lambda$ is a limit ordinal
\end{itemize}
Then the cumulative hierarchy of sets with atoms $\struct{A}$ is just $V(\struct{A}) = \bigcup_{\alpha \colon \mathit{Ord}} V_{\alpha}(\struct{A})$. Observe, that the universe $V(\struct{A})$ carries a natural action $\mor{(\bullet)}{\aut{\struct{A}} \times V(\struct{A})}{V(\struct{A})}$ of the automorphism group $\aut{\struct{A}}$ of structure $\struct{A}$ --- it is just applied pointwise to the atoms of a set. If $X \in V(\struct{A})$ is a set with atoms then by its set-wise stabiliser we shall mean the set: $\aut{\struct{A}}_X = \{\pi \in \aut{\struct{A}} \colon \pi \bullet X = X \}$; and by its point-wise stabiliser the set: $\aut{\struct{A}}_{(X)} = \{\pi \in \aut{\struct{A}} \colon \forall_{x \in X} \pi \bullet x = x \}$. Moreover, for every $X$, these sets inherit a group structure from $\aut{\struct{A}}$.

There is an important sub-hierarchy of the cumulative hierarchy of sets with atoms $\struct{A}$, which consists of ``symmetric sets'' only. To define this hierarchy, we have to equip $\aut{\struct{A}}$ with the structure of a topological group. A set $X \in V(\struct{A})$ is \emph{symmetric} if the set-wise stabilisers of all of its descendants $Y$ is an open set (an open subgroup of $\aut{\struct{A}}$), i.e.~for every $Y \in^* X$ we have that: $\aut{\struct{A}}_{Y}$ is open in $\aut{\struct{A}}$, where ${\in^*}$ is the reflexive-transitive closure of the membership relation ${\in}$. A function between symmetric sets is called symmetric if its graph is a symmetric set.
Of a special interest is the topology on $\aut{\struct{A}}$ inherited from the product topology on $\prod_A A = A^A$ (i.e.~the Tychonoff topology). We shall call this topology the canonical topology on $\aut{\struct{A}}$. In this topology, a subgroup $\group{H}$ of $\aut{\struct{A}}$ is open if there is a finite $A_0 \subseteq A$ such that: $\aut{\struct{A}}_{(A_0)} \subseteq \group{H}$, i.e.: group $\group{H}$ contains a pointwise stabiliser of some finite set of atoms. The sub-hierarchy of $V(\struct{A})$ that consists of symmetric sets according to the canonical topology on $\aut{\struct{A}}$ will be denoted by $\catw{ZFA}(\struct{A})$ (it is a model of Zermelo-Fraenkel set theory with atoms).

\begin{remark}
The above definition of hierarchy of symmetric sets is equivalent to another one used in model theory. 
By a normal filter of subgroups of a group $\group{G}$ we shall understand a filter $\mathcal{F}$ on the poset of subgroups of $\group{G}$ closet under conjugation, i.e.~if $g \in \group{G}$ and $\group{H} \in \mathcal{F}$ then $g \group{H} g^{-1} = \{g \bullet h \bullet g^{-1} \colon h \in \group{H}\} \in \mathcal{F}$. Let $\mathcal{F}$ be a normal filter of subgroups of $\aut{\struct{A}}$. We say that a set $X \in V(\struct{A})$ is $\mathcal{F}$-symmetric if the set-wise stabilisers of all of its descendants $Y$ belong to $\mathcal{F}$ --- i.e. $Y \in^* \mathcal{F}$. To see that the definitions of symmetric sets and $\mathcal{F}$-symmetric sets are equivalent, observe first that if $\group{G}$ is a topological group, then the set $\mathcal{F}$ of all open subgroups of $\group{G}$ is a normal filter of subgroups. In the other direction, if $\mathcal{F}$ is a normal filter of subgroups of a group $\group{G}$, then we may define a topology on $\group{G}$ by declaring sets $U \subseteq \group{G}$ to be open if they satisfy the following property: for every $g \in U$ there exists $\group{H} \in \mathcal{F}$ such that $g \group{H} \subseteq U$. According to this topology a group $\group{U}$ is open iff  $\group{U} \in \mathcal{F}$ --- just observe that for every group $\group{U}$ and for every $g \in \group{U}$ we have that $g\group{U} = \group{U}$; and if $\group{H} \in \mathcal{F}$ such that $\group{H} = 1\group{H} \subseteq \group{U}$ then by the property of the filter, $\group{U} \in \mathcal{F}$.    
\end{remark}

\begin{example}[The basic Fraenkel-Mostowski model]\label{e:first:zfa}
Let $\struct{N}$ be the structure from Example~\ref{e:pure:sets}. We call $\catw{ZFA}(\struct{N})$ the basic Fraenkel-Mostowski model of set theory with atoms. Observe that $\aut{\struct{N}}$ is the group of all bijections (permutations) on $N$. 
The following are examples of sets in $\catw{ZFA}(\struct{N})$:
\begin{itemize}
\item all sets without atoms, e.g.~$\emptyset, \{\emptyset\}, \{\emptyset, \{\emptyset\}, \dotsc\}, \dotsc$
\item all finite subsets of $N$, e.g.~$\{0\}, \{0,1,2,3\}, \dotsc$
\item all cofinite subsets of $N$, e.g.~$\{1, 2, 3, \dotsc\}, \{4, 5, 6, \dotsc\}, \dotsc$
\item $N\times N$
\item $\{\tuple{a,b} \in N^2 \colon a \neq b\}$
\item $N^* = \bigcup_{k\in N} N^k$ 
\item $\mathcal{P}_{\word{fin}}(N) = \{N_0 \colon N_0 \subseteq N, \textit{$N_0$ is finite}\}$
\item $\mathcal{P}(N) = \{N_0 \colon N_0 \subseteq N, \textit{$N_0$ is symmetric}\}$
\end{itemize}
\end{example}

\begin{example}[The ordered Fraenkel-Mostowski model]\label{e:ordered:zfa}
Let $\struct{Q}$ be the structure from Example~\ref{e:rationals}. We call $\catw{ZFA}(\struct{Q})$ the ordered Fraenkel-Mostowski model of set theory with atoms. Observe that $\aut{\struct{Q}}$ is the group of all order-preserving bijections on $Q$. 
All symmetric sets from Example~\ref{e:first:zfa} are symmetric sets in $\catw{ZFA}(\struct{Q})$ when $N$ is replaced by $Q$. Here are some further symmetric sets:
\begin{itemize}
\item $Q^{<2} = \{\tuple{p, q} \in Q^2 \colon p \leq q\}$
\item $Q^{<2} \cap [0, 1]^2 = \{\tuple{p, q} \in Q^2 \colon 0 \leq p \leq q \leq 1 \}$
\end{itemize}
\end{example}

Observe that the group $\aut{\struct{A}}_{(A_0)}$ is actually the group of automorphism of structure $\struct{A}$ extended with constants $A_0$, i.e.: $\aut{\struct{A}}_{(A_0)} = \aut{\struct{A} \sqcup A_0}$. Then a set $X \in V(\struct{A})$ is symmetric if and only if there is a finite $A_0 \in A$ such that $\aut{\struct{A} \sqcup A_0} \subseteq \aut{\struct{A}}_X$ and the canonical action of topological group $\aut{\struct{A} \sqcup A_0}$ on discrete set $X$ is continuous. A symmetric set is called $A_0$-equivariant (or equivariant in case $A_0 = \emptyset$) if $\aut{\struct{A} \sqcup A_0} \subseteq \aut{\struct{A}}_X$. Therefore, the (non-full) subcategory of $\catw{ZFA}(\struct{A})$ on $A_0$-equivariant sets and $A_0$-equivariant functions (i.e.~functions whose graphs are $A_0$-equivariant) is equivalent to the category $\cont{\aut{\struct{A} \sqcup \word{A_0}}} \subseteq \catw{Set}^{\aut{\struct{A} \sqcup \word{A_0}}}$ of continuous actions of the topological group $\aut{\struct{A} \sqcup A_0}$ on discrete sets. We will heavily use the transfer principle developed in \cite{licsMRP}, which is based on the observation that adding finitely many constants to an $\omega$-categorical and $\omega$-stable structure and closing it under elimination of imaginaries, produces structure, which is $\omega$-categorical and $\omega$-stable.

\begin{definition}[Definable set in ZFA]
We shall say that an $A_0$-equivariant set $X \in \catw{ZFA}(\struct{A})$ is definable if its canonical action has only finitely many orbits, i.e.~if the relation $x \equiv y \Leftrightarrow \exists_{\pi \in \aut{\struct{A} \sqcup \word{A_0}}} \; x = \pi \bullet y$ has finitely many equivalence classes.
\end{definition}

For an open subgroup $\group{H}$ of $\aut{\struct{A}}$ let us denote by $\aut{\struct{A}}/\group{H}$ the quotient set $\{\pi \group{H} \colon \pi \in \aut{\struct{A}} \}$. This set carries a natural continuous action of $\aut{\struct{A}}$, i.e.~for $\sigma, \pi \in \aut{\struct{A}}$, we have $\sigma \bullet \pi \group{H} = (\sigma \circ \pi) \group{H}$. All transitive (i.e.~single orbit) actions of $\aut{\struct{A}}$ on discrete sets are essentialy of this form (see for example Chapter~III, Section~9 of \cite{maclane2012sheaves}). Therefore, equivariant definable sets are essentially finite unions of sets of the form $\aut{\struct{A}}/\group{H}$. Moreover, if structure $\struct{A}$ is $\omega$-categorical (Example~\ref{e:pure:sets}, Example~\ref{e:vector:space}, Example~\ref{e:rationals}, Example~\ref{e:graph}), then equivariant definable sets are the same as sets definable in the first order theory of $\struct{A}$ extended with elimination of imaginaries \cite{licsMRP}. Therefore, we can just speak of definable sets.

\subsection{Some applications of Stone-Čech compactification to register machines}
An important type of automata has been defined by Kaminski and Francez \cite{kaminski1994finite}. The authors called these type of automata ``finite memory machines'', or ``register machines''. A finite memory machine is a finite automaton augmented with a finite number of registers $R_i$ that can store natural numbers. The movement of the machine can depend on the control state, on the letter and on the content of the registers. The dependency on the content of the registers is, however, limited --- the machine can only test for equality (no formulas involving successor, addition, multiplication, etc.~are allowed). Here is a suitable generalisation of this definition to a general structure $\struct{A}$.

A finite memory automata (over structure $\struct{A}$) with $k$ registers over alphabet $\Sigma$ is a quadruple $\tuple{S, \delta, I, F}$ such that:
\begin{itemize}
\item $S$ is a finite set of states
\item $I \subseteq S$ is a set of initial states, and $\phi_I \subseteq A^k$ is a set of possible initial configurations of registers
\item $F \subseteq S$ is a set of final states, and $\phi_F \subseteq A^k$ is a set of possible final configurations of registers
\item $\delta \subseteq (\Sigma \times S \times A^k) \times (S \times A^k)$ is a transition relation such that for every $s, s' \in S$ the relation $\delta(s, s') \subseteq (\Sigma \times A^k) \times A^k$ is $\struct{A}$-definable.
\end{itemize}
A finite memory automata is called deterministic if $I$ is the singleton and the transition relation $\sigma$ is functional.

It is well-known that finite memory automata in the above sense are equivalent to definable automata in ZFA, i.e.~set $S\times A^k$ can be identified with a definable set, and then the transition relation becomes a definable relation between definable sets. Therefore, a definable deterministic automata is just a definable function $\mor{\sigma}{\Sigma \times S}{S}$ between definable sets together with an initial state $s_0 \in S$ and a set of final states $F \subseteq S$. To define the language $L(A)$ recognised by such an automaton, we have to observe that functions $\mor{\sigma}{\Sigma \times S}{S}$ are tantamount to functions $\mor{\sigma^\dag}{\Sigma}{S^S}$ and $S^S$ carries a structure of a monoid under composition of functions $S \rightarrow S$, and so, one may extend $\sigma^\dag$ to the unique homomorphism $\mor{h}{\Sigma^*}{S^S}$ from the free monoid on $\Sigma$ generators. The language of $A$ is just the set $L(A) =\{w \in \Sigma^* \colon h(w)(s_0) \in F \}$. Similarly, the crucial observation needed to define the language of a non-deterministic automaton is that the transition relation $\mor{\sigma}{\Sigma \times S}{\mathcal{P}(S)}$ is tantamount to $\sigma^\dag \colon \Sigma \rightarrow \mathcal{P}(S)^S \approx \mathcal{P}(S\times S)$ and $\mathcal{P}(S\times S)$ carries a monoidal structure induced by the composition of relations $S \slashedrightarrow S$. One may wonder, if we can substitute the powerset operator with other operators on $S$. The answer is yes, provided that the operator is a \emph{strong} monad (this is a sufficient, but not necessary condition) on the category $\classifying{A}$, i.e.~if $\mor{T}{\classifying{A}}{\classifying{A}}$ is a strong monad, then $T(S)^S$ is naturally a monoid under Kleisli composition of functions $S \rightarrow T(S)$.

\subsubsection{On a machine that can erase information from its registers}\label{sub:sec:machines}
Intuitively, erasing information from a register, should make all of the values in the register ``equally likely'' and each individual value ``completely unlikely''. If $R$ can hold a value from $\struct{N}$, then we can model this by assigning values to $R$ in such a way that the probability for $R$ to get values from any finite subset of $\struct{N}$ is zero. This corresponds to the assignment of a value to $R$ at ``random'' according to the only non-principal ultrafilter on $N$, i.e.~the ultrafilter consisting of all cofinite subsets of $N$. This, in turn, suggests that we should model the operation of erasing information from registers via ultrafilter automata: that is, automata for the ultrafilter monad $\mor{\overline{(-)}}{\classifying{A}}{\classifying{A}}$. Notice that in the classical setting of finite automata, we do not speak about ``finite ultrafilter automata'', because every ultrafilter on a finite set is principal. Here is the formal definition.
\begin{definition}[Ultra-automaton]\label{d:ultra:machine}
    A deterministic ultra-automata (or erasing information automata) over a definable alphabet $\Sigma$ consists of a definable set $S$, definable transition relation $\mor{\sigma}{\Sigma \times S}{\overline{S}}$ an initial state $s_0 \in S$ and a set of final states $F$.
\end{definition}
By Theorem~\ref{t:ultrafilter:strong}, the set of ultrafilters carries a strong monad structure, therefore we can define the language of such an automaton in a natural way. 

\begin{definition}[Language of an ultra-automaton]
    The language $L$ of an automaton $\mor{\sigma}{\Sigma \times S}{\overline{S}}$ with initial state $s_0$ and final states $F$ is defined as $L = \{w \in \Sigma^* \colon h(w)(s_0) \in F \}$, where $\mor{h}{\Sigma^*}{\overline{S}^S}$ is the unique homomorphism of monoids extending function $\mor{\sigma^\dag}{\Sigma}{\overline{S}^S}$
\end{definition}
    One may extend the above definition to non-deterministic ultra-automaton by observing that the ultrafilter monad can be extended to internal relations. This is however unnecessary due to the next theorem and its proof. 

\begin{theorem}[On the expressive power of ultra-automata]\label{t:expressive:ultra:automata}
    Let $\struct{A}$ be an $\omega$-categorical and $\omega$-stable structure. The languages in $\classifying{A}$ recognised by definable ultra-automata are exactly the same as the languages recognised by deterministic automata.
\end{theorem}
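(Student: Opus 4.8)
The plan is to prove the two inclusions separately. One direction, that every deterministic language is recognised by some definable ultra-automaton, is essentially formal and exploits the unit of the ultrafilter monad; the reverse direction requires a determinisation, and the whole content is that this determinisation does not leave $\classifying{A}$.

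For the easy inclusion I would observe that a deterministic automaton is literally a special ultra-automaton. Given $\mor{\sigma}{\Sigma \times S}{S}$ with initial state $s_0$ and final states $F \subseteq S$, post-compose with the unit $\eta$ of the ultrafilter monad to obtain $\mor{\eta \circ \sigma}{\Sigma \times S}{\overline{S}}$, with the same initial state and final set $\eta[F] \subseteq \overline{S}$ (which is definable, being the image of a definable set under a definable map). Since the Kleisli composition of principal-ultrafilter-valued maps agrees with ordinary composition, the induced homomorphism $\mor{h}{\Sigma^*}{\overline{S}^S}$ satisfies $h(w)(s_0) = \eta\bigl(\text{run}(w)(s_0)\bigr)$, where $\text{run}$ is the ordinary run; as $\eta$ is injective, $h(w)(s_0) \in \eta[F]$ holds exactly when the ordinary run ends in $F$, so the two automata recognise the same language.

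For the hard inclusion I would determinise. Given an ultra-automaton with transition $\mor{\sigma}{\Sigma \times S}{\overline{S}}$, I build an ordinary deterministic automaton whose state space is $\overline{S}$ itself, which is definable by the structure theorem for Stone-Čech compactification (Theorem~\ref{t:stone:cech:compactification}). Its transition is the Kleisli extension taken uniformly in the letter, $\hat{\sigma} = \mu \circ \overline{\sigma} \circ t \colon \Sigma \times \overline{S} \to \overline{S}$, where $t$ is the strength and $\mu$ the multiplication of the monad; this composite is a morphism of $\classifying{A}$ precisely because the ultrafilter monad is strong (Theorem~\ref{t:ultrafilter:strong}), the strength being exactly what carries the letter $a$ along definably. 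Its initial state is the principal ultrafilter $\eta(s_0)$, and its set of final states is the given definable final set $F \subseteq \overline{S}$ (equivalently, if the acceptance condition is specified by a subset $F \subseteq S$, the clopen, hence definable, set $\{u \in \overline{S} \colon F \in u\}$). Correctness then follows from the monad laws: the Kleisli extension $(-)^{\ast} \colon \overline{S}^S \to \overline{S}^{\overline{S}}$ is a homomorphism from the Kleisli-composition monoid to the ordinary-composition monoid and satisfies $f^{\ast} \circ \eta = f$, so the determinised run is $h(w)^{\ast}$ and $h(w)^{\ast}(\eta(s_0)) = h(w)(s_0)$; hence the determinised automaton accepts $w$ iff $h(w)(s_0) \in F$, which is exactly $w \in L$.

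I expect the only genuine obstacle to be checking that the determinised automaton is truly a \emph{definable} deterministic automaton, i.e.\ that $\overline{S}$, the transition $\hat{\sigma}$, and the final set are all definable. This is precisely where the paper's machinery is indispensable: in classical finite automata this determinisation would inflate the state space to size $2^{2^{|S|}}$ and leave the finite world entirely, whereas here Theorem~\ref{t:stone:cech:compactification} keeps $\overline{S}$ definable and Theorem~\ref{t:ultrafilter:strong} keeps the Kleisli extension definable, so the construction never escapes $\classifying{A}$. Once definability is secured, the two inclusions above close the argument.
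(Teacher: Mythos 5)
Your proposal is correct and follows essentially the same route as the paper: definability of $\overline{S}$ via Theorem~\ref{t:stone:cech:compactification} (Corollary~\ref{c:definable:monad}), definability of the Kleisli extension $\Sigma \times \overline{S} \to \overline{S}$ via strength (Theorem~\ref{t:ultrafilter:strong}), and correctness from the fact that Kleisli extension embeds the monoid $\overline{S}^S$ into $\overline{S}^{\overline{S}}$. You are somewhat more explicit than the paper about the trivial inclusion (post-composing with the unit $\eta$) and about the initial/final data of the determinised automaton, but nothing in substance differs.
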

\begin{proof}
    According to Corollary~\ref{c:definable:monad}, the ultrafilter monad restricts to the monad on definable sets. Thus, $\overline{S}$ is definable. Moreover, because the monad is strong and the structure of the monad is equivariant, every definable function $\mor{\sigma}{\Sigma \times S}{\overline{S}}$ extends to a definable function $\mor{\sigma}{\Sigma \times \overline{S}}{\overline{S}}$.
    Observe also that $\overline{S}^S$ is a submonoid of $\overline{S}^{\overline{S}}$ (actually, the full submonoid on continuous functions), therefore the languages recognised by  $\mor{\sigma}{\Sigma \times S}{\overline{S}}$ and 
    $\mor{\overline{\sigma}}{\Sigma \times \overline{S}}{\overline{S}}$ are the same.
\end{proof}
The above theorem effectively says that we can include the ``erase information'' operation to register machines without changing they properties.

While for general $\omega$-categorical structures the ultrafilter monad do not restrict to definable sets (see Example~\ref{e:types:in:graphs}), we conjecture that Theorem~\ref{t:expressive:ultra:automata} holds for every $\omega$-categorical structure.

\begin{conjecture}
    Let $\struct{A}$ be an $\omega$-categorical structure. The languages in $\classifying{A}$ recognised by definable ultra-automata are exactly the same as the languages recognised by deterministic automata.
\end{conjecture}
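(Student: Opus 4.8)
The plan is to keep the easy inclusion from the proof of Theorem~\ref{t:expressive:ultra:automata} and, in the hard inclusion, to replace the state space $\overline{S}$ by a definable \emph{sub}-state-space of reachable configurations. First I would dispose of the containment ``deterministic $\subseteq$ ultra'': a definable deterministic automaton $\mor{\sigma}{\Sigma \times S}{S}$ becomes an ultra-automaton by post-composing with the unit $\mor{\eta}{S}{\overline{S}}$ of the monad, and since $\eta$ is an equivariant injection that commutes with Kleisli composition, the induced homomorphism $\mor{h}{\Sigma^*}{\overline{S}^S}$ factors through $S^S$ and recognises the same language. The whole difficulty is therefore the converse.

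In the $\omega$-stable case the converse was immediate because $\overline{S}$ is itself definable (Corollary~\ref{c:definable:monad}), so one could take $\overline{S}$ as the deterministic state space. For a general $\omega$-categorical $\struct{A}$ this fails: as Example~\ref{e:types:in:graphs} shows, the ultrafilter monad need not restrict to definable sets, and $\overline{S}$ may carry infinitely many orbits of $\aut{\struct{A}}$. The plan is to observe that the language only ever inspects the single ultrafilter $h(w)(s_0)$, and to track the \emph{reachable set} $Q \subseteq \overline{S}$ obtained by closing $\{\eta(s_0)\}$ under the extended transitions $\mor{\overline{\sigma}(a,-)}{\overline{S}}{\overline{S}}$ for $a \in \Sigma$. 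If $Q$ is definable, then the transition restricts to a definable map $\mor{\overline{\sigma}}{\Sigma \times Q}{Q}$, the accepting clopen set $\hat{F} = \{U \in Q \colon F \in U\}$ is definable, and $\tuple{Q, \overline{\sigma}, \eta(s_0), \hat{F}}$ is an honest definable deterministic automaton recognising $L$.

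So the real content is to prove that $Q$ is orbit-finite. The strategy I would pursue is a \emph{support} analysis: every reachable state $h(w)(s_0)$ is a definable type, and I would show its parameters can always be chosen inside a set of atoms whose size is bounded by the register arity of $S$, \emph{independently of} $|w|$. The intuition is that reading one letter performs a single Kleisli bind along a fixed definable map of bounded arity, so it can read and overwrite only boundedly many registers; the ``erase'' effect of the monad then discards the previous, now-unreferenced parameters rather than accumulating them. Concretely I would argue by induction on $|w|$, using that the Kleisli extension of a definable map sends a type definable over a finite $B_0$ to one definable over a finite $B_0'$ with $|B_0'| \leq |B_0| + c$ for a constant $c$ depending only on the automaton, and then collapse parameters that the current configuration does not constrain, exploiting oligomorphy of $\aut{\struct{A}}$.

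The hard part, and the reason this is only a conjecture, is precisely this uniform support bound. In the $\omega$-stable case the integer-valued Morley rank furnishes such a bound automatically, and this is exactly what makes $\overline{S}$ definable; without $\omega$-stability a type may have unbounded combinatorial complexity, and a naive induction lets the parameter set of $h(w)(s_0)$ grow linearly with $|w|$, which would give $Q$ infinitely many orbits and break the argument. Establishing that the genuinely reachable configurations nevertheless live on finitely many orbits --- equivalently, a ``finite-support reachability'' theorem replacing the Morley-rank bound --- is the crux; an alternative but equally delicate route is to show directly that the transition monoid generated by $\sigma^\dag(\Sigma)$ inside $\overline{S}^S$ is a definable (orbit-finite) monoid, and then to invoke the standard correspondence between orbit-finite syntactic monoids and definable deterministic automata.
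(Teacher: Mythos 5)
This statement is stated in the paper as a \emph{conjecture}: the paper offers no proof of it, only the remark that the $\omega$-stable argument of Theorem~\ref{t:expressive:ultra:automata} breaks down because, for general $\omega$-categorical $\struct{A}$, the ultrafilter monad does not restrict to definable sets (Example~\ref{e:types:in:graphs}). Your proposal is likewise not a proof, and you say so yourself: everything hinges on the claim that the reachable set $Q \subseteq \overline{S}$, the closure of $\{\eta(s_0)\}$ under the Kleisli-extended transitions, is orbit-finite, and that claim is left open. This is a genuine gap, not a presentational one. Concretely, in the random-graph model the symmetric non-principal ultrafilters on $R$ are in bijection with the definable subsets of $R$, so $\overline{R}$ has infinitely many orbits, and nothing in your induction prevents the reachable configurations from marching through infinitely many of them: the step ``$|B_0'| \leq |B_0| + c$'' only gives a parameter set growing linearly in $|w|$, and the subsequent ``collapse parameters that the current configuration does not constrain'' is exactly the point where the $\omega$-stable proof invokes Lemma~\ref{l:parameters:in:definable:types} (a uniform two-parameters-per-variable bound coming from normalised defining formulas and finite Morley rank). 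Without stability there is no known substitute for that lemma, which is presumably why the paper leaves the statement as a conjecture.

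What your write-up does accomplish is a clean reduction: the easy inclusion via $\eta$ is correct, the reformulation of the conjecture as ``the transition monoid generated by $\sigma^\dag(\Sigma)$ inside $\overline{S}^S$ is orbit-finite'' (equivalently, a finite-support reachability theorem) is a fair restatement of what must be proved, and the observation that only the reachable part of $\overline{S}$ matters is a genuine weakening of the hypothesis compared to demanding that all of $\overline{S}$ be definable. But a reduction of an open conjecture to another open statement is not a proof, and you should not present the second and third paragraphs as anything more than a plan. One smaller point to fix if you develop this further: you should check that acceptance $h(w)(s_0) \in F$ is even invariant under your replacement of $\overline{S}$ by $Q$ when $F$ is read as a subset of $\overline{S}$ rather than of $S$; the set $\hat{F} = \{U \in Q \colon F \in U\}$ you write down presupposes $F \subseteq \mathcal{P}(S)$, which is a different convention from the one in Definition~\ref{d:ultra:machine}.
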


\subsubsection{Weighted register machines}
In \cite{BKM21} M.~Bojanczyk, B.~Klin and M.~Moerman introduced and studied weighted definable automata in $\classifying{A}$ for an $\omega$-categorical structure $\struct{A}$. Here is their definition. 

\begin{definition}[Weighted automaton]\label{d:weighted:automata}
    Let us fix a field $\mathcal{K}$. A weighted definable automaton consists of definable sets $S$ and $\Sigma$, called
the states and the alphabet, and symmetric functions:
\begin{itemize}
    \item $\mor{I}{S}{\mathcal{K}}$ for initial states
    \item $\mor{F}{S}{\mathcal{K}}$ for final states
    \item $\mor{\sigma}{\Sigma \times S \times S}{\mathcal{K}}$
\end{itemize}
subject to the following requirement: there are finitely many states with nonzero initial weight, and also for every state $s \in S$ and input letter $a \in \Sigma$, there are finitely many states $s' \in S$ such that the transition $(a, s, s')$ has nonzero weight.
\end{definition}

Obviously, a function $\mor{\sigma}{\Sigma \times S \times S}{\mathcal{K}}$ should be rewritten as $\mor{\sigma}{\Sigma \times S}{\mathcal{K}^S}$ to exhibit more similarities with other types of automata. The problem with this definition is that $\mathcal{K}^S$ is not the free vector space $F(S)$ on $S$, and if try to define Kleisli composition, or equivalently, extend freely  $\sigma$ to the liner map in the second variable, we will obtain: $\mor{F(\sigma)}{\Sigma \times F(S)}{\mathcal{K}^S}$. Generally, it seems that there is no natural way to induce the monoid structure, because functions $F(S) \rightarrow \mathcal{K}^S$ do not compose. One way of dealing with this obstacle is to impose an extra condition on $\sigma$ as in Definition~\ref{d:weighted:automata}. This condition is a convoluted way of saying that function $\mor{\sigma}{\Sigma \times S}{\mathcal{K}^S}$ factors as $\Sigma \times S \rightarrow F(S) \subseteq \mathcal{K}^S$. Because, $F(S)^{S} \approx \mathit{Lin}(F(S), F(S))$ has a natural monoidal structure in the category of vector spaces, such $\sigma$ extends to the linear homomorphism $\mor{\overline{\sigma}}{F(\Sigma^*)}{F(S)^{S}}$, where $F(\Sigma^*) = \bigoplus_{k = 0} F(\Sigma)^k$.
Moreover, the condition on initial states $I$ is a sophisticated way of saying that $I$ is tantamount to a vector $s_0 \in F(S)$ and one may define the language of the automaton to be the restriction of $\mor{F \circ \overline{\sigma}(-, s_0)}{F(\Sigma^*)}{\mathcal{K}}$ to the basis $\Sigma^*$.

However, when structure $\struct{A}$ is $\omega$-stable, \emph{there is a way} to define such a composition. By Theorem~\ref{t:free:space}, vector space $\mathcal{K}^S$ is isomorphic to the free vector space $F(\overline{S})$ on the Stone–Čech compactification $\overline{S}$ of $S$. Therefore, the transition relation $\mor{\sigma}{\Sigma \times S}{\mathcal{K}^S}$ can be rewritten as $\mor{\sigma}{\Sigma \times S}{F(\overline{S})}$ and extended to $\Sigma \times F(\overline{S}) \rightarrow F(\overline{S})$.

Moreover, in Section~\ref{sec:definable:spaces} we prove that for definable set $S$, the space $F(S)^S$ has a definable basis $M \subseteq \overline{S \times S}$, that is: $F(S)^S \approx F(M)$. Therefore, the linear monoid $F(S)^S$ has a definable basis. The concept of a language recognized by a linear monoid is defined in the usual way.
\begin{definition}[Language recognied by a definable linear monoid]\label{d:monoid:language}
    Let $\mathcal{M} = \tuple{M, \bullet, \epsilon}$ be a finitely supported linear monoid with a definable basis. We say that $\mathcal{M}$ recognizes language $\mor{L}{\Sigma^*}{\mathcal{K}}$ if there exists a linear functional $\mor{f}{M}{\mathcal{K}}$ and a homomorphism of monoids $\mor{h}{F(\Sigma^*)}{\mathcal{M}}$ such that $L = f \circ h$.
\end{definition}
Therefore, for $\omega$-categorical and $\omega$-stable structures, by Theorem~\ref{t:monoidal:closed}, the languages recognized by definable monoids are the same as the languages recognized by definable weighted automata.

\begin{theorem}[On languages recognized by linear monoids on definable bases]\label{t:monoid}
    Let $\struct{A}$ be an $\omega$-categorical and $\omega$-stable structure. The languages in $\classifying{A}$ recognised by definable weighted-automata are exactly the same as the languages recognised by linear monoids on definable bases.
\end{theorem}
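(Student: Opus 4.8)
The plan is to prove the two inclusions between the language classes separately, using Theorem~\ref{t:free:space} and the definable-basis isomorphism $F(S)^S \approx F(M)$ of Section~\ref{sec:definable:spaces} as the bridge. For the inclusion of weighted-automaton languages into linear-monoid languages, I would start from a definable weighted automaton with initial function $I$, final function $F$ and transition $\mor{\sigma}{\Sigma \times S \times S}{\mathcal{K}}$ as in Definition~\ref{d:weighted:automata}. As already observed, the finiteness condition on $\sigma$ is precisely what lets one factor $\mor{\sigma}{\Sigma \times S}{\mathcal{K}^S}$ through the free vector space $F(S) \subseteq \mathcal{K}^S$ and then extend it freely in the second variable to a linear monoid homomorphism $\mor{\overline{\sigma}}{F(\Sigma^*)}{F(S)^S}$. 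The target $F(S)^S \approx \mathit{Lin}(F(S), F(S))$ is a linear monoid under composition, and by Section~\ref{sec:definable:spaces} it carries a definable basis $M \subseteq \overline{S \times S}$ with $F(S)^S \approx F(M)$; hence it is a linear monoid on a definable basis in the sense of Definition~\ref{d:monoid:language}. I would then take $h = \overline{\sigma}$ and define the functional by evaluating at the initial vector $s_0 \in F(S)$ associated with $I$ and postcomposing with $F$, i.e.~$\phi \mapsto F(\phi(s_0))$; this is linear in $\phi \in F(S)^S$, so it determines a functional $\mor{f}{M}{\mathcal{K}}$ with $L = f \circ h$.

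For the reverse inclusion I would use the left regular representation of the monoid. Given a linear monoid $\mathcal{M} = \tuple{M, \bullet, \epsilon}$ on a definable basis, a functional $\mor{f}{M}{\mathcal{K}}$ and a homomorphism $\mor{h}{F(\Sigma^*)}{\mathcal{M}}$ with $L = f \circ h$, each letter $a \in \Sigma$ acts on $F(M)$ by left multiplication $v \mapsto h(a) \bullet v$, a definable linear endomorphism. I would take the weighted automaton whose state set is the basis $M$, whose transition weight $\sigma(a, m, m')$ is the coefficient of $m'$ in $h(a) \bullet m$, whose initial vector is $\epsilon$, and whose final weighting is $f$. Since left multiplication satisfies $L_x \circ L_y = L_{x \bullet y}$, the freely extended map sends a word $w$ to left multiplication by $h(w)$, so the automaton computes $w \mapsto f\bigl(h(w) \bullet \epsilon\bigr) = f(h(w)) = L(w)$. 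Because $F(M)$ consists of finitely supported combinations, $h(a) \bullet m$ and $\epsilon$ are finite combinations of basis elements, so the two finiteness constraints of Definition~\ref{d:weighted:automata} are automatic, and definability of $\sigma$ follows from definability (equivariance) of the bilinear multiplication $\bullet$ and of the alphabet.

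The routine part is the bookkeeping that translates the initial/final data back and forth and checks that $\overline{\sigma}$ and the regular representation are genuine monoid homomorphisms via associativity and unitality of the respective products. The substantive inputs, which I would simply cite, are Theorem~\ref{t:free:space} and the isomorphism $F(S)^S \approx F(M)$ of Section~\ref{sec:definable:spaces}, since these are exactly what make the linear endomorphism monoid \emph{definable} and thus place it in the scope of Definition~\ref{d:monoid:language}. The step I expect to require the most care is verifying that both constructions stay inside the definable world: that $\overline{\sigma}$ in the forward direction and the representation $\sigma$ in the reverse direction are genuinely definable (symmetric, finitely supported) maps, and that the register-machine finiteness conditions of Definition~\ref{d:weighted:automata} correspond \emph{precisely} to finite support in the free vector space. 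This correspondence is essentially the content of Theorem~\ref{t:monoidal:closed}, so in large part the proof is an unwinding of that monoidal-closure statement into the explicit automaton/monoid vocabulary.
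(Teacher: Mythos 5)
Your proposal is correct and follows essentially the same route as the paper: the forward inclusion is exactly the paper's construction (factor $\sigma$ through $F(S)\subseteq\mathcal{K}^S$, extend to $\overline{\sigma}\colon F(\Sigma^*)\to F(S)^S\approx F(M)$ with $M$ definable by Theorem~\ref{t:monoidal:closed}, and take the functional $\phi\mapsto F(\phi(s_0))$), which is all the paper itself offers by way of proof. Your left-regular-representation argument for the converse, including the observation that finite support in $F(M)$ yields the two finiteness constraints of Definition~\ref{d:weighted:automata} and that definability of $\sigma$ follows from equivariance of $\bullet$ and of the coefficient functionals, is the standard completion of the argument that the paper leaves implicit, and it is sound.
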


\subsubsection{Probabilistic register machines}
In the classical setting of finite automata, probabilistic automata are a special kind of weighted automata --- there is just an additional requirement that the weights of the transitions of any state must be non-negative and sum up to $1$. This requirement does not translate directly to weighted automata in $\classifying{A}$ over definable sets for a single reason. If a set $X$ is not finite then there are some non-discrete probability measures on it. For example, if $N$ is the set of atoms in the basic Fraenkel-Mostowski model $\classifying{\struct{N}}$, then there is a measure $\mu$ that assigns to every finite set probability $0$ and to every cofinite set, probability $1$. Therefore, if the transition function assigns such a probability to a given state, then it violates the extra requirement in the original definition of a weighted automaton (Definition~\ref{d:weighted:automata}). Consequently, we have to either restrict to the discrete measures on a set or drop the extra requirement from the definition as we did in the preceding subsection. One can also think of the following definition as of a suitable generalisation of ultra-automaton from Definition~\ref{d:ultra:machine}. 

\begin{definition}[Probabilistic automaton]\label{d:probabilistic:machine}
    A probabilistic definable automaton consists of definable sets $S$ and $\Sigma$, called
the states and the alphabet, and the following data:
\begin{itemize}
    \item $p_0 \in \catw{m}(S)$ the initial probability on states $S$
    \item $p_F \in \catw{m}(S)$ the final probability on states $S$
    \item $\mor{\sigma}{\Sigma \times S}{\catw{m}(S)}$ the probabilistic transition relation
\end{itemize}
\end{definition}
Such an automaton assigns to every word $w \in\Sigma^*$ the probability that when starting in states $s_0$ the automaton reach states $s_F$ upon reading word $w$. By Theorem~\ref{t:definable:measure} $\catw{m}(S)$ is a convex linear combination of ultrafilters on $S$, therefore $\catw{m}(S)$ is a convex subset of the free vector space $F(\overline{S})$ and by Theorem~\ref{t:free:space} it can be treated as a convex subset of $\mathcal{R}^S$. In any case, the extension of $\sigma$ to $\mor{\overline{\sigma}}{\Sigma \times \catw{m}(S)}{\catw{m}(S)}$ is given by the formula (see Section~\ref{sec:probability}): $$\overline{\sigma}(a)(\mu)(S_0) = \sum_{r \in[0, 1]} r \cdot \mu(\{s \in S \colon \sigma(s)(a)(S_0) = r\})$$
This formula is linear in variable $\mu$, which runs over the basis of $F(\overline{S})$. Therefore, it induces a linear map: $\Sigma \times F(\overline{S}) \rightarrow F(\overline{S})$. This together with Theorem~\ref{t:stone:cech:compactification} yields the following characterisation of probabilistic automata.

\begin{theorem}[On probabilistic automata]\label{t:probabilistic:automaton}
    Let $\struct{A}$ be an $\omega$-categorical and $\omega$-stable structure. A probabilistic automaton in $\classifying{A}$ on a definable set $S$ is a special case of a definable weighted-automaton on the Stone–Čech compactification $\overline{S}$ of $S$.
\end{theorem}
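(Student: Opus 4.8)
The strategy is to manufacture, out of the data of a probabilistic automaton (Definition~\ref{d:probabilistic:machine}), a genuine weighted automaton in the sense of Definition~\ref{d:weighted:automata} whose set of states is the Stone–Čech compactification $\overline{S}$ rather than $S$ itself, and then to observe that the extra probabilistic constraints (nonnegativity, total mass one) merely single it out as a special case among all weighted automata over the field $\mathcal{K} = \mathcal{R}$. The linchpin is Theorem~\ref{t:definable:measure}, which tells us that every element of $\catw{m}(S)$ is a \emph{finite} convex combination $\sum_i r_i U_i$ of ultrafilters $U_i \in \overline{S}$; combined with Theorem~\ref{t:free:space} this identifies $\catw{m}(S)$ with the probability simplex sitting inside the free vector space $F(\overline{S}) \cong \mathcal{R}^S$, i.e.\ with the finitely supported, nonnegative, mass-one functions $\mor{}{\overline{S}}{\mathcal{R}}$.

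First I would translate the static data across this identification. The initial probability $p_0 \in \catw{m}(S)$ becomes a finitely supported function $\mor{I}{\overline{S}}{\mathcal{R}}$ with nonnegative values summing to $1$, and finiteness of its support is exactly the ``finitely many states of nonzero initial weight'' clause of Definition~\ref{d:weighted:automata}; the final probability $p_F$ yields the output functional $\mor{F}{\overline{S}}{\mathcal{R}}$ in the same manner. The only part requiring genuine work is the transition, since the given $\mor{\sigma}{\Sigma \times S}{\catw{m}(S)}$ is defined only on the principal ultrafilters $S \subseteq \overline{S}$. Here I would invoke the extension formula displayed just before the theorem,
$$\overline{\sigma}(a)(\mu)(S_0) = \sum_{r \in [0,1]} r \cdot \mu(\{s \in S \colon \sigma(s)(a)(S_0) = r\}),$$
evaluated at basis points $\mu = U \in \overline{S}$, to obtain $\overline{\sigma}(a)(U) \in \catw{m}(S)$. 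Because this value is again a measure, Theorem~\ref{t:definable:measure} makes it a finite combination of ultrafilters, so the resulting $\mor{\overline{\sigma}}{\Sigma \times \overline{S} \times \overline{S}}{\mathcal{R}}$ has, for each $(a, U)$, only finitely many $U'$ of nonzero weight --- precisely the second finiteness clause of Definition~\ref{d:weighted:automata}.

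Definability of $\overline{\sigma}$ then follows from Theorem~\ref{t:stone:cech:compactification}: the compactification $\overline{S}$ is itself definable, its measure-monad structure is equivariant, and the formula above is assembled from the definable datum $\sigma$ together with the equivariant monad multiplication, so its graph is symmetric. At this point $\tuple{I, F, \overline{\sigma}}$ is a legitimate definable weighted automaton on $\overline{S}$, and the probabilistic hypotheses reappear only as the additional requirement that $I$, $F$, and each $\overline{\sigma}(a)(U)$ lie in the probability simplex rather than in all of $F(\overline{S})$ --- which is exactly the sense in which a probabilistic automaton is a \emph{special case} of a weighted one.

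It remains to check that the two automata compute the same function $\mor{}{\Sigma^*}{\mathcal{R}}$, and this is where I expect the real obstacle to lie. The probabilistic automaton's value on a word $w$ is obtained by pushing $p_0$ forward along the Kleisli composite of the measure-monad transitions and pairing the result with $p_F$, whereas the weighted automaton's value is $F \circ h(w)(I)$ for the monoid homomorphism $\mor{h}{F(\Sigma^*)}{F(\overline{S})^{\overline{S}}}$. As noted before the theorem, the displayed formula is linear in the variable $\mu$ ranging over the basis $\overline{S}$ of $F(\overline{S})$, hence induces a genuine linear map $\Sigma \times F(\overline{S}) \to F(\overline{S})$; the content I must verify is that the Kleisli multiplication of the $\catw{m}$-monad is intertwined, under the embedding $\catw{m}(S) \hookrightarrow F(\overline{S})$ of Theorem~\ref{t:free:space}, with ordinary composition of these linear endomorphisms of $F(\overline{S})$ --- that is, that the measure monad is a submonad of the free-vector-space monad with Kleisli composition matching linear composition. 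Granting this intertwining, sequential application of $\overline{\sigma}$ coincides with matrix multiplication in $F(\overline{S})^{\overline{S}} \cong \mathit{Lin}(F(\overline{S}), F(\overline{S}))$, the two acceptance values agree letter by letter, and the theorem follows. The hard part is thus the monad-theoretic compatibility; once it is in place the result is immediate.
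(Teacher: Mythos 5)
Your proposal is correct and follows essentially the same route as the paper: identify $\catw{m}(S)$ with the probability simplex inside $F(\overline{S})\cong\mathcal{R}^S$ via Theorem~\ref{t:definable:measure} and Theorem~\ref{t:free:space}, extend $\sigma$ to ultrafilters by the displayed Kleisli formula, and obtain the finiteness and definability clauses of Definition~\ref{d:weighted:automata} from Theorem~\ref{t:definable:measure} and Theorem~\ref{t:stone:cech:compactification}. The ``monad-theoretic compatibility'' you flag as the remaining obstacle is precisely what the paper dispatches with its observation that the extension formula is linear in $\mu$ as $\mu$ ranges over the basis $\overline{S}$ (all sums being finite by Theorem~\ref{t:definable:measure}), so the Kleisli composite of Markov kernels and the composite of the induced linear endomorphisms of $F(\overline{S})$ agree on the basis and therefore coincide.
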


\subsection{Organisation of the paper}
The rest of the paper contains the proofs and some additional details of the abovementioned theorems. The next section investigates the properties of ultrafilters on definable sets. The central theorem of this section is Theorem~\ref{t:stone:cech:compactification}. In Section~\ref{sec:definable:spaces} we study closure properties of vector spaces over definable basis. The main result is Theorem~\ref{t:monoidal:closed}, which is based on two technical lemmas: Lemma~\ref{l:independent} and Lemma~\ref{l:spans}. Section \ref{sec:probability} is devoted to studying probability measures on definable sets also known as Keisler measures to model theorists. The main result of the section is Theorem~\ref{t:definable:measure}. We conclude the paper in Section~\ref{sec:conclusions}. Appendix~\ref{sec:app:intro:theorem} gives the exact statement of a theorem mentioned in the introduction and supplies it with a proof. Appendix~\ref{sec:app:intro:theorem} contains a counterexample to the claim that in every space dual to the space on a definable basis has a definable basis -- Theorem~\ref{t:polar:counterexample} shows that it may not have any basis (definable or not) at all. Appendix~\ref{sec:app:proofs} contains some additional proofs of supplementary theorems, which are not crucial for the presented material.

\section{Ultrafilter monad}
\label{sec:ultrafilter:monad}
The aim of this section is to investigate ultrafilter monad on the category $\classifying{A}$ of symmetric sets over an $\omega$-categorical $\omega$-stable structure $\struct{A}$. First, let us observe that the ultrafilter monad exists on any Boolean topos, provided it satisfy some mild conditions about existence of free algebras. Moreover, such monad is always a \emph{strong} monad. Explicitly, every topos can be regarded as a category enriched over itself, i.e.~just put $\hom(A, B) = B^A$, where $B^A$ is the internal function space \cite{kelly1982basic}. In particular, when working in $\classifying{A}$ it is natural to think that $\hom(A, B)$ carries the group action. The same is true for other algebraic structures studied here, especially: vector spaces (modules) and Boolean algebras.  In fact, we have an enriched adjunction between the free vector space functor $\mor{F}{\classifying{A}}{\catw{Vect}_{\classifying{A}}}$ and the forgetful functor $\mor{S}{\catw{Vect}_{\classifying{A}}}{\classifying{A}}$. Similarly, we have an enriched adjunction between the free Boolean algebra functor $\classifying{A} \rightarrow \catw{Bool}_{\classifying{A}}$ and the underlying functor $\mor{|-|}{\catw{Bool}_{\classifying{A}}}{\classifying{A}}$ (these follow from the transfer principle from \cite{licsMRP} and the fact that both the theory of vector spaces and the theory of Boolean algebras are Lawvere theories.). Forgetful functors, being right adjoint, preserve all limits that exist, and free functors preserve all colimits that exist. Specifically, the enriched category of Boolean algebras have cotensors with all symmetric sets (these are just weighted limits), i.e.~for every symmetric set $A$ and an internal Boolean algebra $B$ the cotensor $B \pitchfork A$ exists and is preserved by the underlying functor $|B \pitchfork A| = |B| \pitchfork A = |B|^A$, where the last equality holds because cotensors coincide with exponents in the base of enrichment. Now, if we now consider the $2$-element Boolean algebra $2$, we have a series of enriched natural isomorphisms:
\begin{center}
\begin{tabular}{c}
$\hom_{\catw{Bool}_{\classifying{A}}^{op}}(2 \pitchfork X, B)$ \\
\hline\hline
$\hom_{\catw{Bool}_{\classifying{A}}}(B, 2 \pitchfork X)$\\
\hline\hline
$\hom_{\catw{Bool}_{\classifying{A}}}(B, 2)^X$\\
\hline\hline
$\hom_{\classifying{A}}(X, \hom_{\catw{Bool}_{\classifying{A}}}(B, 2))$\\
\end{tabular}
\end{center}
which means that $\mor{2 \pitchfork {(-)}}{\classifying{A}}{\catw{Bool}_{\classifying{A}}}$ is an internal left adjoint to enriched hom-functor $\mor{\hom_{\catw{Bool}_{\classifying{A}}}(-, 2)}{\catw{Bool}_{\classifying{A}}}{\classifying{A}}$. By composing these two functors we obtain a strong (internal, enriched) monad on $\classifying{A}$, i.e.~the ultrafilter monad: $\mor{\hom(2 \pitchfork (-), 2)}{\classifying{A}}{\classifying{A}}$, where we still write $2 \pitchfork (-)$ instead of $2^{(-)}$ to indicate that this operation is \emph{not} an exponent in Boolean algebras.

\begin{theorem}[Ultrafilter monad]\label{t:ultrafilter:strong}
    The ultrafilter monad on $\classifying{A}$ exists and is strong (equivalently, enriched over $\classifying{A}$).
\end{theorem}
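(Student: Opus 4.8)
The plan is to read the monad off the adjunction constructed in the discussion above and to obtain strength for free from the fact that the entire construction lives in the self-enriched setting. Recall that $\classifying{A}$, being a category of symmetric (equivalently, continuous) $\aut{\struct{A}}$-sets, is a Boolean topos; in particular it is cartesian closed and enriched over itself via the internal function space $\hom(X,Y)=Y^X$. For an endofunctor on such a self-enriched cartesian closed category, carrying a strength is equivalent to being $\classifying{A}$-enriched, and the same correspondence promotes an enriched monad to a strong monad (the standard correspondence between strengths and enrichments over a cartesian closed base, cf.~\cite{kelly1982basic}). Hence it suffices to produce the monad together with an enrichment, and the parenthetical claim of the theorem is then exactly this equivalence.

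First I would pin down the two functors and the adjunction between them. Because the theory of Boolean algebras is a Lawvere theory, the transfer principle of \cite{licsMRP} supplies an enriched free--forgetful adjunction between $\classifying{A}$ and $\catw{Bool}_{\classifying{A}}$ and, as recalled above, $\catw{Bool}_{\classifying{A}}$ admits the cotensors $B \pitchfork X$ with $|B \pitchfork X| = |B|^X$. Thus $2 \pitchfork (-)\colon \classifying{A} \rightarrow \catw{Bool}_{\classifying{A}}$ is defined and $|2 \pitchfork X| = 2^X = \mathcal{P}(X)$ is the power-set algebra. The displayed chain of isomorphisms then exhibits $2 \pitchfork (-)$, viewed as a functor into $\catw{Bool}_{\classifying{A}}^{op}$, as an enriched left adjoint to $\hom_{\catw{Bool}_{\classifying{A}}}(-,2)$: the first step is the self-duality of the opposite category, the second is the defining universal property of the cotensor, and the third is the self-enrichment $Y^X = \hom_{\classifying{A}}(X,Y)$ of $\classifying{A}$. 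I would simply check that each of these three steps is an isomorphism of internal hom-objects and is natural, both of which are immediate from the constructions.

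Composing the left adjoint with the right adjoint gives an endofunctor $T = \hom_{\catw{Bool}_{\classifying{A}}}(2 \pitchfork (-),\,2)\colon \classifying{A} \rightarrow \classifying{A}$ whose underlying set on an object $X$ is the set of Boolean homomorphisms $\mathcal{P}(X) \rightarrow 2$, i.e.~the set of ultrafilters on $X$; so $T$ is the ultrafilter monad. Its unit and multiplication are induced by the unit and counit of the adjunction, and the monad laws reduce to the triangle identities, so nothing needs to be verified there by hand.

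The only substantive point is the enrichment of the monad, and this is where I would concentrate the argument: since both constituent functors are $\classifying{A}$-enriched and the adjunction between them is an enriched adjunction, its unit and counit are enriched natural transformations, whence the induced unit and multiplication of $T$ are enriched and $T$ is an enriched --- hence strong --- monad. The hard part is confirming that the adjunction $2 \pitchfork (-) \dashv \hom_{\catw{Bool}_{\classifying{A}}}(-,2)$ is genuinely enriched, that is, that the hom-set bijections of the displayed chain upgrade to isomorphisms of the internal hom-objects of $\classifying{A}$ rather than mere bijections of global elements. Everything else is formal once this is in place, and it is precisely what the internal-hom computation above is arranged to deliver, the key input being that both the free Boolean algebra functor and the cotensor $2 \pitchfork (-)$ are preserved coherently by the self-enrichment.
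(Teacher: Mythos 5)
Your proposal is correct and follows essentially the same route as the paper: it identifies $2 \pitchfork (-)$ as an enriched left adjoint to $\hom_{\catw{Bool}_{\classifying{A}}}(-,2)$ via the cotensor and the self-enrichment of the topos, composes the adjoints to obtain the ultrafilter monad, and derives strength from the enrichment. The paper's own argument is exactly the displayed chain of enriched natural isomorphisms preceding the theorem statement, so there is nothing substantive to add.
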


As usual, we shall call algebras of the ultrafilter monad \emph{compact Hausdorff spaces}. We will also denote the monad by $\mor{(-)}{\classifying{A}}{\classifying{A}}$ to highlight the fact that the free algebra $\overline{X}$ on a given set $X$ is the ``free compactification'' of $X$, i.e.~the internal Stone–Čech compactification of $X$.

\begin{lemma}[On preservation of finite coproducts]\label{l:ultrafilter:finite:coproducts}
   Ultrafilter monad preserves binary coproducts. 
\end{lemma}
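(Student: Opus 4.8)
The plan is to reduce the statement to a purely Boolean-algebraic fact about homomorphisms out of a product, using the presentation of the monad as $\overline{(-)} = \hom_{\catw{Bool}_{\classifying{A}}}(2 \pitchfork (-), 2)$ established above. First I would record that the functor $\mor{2 \pitchfork (-)}{\classifying{A}}{\catw{Bool}_{\classifying{A}}^{op}}$ is a left adjoint (to $\hom_{\catw{Bool}_{\classifying{A}}}(-, 2)$), hence preserves colimits; in particular it carries the coproduct $X \sqcup Y$ of $\classifying{A}$ to the coproduct in $\catw{Bool}_{\classifying{A}}^{op}$, i.e.\ to the product $(2 \pitchfork X) \times (2 \pitchfork Y)$ in $\catw{Bool}_{\classifying{A}}$. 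Concretely this is the familiar isomorphism $2 \pitchfork (X \sqcup Y) \approx (2 \pitchfork X) \times (2 \pitchfork Y)$ of power-set algebras. Applying $\hom_{\catw{Bool}_{\classifying{A}}}(-, 2)$ then gives $\overline{X \sqcup Y} \approx \hom_{\catw{Bool}_{\classifying{A}}}((2 \pitchfork X) \times (2 \pitchfork Y), 2)$, so everything hinges on identifying the latter with $\overline{X} \sqcup \overline{Y}$.

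The heart of the argument is the Boolean-algebra lemma: for internal Boolean algebras $B, C$ in $\classifying{A}$ one has a natural isomorphism $\hom_{\catw{Bool}_{\classifying{A}}}(B \times C, 2) \approx \hom_{\catw{Bool}_{\classifying{A}}}(B, 2) \sqcup \hom_{\catw{Bool}_{\classifying{A}}}(C, 2)$. To prove it, consider the complementary elements $e = (1, 0)$ and $e' = (0, 1)$ of $B \times C$, which satisfy $e \vee e' = 1$ and $e \wedge e' = 0$. Any homomorphism $\mor{h}{B \times C}{2}$ must send $e, e'$ to complementary elements of $2$, so exactly one of $h(e), h(e')$ equals $1$; this splits the hom-set into two disjoint pieces. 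On the piece where $h(e) = 1$ one checks, from $(0,c) \le e'$ and $h(e') = 0$, that $h(0, c) = 0$ for all $c$, and hence $h(b, c) = h(b,0) \vee h(0,c) = h(b, 0)$, so $h$ is determined by its restriction $b \mapsto h(b, 0)$, which is readily verified to be a homomorphism $B \to 2$ (complements being preserved because $(b,0)$ and $(\neg b, 0)$ have join $e$ and meet $0$); conversely every homomorphism $B \to 2$ arises this way. Thus this piece is isomorphic to $\hom_{\catw{Bool}_{\classifying{A}}}(B, 2)$, and symmetrically the other piece is isomorphic to $\hom_{\catw{Bool}_{\classifying{A}}}(C, 2)$. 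Combining with the previous paragraph yields $\overline{X \sqcup Y} \approx \overline{X} \sqcup \overline{Y}$.

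It remains to check that this isomorphism is the canonical comparison map, namely the one induced on coproducts by the functorial images $\overline{X} \to \overline{X \sqcup Y}$ and $\overline{Y} \to \overline{X \sqcup Y}$ of the two coproduct injections; this is a routine unwinding, since tracing an ultrafilter $U$ on $X \sqcup Y$ through the construction simply returns its restriction to whichever of $X, Y$ lies in $U$. Finally, because every step is canonical --- no atoms are named, and $2$ carries no nontrivial automorphisms --- all the maps are automatically $\aut{\struct{A}}$-equivariant, hence morphisms of $\classifying{A}$, and they restrict correctly to the symmetric powerset. The one step that is genuinely not formal is the Boolean-algebra lemma: the splitting via complementary idempotents is exactly where the product-to-coproduct passage takes place (and this is \emph{not} forced by adjointness, since $\hom_{\catw{Bool}_{\classifying{A}}}(-, 2)$ only automatically sends coproducts to products), so that is where I would concentrate the care, in particular verifying that the idempotent argument goes through verbatim for internal Boolean algebras in $\classifying{A}$.
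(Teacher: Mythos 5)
Your proof is correct, but it takes a more structured route than the paper's. The paper argues directly with ultrafilters as families of subsets: given $p \in \overline{X}$ it defines $p^* = \{S \subseteq X \sqcup Y \colon S \cap X \in p\}$, checks the ultrafilter axioms by hand, and inversely restricts an ultrafilter $q$ on $X \sqcup Y$ to whichever of $X$, $Y$ belongs to $q$. You instead exploit the presentation $\overline{(-)} = \hom_{\catw{Bool}_{\classifying{A}}}(2 \pitchfork (-), 2)$, reduce to the isomorphism $\mathcal{P}(X \sqcup Y) \approx \mathcal{P}(X) \times \mathcal{P}(Y)$ (which is free, since $2 \pitchfork (-)$ is a left adjoint), and then prove the genuinely contentful step as a general Boolean-algebra lemma, $\hom(B \times C, 2) \approx \hom(B,2) \sqcup \hom(C,2)$, via the complementary idempotents $e=(1,0)$, $e'=(0,1)$. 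The combinatorial core is the same --- your case split on $h(e)=1$ versus $h(e')=1$ is exactly the paper's dichotomy ``$X \in q$ or $Y \in q$, but not both'', and your restriction $b \mapsto h(b,0)$ is the paper's restriction of $q$ to $X$ --- but your version isolates where the work happens (you correctly flag that the product-to-coproduct passage is not formal) and yields a statement valid for arbitrary internal Boolean algebras, not just powerset algebras. The price is that you must separately verify the comparison map is the canonical one and that everything is equivariant; the paper's hands-on construction makes both of these visibly immediate. Either proof is acceptable.
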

\begin{proof}
    The proof is pretty standard. Let $X$ and $Y$ be two symmetric sets and consider an symmetric ultrafilter $p$ on $X$. We shall define an ultrafilter $p^*$ on $X \sqcup Y$ as follows. For any $S \subset X \sqcup Y$ put $S \in p^*$ if and only if $S \cap X \in p$. Observe that $X \sqcup Y \setminus S \in p^*$ if and only if $(X \sqcup Y \setminus S) \cap X = X \setminus (S \cap X) \in p$, thus $(X \sqcup Y \setminus S \not\in p^*$. Similarly, if $S_1, S_2 \in p^*$ then $S_1 \cap X, S_2 \cap X \in p$, therefore  $(S_1 \cap S_2) \cap X \in p$, so  $S_1 \cap S_2 \in p^*$. Moreover, $p^*$ is obviously upward-closed. In the other direction, given an ultrafilter $q$ on $X \sqcup Y$ either $X \in q$ or $Y \in q$, but not both and we obtain an ultrafilter on $X$ (resp. $Y$) via restriction. It is also obvious that the operations are inverse of each other.
\end{proof}

Till the end of the section we shall assume that $\struct{A}$ is $\omega$-categorical, $\omega$-stable and that $\struct{A}$ eliminates imaginaries (extending a structure with elimination of imaginaries, as mentioned in the introduction, does not change the category $\classifying{A}$).

\begin{theorem}[Internal Stone–Čech compactification]\label{t:stone:cech:compactification}
Let $X$ be $A_0$-definable. The free Stone–Čech compactification $\overline{X}$ of $X$, i.e.~the set of ultrafilters on $X$, is $A_0$-definable.
\end{theorem}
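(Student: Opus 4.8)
The plan is to strip the parameters, re-express $\overline{X}$ as a space of complete types, and then use $\omega$-stability to control those types and $\omega$-categoricity together with finite Morley rank to bound the number of orbits.

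First I would reduce to the equivariant case. By the transfer principle of \cite{licsMRP}, adjoining the finite set $A_0$ as constants and re-eliminating imaginaries produces an $\omega$-categorical, $\omega$-stable structure $\struct{A} \sqcup A_0$ whose category of equivariant sets is exactly the $A_0$-equivariant part of $\classifying{A}$; so without loss of generality $A_0 = \emptyset$ and $X$ is equivariant. Since the ultrafilter monad of Theorem~\ref{t:ultrafilter:strong} is equivariant, $\overline{X}$ is again equivariant, and the entire content of the statement is that its canonical action has finitely many orbits. Writing the definable set $X$ as a finite disjoint union of $\aut{\struct{A}}$-orbits and applying Lemma~\ref{l:ultrafilter:finite:coproducts} repeatedly, $\overline{X}$ is a finite coproduct of the $\overline{O_i}$, so it suffices to treat a single orbit, i.e.\ a transitive definable $X$.

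Next I would translate into model theory. The subsets of $X$ existing in $\classifying{A}$ are precisely the finitely supported ones, which are exactly the subsets of $X$ definable with finitely many parameters; thus $\mathcal{P}(X)$ is the Boolean algebra of definable subsets, and a symmetric ultrafilter on $X$ is a finitely supported maximal filter on this algebra, that is, a complete type $p$ concentrated on $X$ definable over a finite parameter set. Here $\omega$-stability enters decisively through definability of types: every complete type over the (countable) model is definable, and combined with $\omega$-categoricity it is definable over a \emph{finite} set. Hence every complete type concentrated on $X$ is finitely supported, so $\overline{X}$ is in equivariant bijection with the full space $S_X(A)$ of such types, carrying the action $\pi \bullet p = \{\phi(\overline{x}, \pi \bullet \overline{a}) \colon \phi(\overline{x}, \overline{a}) \in p\}$.

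The core step is finiteness of the orbit count, which I would prove by induction on the integer Morley rank $\alpha = \mathit{MR}(X)$. When $\alpha = 0$ the set $X$ is finite, every ultrafilter is principal, and $\overline{X} \approx X$. For the inductive step I stratify $S_X(A)$ by the rank of the type. A generic type (rank $\alpha$) is determined by its definition schema $(d_p\phi)_\phi$; by $\omega$-categoricity there are finitely many formulas $\phi(\overline{x}, \overline{y})$ modulo the theory, and, passing to the canonical base of $p$ --- realised as a finite tuple of atoms after elimination of imaginaries and of bounded length because the theory has finite rank --- each $d_p\phi$ ranges over finitely many formulas. Since finite tuples realise only finitely many types over $\emptyset$, the generic types fall into finitely many orbits. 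A type of rank $< \alpha$ contains a definable set of rank $< \alpha$, hence concentrates on the equivariant definable locus of non-generic elements, an $\emptyset$-definable set of strictly smaller Morley rank obtained by stratifying $X$ according to the finitely many types in $S_X(\emptyset)$; applying the inductive hypothesis to this locus together with Lemma~\ref{l:ultrafilter:finite:coproducts} shows these contribute only finitely many orbits. Combining the strata yields finitely many orbits on $\overline{X}$, so it is $A_0$-definable.

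The main obstacle is exactly this last step. Definability of types (the $\omega$-stable input) is what guarantees that every relevant ultrafilter is finitely supported at all; the delicate part is then uniformly bounding both the canonical bases and the definition schemas, where finite Morley rank and $\omega$-categoricity must be combined carefully, and arranging the non-generic stratum as a genuinely equivariant definable set of smaller rank so that the rank induction closes, with the coproduct lemma gluing the strata back together.
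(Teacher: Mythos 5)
Your reduction to the equivariant, single-orbit case and your identification of symmetric ultrafilters with finitely definable complete types are exactly the paper's opening moves (the paper works with $X=A^n$ and then carves out $X$ via Lemma~\ref{l:ultrafilter:finite:coproducts}, but this is the same idea). The divergence, and the problem, is in the finiteness-of-orbits step. The paper does not run an induction on Morley rank: it invokes Lemma~\ref{l:parameters:in:definable:types}, i.e.\ the Cherlin--Harrington--Lachlan theorem, which supplies for \emph{every} type (generic or not) a normalised defining formula with a uniformly bounded number of parameters ($\leq 2|x|$); finitely many orbits then follows at once from $\omega$-categoricity. Your substitute argument has a genuine gap at the non-generic stratum. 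You claim that a type of rank $<\alpha$ ``concentrates on the equivariant definable locus of non-generic elements, an $\emptyset$-definable set of strictly smaller Morley rank obtained by stratifying $X$ according to the finitely many types in $S_X(\emptyset)$.'' If $X$ is a single orbit, then $S_X(\emptyset)$ is a single type and this stratification is trivial: there is \emph{no} proper $\emptyset$-definable subset of $X$, so the proposed locus does not exist. Already for the pure set $N$ (rank $1$, one orbit) the non-generic types are exactly the realised types $\mathrm{tp}(a/N)$, each concentrating on a singleton $\{a\}$, which is definable only with the parameter $a$; there is no $\emptyset$-definable rank-$0$ set catching them. A non-generic type merely contains \emph{some} small-rank formula $\psi(x,\overline{b})$, and since $|\overline{b}|$ is a priori unbounded, $\omega$-categoricity alone does not put these formulas (hence these types) into finitely many orbits. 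Controlling that parameter length uniformly is precisely the content of the CHL/normalisation lemma, so your induction does not close without it.

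A secondary point: for the generic stratum you do not need canonical bases at all --- there are exactly $\deg(X)$ types of rank $\mathit{MR}(X)$ over the model concentrated on $X$, a finite set --- and your assertion that canonical bases have ``bounded length because the theory has finite rank'' is itself nontrivial; it is essentially the hard content that the paper outsources to \cite{cherlin1985no}. The fix is to replace your rank induction by an appeal to Lemma~\ref{l:parameters:in:definable:types} (or to prove a uniform parameter bound for \emph{all} types, not just generic ones), after which your argument coincides with the paper's.
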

\begin{proof}
Let us first assume $X = A^n$.
Let $\mor{\mu}{\mathcal{P}(X)}{2}$ be an $A_1$-supported ultrafilter on $X$. Consider any formula $\phi(x, \overline{y})$, where $\overline{y}$ are treated as parameters. Because $\mu$ is $A_1$-supported, the set: $D_\phi = \{\overline{q} \in A^{|\overline{y}|}\colon \mu(\phi(x, \overline{q})) = 1\}$ 
is $A_1$-supported. Therefore, by $\omega$-categoricity of $A$ set $D_\phi$ may be thought of as a formula $D_\phi(\overline{y}, \overline{a})$ with parameters $\overline{a}$. Therefore, the corresponding $\phi$-type is definable by $D_\phi(\overline{y}, \overline{a})$, so it is $A_1$-definable. Because, this is true for every formula $\phi(x, \overline{y})$, ultrafilter $\mor{\mu}{\mathcal{P}(X)}{2}$ corresponds to an $A_1$-definable type in $S_n(A)$. In the other direction, let us assume that a type $p \in S_n(A)$ is $A_1$-definable for some finite $A_1$. Then for every $\pi_{A_1}$ and every $\phi(x, \overline{q})$ we have that:
$\pi_{A_1}(\phi(x, \overline{q})) \in p \Leftrightarrow \phi(\pi_{A_1}(x), \overline{q})) \in p  \Leftrightarrow \phi(\pi_{A_1}(\overline{q}), \overline{a}) \Leftrightarrow  \phi(\overline{q}, \overline{a}) \Leftrightarrow \phi(x, \overline{q}) \in p$.
Thus, $p$ is an $A_1$-supported function. By Theorem~\ref{l:parameters:in:definable:types} space $S_n(A)$ has finitely many orbits. Therefore, $\overline{X}$ has finitely many orbits. 

Now, moving to the general case, observe that arbitrary $A_0$-definable set $X$ is an equivariant subset of $A^n$ for some finite $n$ in an expansion of structure $A$ by finitely many constants $A_0$. Because such an expansion preserves both $\omega$-categoricity and $\omega$-stability of a structure, without loss of generality we may assume that $X$ is an equivariant subset of $A^n$ in $\struct{A}$. Let us denote $X^c = A^n \setminus X$. By Lemma~\ref{l:ultrafilter:finite:coproducts} the ultrafilter monad preserves finite coproducts, therefore $\overline{A^n} = \overline{X \sqcup (A^n \setminus X)} = \overline{X} \sqcup \overline{A^n \setminus X}$. Because $\overline{A^n}$ has finitely many orbits, both $\overline{X}$ and $\overline{A^n \setminus X}$ must have finitely many orbits. 
\end{proof}


\begin{corollary}\label{c:definable:monad}
    The ultrafilter monad restricts to the monad on the full subcategory of $\classifying{A}$ of definable sets.
\end{corollary}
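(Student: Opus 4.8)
The plan is to recognise that all the real content of this corollary is already packaged in Theorem~\ref{t:stone:cech:compactification}, and that passing from a monad on $\classifying{A}$ to a monad on a \emph{full} subcategory is a purely formal matter once the underlying endofunctor preserves the subcategory at the level of objects. Accordingly I would organise the argument into three movements: closure on objects, well-definedness on morphisms, and inheritance of the monad structure.

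First, for objects: a symmetric set $X$ is definable precisely when it is $A_0$-definable for some finite $A_0$. By Theorem~\ref{t:stone:cech:compactification}, $\overline{X}$ is then $A_0$-definable, hence definable, so the endofunctor $\overline{(-)}$ sends the class of definable objects into itself. Applying the theorem a second time, since $\overline{X}$ is definable so is $\overline{\overline{X}}$; this is exactly what is needed for the domain of the multiplication to remain inside the subcategory.

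Second, for morphisms: the full subcategory of $\classifying{A}$ on definable sets has as its morphisms all symmetric functions between definable sets. If $\mor{f}{X}{Y}$ is such a function, then $\overline{f}$ is a symmetric function (the monad is an endofunctor of $\classifying{A}$) whose source $\overline{X}$ and target $\overline{Y}$ are definable by the previous step, so $\overline{f}$ is again a morphism of the subcategory. If one insists on the graph of $\overline{f}$ being definable in the strong sense of having finitely many orbits, this follows from $\omega$-categoricity: $\overline{f}$ is supported by a finite set $A_1$ of atoms, its graph is an $A_1$-equivariant subset of the definable set $\overline{X} \times \overline{Y}$, and over the $\omega$-categorical expansion $\struct{A} \sqcup A_1$ this product has only finitely many orbits, so any equivariant subset is a finite union of orbits.

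Third, the monad structure restricts automatically. For definable $X$ the unit $\mor{\eta_X}{X}{\overline{X}}$ and the multiplication $\mor{\mu_X}{\overline{\overline{X}}}{\overline{X}}$ are components of symmetric natural transformations whose source and target are definable by the first step, hence are morphisms of the full subcategory; because the subcategory is full, naturality together with the associativity and unit laws are inherited verbatim from $\classifying{A}$ as equalities of symmetric functions, with nothing further to check. The only genuine obstacle is therefore the object-level statement, which is exactly Theorem~\ref{t:stone:cech:compactification}; once that is established, the corollary is the general fact that a monad restricts to any full subcategory closed under its underlying endofunctor, applied here with closure guaranteed by that theorem.
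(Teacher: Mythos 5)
Your argument is correct and is exactly the route the paper intends: the paper states this as an immediate corollary of Theorem~\ref{t:stone:cech:compactification} with no written proof, relying on the standard fact that a monad restricts to any full subcategory whose objects are closed under the underlying endofunctor. Your elaboration of the object-level closure, the fullness argument for morphisms, and the automatic inheritance of the unit, multiplication, and monad laws fills in precisely the formal details the paper leaves implicit.
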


\begin{lemma}[Types in $\omega$-categorical $\omega$-stable structures]\label{l:parameters:in:definable:types}
    Let $\omega$-categorical $\omega$-stable structure and $p$ a type in $S_n(A)$. Then $p$ is supported by a finite tuple $\overline{a} \in A^k$ and definable by an $\overline{a}$-supported formula $\phi(x, \overline{a})$ of the same Morley rank as $p$ and Morley degree 1. Moreover $|\overline{a}| \leq 2|x|$.
\end{lemma}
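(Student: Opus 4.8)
The plan is to prove Lemma~\ref{l:parameters:in:definable:types} by leveraging the two standard pillars of stability theory—definability of types and the existence of \emph{canonical bases}—and then to control the size of the parameter tuple via elimination of imaginaries together with the $\omega$-categoricity hypothesis. First I would recall that in an $\omega$-stable theory every complete type $p \in S_n(A)$ is definable: for each formula $\phi(\overline{x}, \overline{y})$ there is a $\phi$-definition $d_p\phi(\overline{y})$, a formula with parameters in the model, such that $\phi(\overline{x}, \overline{b}) \in p \Leftrightarrow \models d_p\phi(\overline{b})$. The collection of these $\phi$-definitions constitutes the \emph{defining scheme} of $p$. In the $\omega$-categorical case there are, modulo the theory, only finitely many formulas $\phi(\overline{x}, \overline{y})$ for each fixed arity, so the whole defining scheme of $p$ is governed by finitely many definitions; this is the mechanism that will eventually collapse the a priori infinitary scheme into a single formula.

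Next I would produce the finite support $\overline{a}$. The parameters appearing across all the $\phi$-definitions of $p$ generate a (possibly infinite) set, but in an $\omega$-stable theory $p$ has a canonical base $\mathrm{Cb}(p)$—the smallest definably closed set over which $p$ is definable—and under elimination of imaginaries (assumed here, as the text restricts to structures eliminating imaginaries) the canonical base is interdefinable with a \emph{real} tuple $\overline{a} \in A^k$. Thus $p$ is both supported by and definable over $\overline{a}$. To see that $\overline{a}$ supports $p$ as a function $\mathcal{P}(X) \to 2$ in the sense of $\classifying{A}$, I would argue exactly as in the proof of Theorem~\ref{t:stone:cech:compactification}: any $\pi$ fixing $\overline{a}$ fixes every $\phi$-definition $d_p\phi(\overline{y}, \overline{a})$, hence fixes the set of instances of $\phi$ in $p$, and since this holds uniformly over the finitely many $\phi$, such $\pi$ fixes $p$ setwise. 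To obtain a \emph{single} defining formula of Morley degree $1$, I would collect the finitely many $\phi$-definitions into one formula $\phi(\overline{x}, \overline{a})$ isolating $p$ among types of the same Morley rank; because the theory is $\omega$-stable the rank of $p$ is an ordinal (indeed an integer here), and one may choose a formula in $p$ of minimal rank and then of minimal degree, which by definition of Morley degree yields a formula of degree $1$ whose unique generic type over $A$ is $p$.

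The bound $|\overline{a}| \leq 2|x|$ is where I expect the real work to lie, and it is the main obstacle. The rough idea is a counting or independence argument: the canonical base of a type of a tuple of length $n = |\overline{x}|$ cannot require more than $2n$ coordinates, because the canonical parameter is recovered from a \emph{Morley sequence} in $p$ of bounded length, and in the $\omega$-categorical $\omega$-stable setting finite Morley rank forces this length to be controlled linearly by $n$. Concretely, I would take a short Morley sequence $\overline{c}_1, \dotsc, \overline{c}_m$ realizing $p$ and express $\overline{a} \in \mathrm{dcl}(\overline{c}_1 \dotsc \overline{c}_m)$; the combinatorial heart is to show two independent realizations already determine the canonical base, giving the factor $2$, so that after elimination of imaginaries the real code $\overline{a}$ fits into $A^{2n}$. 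The delicate point is ensuring the passage through imaginaries does not inflate the arity beyond $2|x|$, which requires that the imaginary sort coding $p$ embeds into a real power of dimension at most $2n$; I would handle this by an explicit analysis of the defining scheme rather than a generic appeal to elimination of imaginaries, since only the explicit version controls the arity.

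Assembling these pieces, I would first fix $p$, extract its defining scheme and its canonical base, invoke elimination of imaginaries to realize the canonical base as a real tuple $\overline{a}$, verify that $\overline{a}$ supports $p$ by the invariance argument above, select a single rank-minimal degree-$1$ formula $\phi(\overline{x}, \overline{a})$ defining $p$, and finally carry out the Morley-sequence computation bounding $|\overline{a}|$ by $2|\overline{x}|$. The $\omega$-categoricity hypothesis is used precisely to guarantee the defining scheme is finite, so that the entire type is coded by a single formula, and the finiteness of Morley rank (guaranteed for $\omega$-categorical $\omega$-stable theories) is what makes the degree-$1$ selection and the length bound go through.
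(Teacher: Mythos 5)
Your overall architecture is reasonable and the first two thirds are essentially sound: definability of types in an $\omega$-stable theory, the observation that $\omega$-categoricity makes the defining scheme finite so that it collapses to a single formula, and the invariance argument showing that any $\pi$ fixing the defining parameters fixes $p$ are all correct and parallel what the paper does (the paper phrases the last point through \emph{normalised} formulas rather than canonical bases, using a Morley rank/degree computation to show that a permutation fixing $p$ must fix the degree-one defining formula, but the two viewpoints are interchangeable here).

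The genuine gap is the bound $|\overline{a}| \leq 2|x|$, which you correctly identify as ``where the real work lies'' but then do not carry out. Your proposed mechanism rests on two unestablished claims: (i) that two independent realizations of $p$ already determine its canonical base, and (ii) that the resulting imaginary code can be realized by a \emph{real} tuple of length at most $2n$ without inflation. Neither is a general fact of $\omega$-stable theories; for (i) one needs the specific structure theory of $\omega$-categorical $\omega$-stable theories (their coordinatization by locally modular strictly minimal sets), and (ii) is exactly the arity-control problem you flag as ``delicate'' and defer to ``an explicit analysis of the defining scheme'' that is never given. This is precisely the deep content that the paper outsources to the Cherlin--Harrington--Lachlan theorem, which asserts that every $1$-type over the model is definable by a normalised formula with \emph{two} parameters; the paper then obtains the $2|x|$ bound for $n$-types by the cheap but essential trick of passing to the reduct of $\struct{A}$ on $n$-tuples (which remains $\omega$-categorical and $\omega$-stable) so that $n$-types become $1$-types and each of the two CHL parameters becomes an $n$-tuple of atoms. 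Without either citing CHL or supplying the coordinatization argument, your proof of the ``Moreover'' clause does not go through, and that clause is the part of the lemma the paper actually leans on.
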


The proof of the above lemma is strongly based on Theorem~6.3 from \cite{cherlin1985no}.

\begin{theorem}[Cherlin, Harrington, Lachlan \cite{cherlin1985no}]
    Let $\struct{A}$ be $\omega$-categorical and $\omega$-stable. Then any type $p \in S_1(A)$ is
definable by a normalised formula with two parameters.
\end{theorem}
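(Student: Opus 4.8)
The plan is to separate the statement into its soft and hard components. Bundled together it asserts three things about a $1$-type $p \in S_1(A)$: that $p$ is definable, that its defining set can be taken of Morley degree $1$ and presented by a \emph{normalised} formula (one whose parameter tuple is canonical), and that only two real parameters are needed. The first two are general facts about $\omega$-stable theories together with elimination of imaginaries; the arithmetic bound ``two'' is the only place where the fine structure of $\omega$-categorical $\omega$-stable theories is genuinely used.

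First I would settle definability and the rank/degree normalisation. Since $\struct{A}$ is $\omega$-stable, $p$ is definable, and since it is also $\omega$-categorical the Morley rank of $p$ is a finite integer $k$ (the fact quoted in the Preliminaries). A formula of minimal rank $k$ in $p$ can be refined within $p$ to one $\phi(x,\bar b)$ of Morley degree $1$; such a $\phi$ then isolates $p$ as the unique rank-$k$ type extending it. Consequently $\psi(x,\bar c)\in p$ if and only if $\mathit{MR}\big(\phi(x,\bar b)\wedge\psi(x,\bar c)\big)=k$, and since Morley rank is definable in families this already shows that $p$ is definable over $\bar b$.

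Next I would normalise. Let $E(\bar y,\bar y')$ abbreviate $\forall x\,\big(\phi(x,\bar y)\leftrightarrow\phi(x,\bar y')\big)$, a definable equivalence relation whose class $\bar b/E$ is exactly the canonical parameter of the set $\phi(A,\bar b)$. Using the standing assumption that $\struct{A}$ eliminates imaginaries, $\bar b/E$ is interdefinable with a real tuple $\bar a$; rewriting $\phi$ over $\bar a$ produces a normalised formula defining the same set, hence still defining $p$, but now with a canonical real parameter $\bar a$ of some finite length $m$.

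The hard part is to force $m\le 2$, and here the soft theory is silent: one must invoke the coordinatization of $\omega$-categorical $\omega$-stable structures from \cite{cherlin1985no}. The mechanism is that such structures are built from rank-$1$ minimal pieces whose associated geometries are (locally) modular, which makes canonical bases of $1$-types small --- the canonical base of $\mathrm{tp}(a/A)$ is controlled by the single realization $a$ together with a rank-$1$ coordinate, and a rank-$1$, degree-$1$ definable family is already parametrized by a formula in two parameter variables (morally a ``line'' pinned down by two of its points). Pushing the tuple $\bar a$ through this coordinatization collapses it to a pair, yielding the normalised defining formula with two parameters. I expect this modularity/coordinatization step to be the genuine obstacle, and rather than attempt to reprove it I would import it directly from \cite{cherlin1985no}.
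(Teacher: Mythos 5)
The paper gives no independent proof of this statement---it is imported directly from Cherlin--Harrington--Lachlan (Theorem~6.3 of \cite{cherlin1985no}), with only a remark that the normalised, two-parameter form ``can be extracted from the proof'' there---and your proposal ultimately does the same, deferring the genuinely hard two-parameter bound to that reference, so the two treatments coincide in substance. One caution on the part you do spell out: taking the canonical parameter of the equivalence relation $E(\bar y,\bar y')\equiv\forall x\,(\phi(x,\bar y)\leftrightarrow\phi(x,\bar y'))$ does \emph{not} yield a normalised formula in the sense the paper uses (instances in the same orbit that differ on a set of strictly smaller Morley rank must be literally equal); for that one must quotient by the coarser, still definable relation ``differ on a set of rank $<\mathit{MR}(\phi)$'' and pass to the generic representative of each class---a standard step, and harmless here since you import the full statement from \cite{cherlin1985no} anyway.
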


 This is a bit stronger than the original statement from \cite{cherlin1985no}, but it can be extracted from the proof. The authors show that the defining formula $\phi(x, \overline{a})$ of type $p$ can be chosen to be normalised: i.e.~whenever $\overline{a}$ and $\overline{a'}$ are in the same orbit, and $\phi(x, \overline{a})$ differs from $\phi(x, \overline{a'})$ on a set of the Morley rank strictly smaller than $\phi(x, \overline{a})$ then in fact they differ on the empty set, i.e.: $\phi(x, \overline{a}) = \phi(x, \overline{a'})$. Moreover, the term ``defining formula'' refers to the following property: for every formula with parameters $\psi$ we have that $\psi(x, \overline{q}) \in p$ if and only if the Morley rank of $\phi(x, \overline{a}) \cap \psi(x, \overline{q})$ is the same as the Morley rank of $\phi(x, \overline{a})$, therefore it has the minimal Morley rank possible in $p$. It is also a standard result in stability theory (see: \cite{marker2006model}, \cite{hodges1993model} or \cite{tent2012course} for more details) that the above property is definable from parameters $\overline{a}$.

\begin{proof}[Proof of Lemma~\ref{l:parameters:in:definable:types}]
Let us recall that a type $p$ being definable from parameters $\overline{a}$ means just that for every formula $\psi(x, y)$ without parameters, there is a formula $D_{\psi}(y, \overline{a})$ with parameters $\overline{a}$ such that $\psi(x, \overline{q}) \in p$ if and only if $D_{\psi}(\overline{q}, \overline{a})$.  Therefore, definability by a single formula in the sense of \cite{cherlin1985no} is a stronger property. Fortunately, it is not stronger for $\omega$-categorical $\omega$-stable structures. We must show that parameters $\overline{a}$ are necessary, i.e.~that $p$ is not definable with a smaller number of parameters. But it follows from our choice of $\phi(x, \overline{a})$ to be normalised.
    The fact that $\phi(x, \overline{a})$ is normalised means that every permutation $\pi$ that fixes $p$ must also fix $\phi(x, \overline{a})$, otherwise we would have $\phi(x, \pi^{-1}(\overline{a})) \in p$ and therefore $\phi(x, \pi^{-1}(\overline{a})) \cap \phi(x, \overline{a}) \in p$. But because $\phi(x, \overline{a})$ was of the smallest Morley Rank, say $r$, then the Morley rank of $\phi(x, \pi^{-1}(\overline{a})) \cap \phi(x, \overline{a})$ must be equal to $r$ too. Therefore, $\phi(x, \overline{a}) \setminus \phi(x, \pi^{-1}(\overline{a})) \cap \phi(x, \overline{a})$ must be of Morley Rank strictly smaller than $r$ -- otherwise, $\phi(x, \overline{a})$ would be a disjoint sum of two sets of Morley Rank $r$, what would contradict the choice of $\phi(x, \overline{a})$ with Morley Degree $1$. The same argument works for $\phi(x, \pi^{-1}(\overline{a})) \setminus \phi(x, \pi^{-1}(\overline{a})) \cap \phi(x, \overline{a})$, what means that $\phi(x, \overline{a})$ and $\phi(x, \pi(\overline{a}))$ differs on a set of the Morley Rank strictly smaller than $r$. Therefore, must be equal.

    It remains to prove the bound on the number of parameters for types in $S_n(A)$. Note however that the general case for $n$-types reduces to the case of $1$-types. It suffices to observe that the reduct of an $\omega$-categorical $\omega$-stable structure is itself $\omega$-categorical and $\omega$-stable and one may easily construct a reduct of $A$ on $n$-element tuples of $A$ whose $1$-types encode $n$-types of $A$.
\end{proof}

\begin{example}[Ultrafilters in random graphs]\label{e:types:in:graphs}
    As mentioned in the introduction the structure $\struct{R}$ of the Random Graph from Example~\ref{e:graph} is $\omega$-categorical, but not $\omega$-stable. Here we will show, that the set of ultrafilters on $R$ has infinitely many orbits. First, observe that by the extension property of the Random Graph, every set of formulas:
    $$S = \{E(x, a_1), E(x, a_2), \dotsc, \neg E(x, b_1), \neg E(x, b_2), \dotsc \}$$
    for pairwise distinct elements $a_i, b_j$ is finitely satisfiable. Therefore, by the compactness of the First-Order Logic, it is satisfiable. Because, Random Graphs admit elimination of quantifiers, if $a_i, b_j$ enumerate the whole $R$, then $S$ generates an ultrafilter $\mu$ on $\mathcal{P}(R)$. Moreover, $\mu$ is symmetric if and only if the set $\{a_i \colon E(x, a_i) \in S\}$ is definable. Therefore, symmetric non-principal ultrafilters on $R$ are tantamount to definable subsets of $R$. 
\end{example}

\section{Closure properties of vector spaces on definable sets}\label{sec:definable:spaces}
The aim of this section is to prove that the category of vector spaces on definable bases enjoys many closure properties somehow similar and somehow different from the closure properties of the category of vector spaces on finite bases. Let $V, W$ be vector spaces on $A_0$-definable bases $\Lambda$ and $\Gamma$ respectively, in $\classifying{A}$ for an $\omega$-categorical and $\omega$-stable structure $\struct{A}$. Then:
\begin{itemize}
    \item the finite coproduct space $V + W$ is the same as the finite product space $V \times W$ and has basis $\lambda \sqcup \Gamma$
    \item the tensor product space $V \otimes W$ has basis $\lambda \times \Gamma$
    \item the dual space $V^* = \mathcal{K}^V$ has basis $\overline{\Lambda}$
    \item the space of linear exponent $V \multimap W$ has a basis that is an $A_0$-equivariant subset of $\overline{\Lambda \times \Gamma}$.
\end{itemize}
The last closure property is quite remarkable, because it means that the category of vector spaces on definable sets is monoidaly closed. This is in contrast to the category of definable sets, where the exponents are not definable. The first two properties on the above list follows from the same properties when we treat the vector spaces as living inside classical set theory, whilst the last property can be proved directly from the third one. Therefore, the main difficulty is in proving the characterisation of the basis of the dual space in terms of the basis of the space.

\begin{theorem}[Free space on ultrafilters]\label{t:free:space}
Let $X$ be definable. The free $\mathcal{K}$-vector space $F(\overline{X})$ over the set $\overline{X}$ of ultrafilters on $X$ is isomorphic to the space of functions $\mathcal{K}^X$.
\end{theorem}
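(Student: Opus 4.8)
The plan is to exhibit an explicit isomorphism $\mathcal{K}^X \cong F(\overline{X})$ by identifying a basis of the dual space $\mathcal{K}^X$ with the set $\overline{X}$ of ultrafilters on $X$. First I would note that Theorem~\ref{t:stone:cech:compactification} already guarantees that $\overline{X}$ is definable, so $F(\overline{X})$ is a genuine object of $\catw{Vect}_{\classifying{A}}$, and the content of the theorem is that this particular definable set is the right basis. The natural candidate map sends an ultrafilter $p \in \overline{X}$ to the linear functional $\widehat{p} \in \mathcal{K}^X$ defined by taking, on a finitely-supported $f \colon X \to \mathcal{K}$, the (unique) value $r \in \mathcal{K}$ such that $\{x \in X \colon f(x) = r\} \in p$. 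This is well-defined precisely because every $f \in \mathcal{K}^X$ is \emph{symmetric} (finitely supported) and hence, being a definable function into the finite-orbit set that is the image, partitions $X$ into finitely many definable pieces $\{x \colon f(x) = r\}$; exactly one of these lies in the ultrafilter $p$, and that selects a single value. This is the key use of $\omega$-categoricity plus $\omega$-stability: a symmetric function $X \to \mathcal{K}$ takes only finitely many values on $X$, so an ultrafilter really does evaluate it.

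Next I would set up the linear extension. Since $F(\overline{X})$ is the free vector space on $\overline{X}$, the assignment $p \mapsto \widehat{p}$ extends uniquely to a linear map $\Phi \colon F(\overline{X}) \to \mathcal{K}^X$, and by the enriched adjunction between $F$ and the forgetful functor (Theorem~\ref{t:ultrafilter:strong} and the surrounding discussion) this map is equivariant/symmetric, hence a morphism in $\catw{Vect}_{\classifying{A}}$. The substance is to show $\Phi$ is a bijection. For \emph{injectivity} I would argue that distinct ultrafilters are separated by a definable subset $S \subseteq X$, i.e.\ its characteristic function $\chi_S \in \mathcal{K}^X$ takes value $1$ on $p$ and $0$ on $q$; together with finiteness of supports this forces a nontrivial finite combination $\sum_i r_i \widehat{p_i}$ (with distinct $p_i$) to be nonzero, because one can choose a definable set in $p_1$ but in none of the other $p_i$ and evaluate there. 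For \emph{surjectivity} I would take an arbitrary symmetric $f \in \mathcal{K}^X$, use its finitely many values $r_1,\dots,r_k$ and the induced definable partition $X = \bigsqcup_j D_j$ with $D_j = \{x \colon f(x) = r_j\}$, and check that $f$ is hit by the finite combination $\sum_j r_j \,\mathbf{1}_{D_j}$ read back through $\Phi$ — where $\mathbf{1}_{D_j}$ denotes the sum over $\overline{X}$ of the ``point masses'' concentrated on $D_j$. This is where I would need to be careful that the relevant sum in $F(\overline{X})$ is genuinely finite, which again rests on $\overline{X}$ having finitely many orbits.

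The main obstacle I anticipate is surjectivity, specifically matching a given $f \in \mathcal{K}^X$ with an \emph{honest finite} element of $F(\overline{X})$. The subtlety is that an element of $F(\overline{X})$ is a finite $\mathcal{K}$-linear combination of individual ultrafilters, whereas the obvious preimage of $\mathbf{1}_{D_j}$ is the ``sum of all ultrafilters containing $D_j$,'' which is a priori an infinite sum. The resolution should be that the correct basis elements of $\mathcal{K}^X$ are not the individual characteristic functions but the functions constant on each orbit-type of ultrafilters; dually, one groups the ultrafilters of $\overline{X}$ by orbit and uses that there are only finitely many orbits to replace an infinite sum over a single orbit by a single scaled basis vector. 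Concretely, I would reduce to the case $X = A^n$ via the coproduct decomposition $\overline{A^n} = \overline{X} \sqcup \overline{A^n \setminus X}$ used in the proof of Theorem~\ref{t:stone:cech:compactification}, so that the statement for arbitrary definable $X$ follows from the case of the full power $A^n$ together with the observation that both $\mathcal{K}^{(-)}$ and $F(\overline{(-)})$ send finite coproducts to finite products. The remaining verification that $\Phi$ respects this decomposition and that the finite-orbit structure makes every symmetric $f$ a finite combination is the technical heart; I would expect the details of the orbit-counting and the well-definedness of evaluation at an ultrafilter to be the places where $\omega$-stability (integer Morley rank, finitely many values) is genuinely needed, as opposed to mere $\omega$-categoricity.
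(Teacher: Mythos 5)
There is a genuine gap, and it starts with a type error in your candidate map. The functional $\widehat{p}$ you associate to an ultrafilter $p$ — ``the unique $r$ with $\{x : f(x)=r\}\in p$'' — consumes a symmetric function $f\in\mathcal{K}^X$ and returns a scalar, so it is an element of $\mathit{Lin}(\mathcal{K}^X,\mathcal{K})$, the \emph{dual} of $\mathcal{K}^X$, not of $\mathcal{K}^X$ itself. Your $\Phi$ therefore maps $F(\overline{X})$ into $(\mathcal{K}^X)^*$, and proving it injective (which your separation argument does correctly) only shows the evaluation functionals are linearly independent in that dual; it does not produce a basis of $\mathcal{K}^X$ indexed by $\overline{X}$, which is what Theorem~\ref{t:free:space} asserts. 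To prove the theorem you must assign to each ultrafilter $p$ an honest function $X\to\mathcal{K}$. The paper does this by \emph{choosing}, equivariantly, a distinguished set $\lambda_p\in p$ — the normalised defining formula of the corresponding type (Lemma~\ref{l:nice:exists}, resting on Lemma~\ref{l:parameters:in:definable:types}) — and taking its characteristic function; no analogue of this choice appears in your proposal, and the canonical evaluation pairing alone cannot supply it.

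The surjectivity obstacle you flag is real, but your proposed resolution does not work. ``Grouping the ultrafilters by orbit and replacing an infinite sum over a single orbit by a single scaled basis vector'' is not an operation available in $F(\overline{X})$: the free vector space has one basis vector per ultrafilter, and an infinite formal sum over an orbit is simply not an element of it. Orbit-finiteness of $\overline{X}$ is also not the source of the needed finiteness — each orbit is typically infinite, and a definable $D_j$ lies in infinitely many ultrafilters. The actual mechanism (Lemmas~\ref{l:independent} and~\ref{l:spans}) is a Cantor--Bendixson-style stratification $\overline{X}=\bigsqcup_i \overline{X}_{(i)}$ by iterated limit points, which terminates precisely because of $\omega$-stability (Lemma~\ref{l:binary:tree}; $\omega$-categoricity only guarantees the descending chain of equivariant subsets stabilises, stability forces it to stabilise at $\emptyset$). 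A definable set $D_j$ belongs to only \emph{finitely many} ultrafilters of the highest stratum it reaches — otherwise those ultrafilters would accumulate at a point of a higher stratum still containing $D_j$ — so one subtracts the corresponding finitely many $\lambda_p$'s and the remainder drops to a lower stratum; downward induction finishes. Your reduction to $X=A^n$ via Lemma~\ref{l:ultrafilter:finite:coproducts} matches the paper's proof of Theorem~\ref{t:stone:cech:compactification} and is fine, but the heart of the argument — the stratification and the isolating sets $\lambda_p$ — is missing.
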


Before we prove the theorem let us make an important remark.

\begin{remark}[On necessity of stability]\label{r:stability:is:necessary}
    Let us consider the set of atoms $Q$ in the ordered Fraenkel-Mostowski model $\classifying{Q}$. We claim that both $\overline{Q}$ and the basis $\Lambda$ of $\mathcal{R}^Q$ exist, but are \emph{not} isomorphic. In fact, $\Lambda$ is a proper subset of $\overline{Q}$. To see this,  consider a symmetric function $f \colon Q \rightarrow \mathbb{R}$. Then there is a finite decomposition $Q = I_1 \sqcup I_2 \sqcup \cdots I_n$ on intervals $I_k$, such that $f$ is constant on each $I_k$. Therefore, the set of vectors:
$\{1, p_1^*, p_2^*, \cdots, p_1^{<}, p_2^{<}, \cdots \} \approx Q \sqcup Q \sqcup 1$, where: $1$ is the constant function, i.e.~$1(a)=1$, $p^*$ is the characteristic function, i.e.~$p^*(q) = [p = q]$, $p^{<}$ is the open down set of $p$, i.e.~$p^{<}(q) = [p > q]$
generates $\mathbb{R}^Q$. Moreover, these vectors are linearly independent: if $S$ is any non-empty finite set of the above vectors, then there exists vector $m \in S$ and an atom $p$ with $m(p) = 1$ such that for every $s \in S \setminus \{m\}$ we have that $s(p) = 0$, so $m$ cannot be a linear combination of $S \setminus \{m\}$ and by induction on the size of $S$, the vectors are linearly independent.
On the other hand, it is easy to compute the Stone-Čech compactification $\overline{Q}$ of $Q$ directly:
$\overline{Q} = \{-\infty, +\infty, (q)_{q \in Q}, (q^{-})_{q \in Q}, (q^{+})_{q \in Q}\} \approx Q \sqcup Q \sqcup Q \sqcup 2$
where:
    $-\infty$ is the ultrafilter generated by $\{x \colon x < q\}_{q \in Q}$,
    $+\infty$ is the ultrafilter generated by $\{x \colon x > q\}_{q \in Q}$,
    $(q)_{q \in Q}$ are all principal ultrafilters,
    $(q^{-})_{q \in Q}$ are ultrafilters of generated by the left neighbourhoods of $q$, i.e.~all sets $\{x \colon p < x < q\}_{p < q}$,
    $(q^{-})_{q \in Q}$ are ultrafilters of generated by the right neighbourhoods of $q$, i.e.~all sets $\{x \colon q < x < p\}_{p > q}$.
\end{remark}

The proof of Theorem~\ref{t:free:space} is contained in Lemma~\ref{l:independent} and Lemma~\ref{l:spans}, but before we state the lemmas we have to fix some terminology.
For a set $X$ let us denote by $U$ the set of non-principal ultrafilters on $X$, i.e.~$U(X) = \overline{X} \setminus X$, where $X$ is identified with the image of $X$ under $\eta$. With every subset $Y \subseteq \overline{X}$ we may associate the set $Y'$ of limit points of $Y$, i.e.~the image of $U(Y) \subseteq U(\overline{X}) \subseteq \overline{\overline{X}}\to^\mu \overline{X}$. It is clear that if $Y$ is closed under limits, then $Y' \subseteq Y$ and $Y'$ is also closed under limits (i.e.~it is defined as the subspace of limit points). Moreover, if $Y$ is $A_0$-supported then $Y'$ is $A_0$-supported, because both $\eta$ and $\mu$ are equivariant. Therefore, the operation of taking limit points produces a descending sequence of $A_0$-equivariant subsets of $Y$:
$$\cdots \subseteq Y^{(n+1)} \subseteq Y^{(n)} \subseteq \cdots \subseteq Y'' \subseteq Y' \subseteq Y$$
Because $A$ is $\omega$-categorical, there are only finitely many $A_0$-supported subsets of $Y$, so there must be $n$ such that $(Y^{(n)})' = Y^{(n)}$. Therefore, the sequence gives a decomposition of $Y$ on $n$-disjoint $A_0$-equivariant subsets $Y_{(i)} = Y^{(i-1)} \setminus Y^{(i)}$. Observe also, that by the construction, points in $Y_{(i)}$ are isolated in $Y^{(i-1)}$. We have the following lemma.

\begin{lemma}[Existence of a binary tree]\label{l:binary:tree}
If the sequence $Y^{(n)} \subseteq \cdots \subseteq Y'' \subseteq Y' \subseteq Y$ ends in a non-empty set $Y^{(n)}$ then structure $A$ is not $\omega$-stable.
\end{lemma}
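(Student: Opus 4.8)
The plan is to read the hypothesis topologically and then manufacture a binary tree of formulas, which is the classical obstruction to $\omega$-stability.

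First I would recall that $\overline{X}$ carries the Stone topology, whose basic clopen sets are $\widehat{S} = \{p \in \overline{X} \colon S \in p\}$ for symmetric $S \subseteq X$, and that under this topology the monad multiplication $\mu$ sends a non-principal ultrafilter on $\overline{X}$ to its unique limit point in the compact Hausdorff space $\overline{X}$. Consequently the operation $Y \mapsto Y'$ introduced just before the lemma is nothing but the Cantor--Bendixson derivative: $Y'$ is the set of accumulation points of $Y$, i.e.\ the limits of non-principal ultrafilters on $Y$. Unwinding the hypothesis, $Z := Y^{(n)}$ satisfies $Z' = Z$; here $Z \subseteq Z'$ says every point of $Z$ is an accumulation point (so $Z$ has no isolated points), while $Z' \subseteq Z$ says $Z$ is closed. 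Together with $Z \neq \emptyset$ this means $Z$ is a nonempty perfect subset of $\overline{X}$. Establishing this dictionary between the $\mu$-based definition of limit points and the honest topological derived set is, I expect, the step requiring the most care.

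Next I would build, by recursion on $t \in 2^{<\omega}$, a family of symmetric sets $S_t \subseteq X$ with $\widehat{S}_t \cap Z \neq \emptyset$ and with $\widehat{S}_{t0}, \widehat{S}_{t1} \subseteq \widehat{S}_t$ disjoint. Starting from $S_\emptyset = X$, at a node $t$ I note that $\widehat{S}_t \cap Z$ is a nonempty relatively open subset of the perfect set $Z$, hence infinite; I pick two distinct points $p_0 \neq p_1$ in it, and, using that the Stone space is zero-dimensional, choose a symmetric $C \subseteq X$ with $C \in p_0$ and $C \notin p_1$. Setting $S_{t0} = S_t \cap C$ and $S_{t1} = S_t \setminus C$ gives disjoint clopen sets containing $p_0$ and $p_1$ respectively, so both still meet $Z$. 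Thus $(S_t)_{t \in 2^{<\omega}}$ is a splitting binary tree of consistent formulas: each $S_t$ is nonempty (it lies in some type of $Z$), $S_{ti} \vdash S_t$, and $S_{t0} \wedge S_{t1}$ is inconsistent.

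Finally I would extract the contradiction. Each $S_t$ is definable over a finite tuple of parameters, and since $2^{<\omega}$ is countable, all the $S_t$ together use parameters from a single countable set $B \subseteq A$. By compactness of first-order logic every branch $\beta \in 2^\omega$ yields a type $p_\beta \in S_n(B)$ containing all $S_{\beta \restriction k}$, and two distinct branches disagree on membership of some $S_t$ at their first split, so $\beta \mapsto p_\beta$ is injective. Hence $|S_n(B)| \geq 2^{\aleph_0}$ for a countable $B$, which is exactly the failure of $\omega$-stability. The only points needing vigilance are the identification in the first paragraph and the bookkeeping that keeps the parameter set countable; everything else is the standard passage from a perfect set to a binary tree.
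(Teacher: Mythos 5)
Your proof is correct and follows essentially the same route as the paper's: show the fixed point $Z=Y^{(n)}$ of the derivative is a nonempty set with no isolated points, then repeatedly separate two of its points by a set to grow a splitting binary tree, contradicting $\omega$-stability. The only (harmless, arguably clarifying) difference is that your tree consists of symmetric subsets of $X$ cut out by clopen sets of $\overline{X}$ and you make the final contradiction explicit by exhibiting $2^{\aleph_0}$ types over a countable parameter set, whereas the paper's tree consists of subsets of $Y^{(n)}$ itself and leaves that last step implicit.
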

\begin{proof}
Let us write $P = Y^{(n)}$. Observe that if $P$ is finite then $P' = \emptyset$ so $P = \emptyset$. If $P$ is infinite, then we may choose any two distinct points $a \neq b \in P$ and two non-principal ultrafilters $p, q \in U(P)$ such that $p \rightarrow a$ and $q \rightarrow b$. Obviously, $p \neq q$, so there must be a set $P_0$ such that $P_0 \in p$ and $P_0 \not\in q$. This means that $P_1 = (P \setminus P_0) \in q$. Because, $p$ and $q$ are non-principal, both $P_1$ and $P_2$ are infinite. Therefore, $P_1$ and $P_2$ give a decomposition of $P$ on two disjoint infinite subsets and because $U(P) = U(P_1 \sqcup P_2) = U(P_1) \sqcup U(P_2)$, we may construct by induction an infinite binary tree $(P_w)_{w \in \{0, 1\}^*}$ what contradicts $\omega$-stability of $A$.    
\end{proof}

\begin{remark}[Non-principal ultrafilters on infinite sets]\label{r:non:principal:ultrafilters:exist}
The proof of Lemma~\ref{l:binary:tree} shows that if $U(X)$ is empty, i.e.~there are no non-principal ultrafilters on $U(X)$, then $X$ must be finite. Therefore, for every infinite definable set $X$ the set of non-principal ultrafilters on $X$ is non-empty.     
\end{remark}

For the rest of the proof, we shall assume that $A$ is $\omega$-stable (therefore, $Y^{(i-1)} = \emptyset$) and without loss of generality that $Y = \overline{X}$ is equivariant.

\begin{remark}
Because $A$ is $\omega$-stable, every type in $S_n(A)$ is definable with a finite set of parameters. Therefore, by the considerations in the proof of Theorem~\ref{t:stone:cech:compactification}, $S_n(A)$ is isomorphic to $\overline{A^n}$. Moreover, a formula $\phi$ (with parameters from $A$) has Morley rank $r$ and Morley degree $d$ if and only if it belongs to exactly $d$ types in $\overline{A^n}_{(r)}$. According to this setting, Lemma~\ref{l:binary:tree} gives an internal proof of the fact that formulas in an $\omega$-categorical $\omega$-stable theory must have finite Morley rank.      
\end{remark}

\begin{lemma}[On nice isolated sets]\label{l:nice:exists}
For every $y \in Y_{(i)}$ for $1 \leq i \leq n$ there is a set $\lambda_y$ that isolates $y$ in $Y^{(i-1)}$ and such that if $A_0$ supports $y$ then $A_0$ supports $\lambda_y$. Moreover, we may choose $\lambda_y$ uniformly for the orbit of $y$, i.e.~$\pi(\lambda_y) = \lambda_{\pi(y)}$.
\end{lemma}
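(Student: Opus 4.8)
The plan is to use the dictionary established in the proof of Theorem~\ref{t:stone:cech:compactification} to view $y$ as a finitely supported definable type, to identify the filtration $Y^{(j)}$ with the Morley-rank filtration, and then to take for $\lambda_y$ the subset of $X$ cut out by the \emph{normalised} defining formula of that type supplied by Lemma~\ref{l:parameters:in:definable:types}. Recall first that the clopen subsets of $\overline{X}$ are exactly the sets $\widehat{S} = \{p \in \overline{X} : S \in p\}$ for $S \subseteq X$, so that ``$\lambda_y$ isolates $y$ in $Y^{(i-1)}$'' means precisely that $\widehat{\lambda_y} \cap Y^{(i-1)} = \{y\}$ for an appropriate $A_0$-supported $\lambda_y \subseteq X$. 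The limit-point operation is the Cantor--Bendixson derivative of the Stone space $\overline{X}$, and — as recorded in the Remark preceding this lemma, where a formula's rank is read off from the types it lies in — for an $\omega$-categorical $\omega$-stable structure this derivative filtration coincides with the Morley-rank filtration: $Y^{(j)}$ is the set of types of Morley rank at least $j$. Hence a point $y \in Y_{(i)}$, being isolated in $Y^{(i-1)}$, is a type of Morley rank exactly $r := i-1$, while $Y^{(i-1)} = \{q \in \overline{X} : \mathit{MR}(q) \geq r\}$.

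With this in hand I would construct $\lambda_y$ directly. By Lemma~\ref{l:parameters:in:definable:types} the type $y$ is definable by a normalised formula $\phi_y(x, \overline{a})$ whose Morley rank equals that of $y$, namely $r$, whose Morley degree is $1$, and whose parameters $\overline{a}$ are canonical, i.e.\ supported by any support $A_0$ of $y$. Set $\lambda_y := \phi_y(X, \overline{a})$ (replacing $\phi_y$ by $\phi_y \wedge (x \in X)$ in case $X \subsetneq A^n$). Taking $\psi = \phi_y$ in the definability criterion shows $\phi_y \in y$, so $y \in \widehat{\lambda_y}$.

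The isolation then follows from the rank calculus. Since the Morley rank of a formula is the supremum of the Morley ranks of the types containing it, every $q \in \widehat{\lambda_y}$ satisfies $\mathit{MR}(q) \leq \mathit{MR}(\phi_y) = r$; and the number of types of rank exactly $r$ that contain $\phi_y$ equals the Morley degree of $\phi_y$, which is $1$. Thus the only $q \in \widehat{\lambda_y}$ with $\mathit{MR}(q) \geq r$ is $y$ itself, i.e.\ $\widehat{\lambda_y} \cap Y^{(i-1)} = \{y\}$, as required. The support and uniformity clauses come for free from normalisation: $\pi(\overline{a})$ is again the canonical parameter tuple of $\pi y$, so $\pi(\phi_y) = \phi_{\pi y}$ and hence $\pi(\lambda_y) = \lambda_{\pi y}$; in particular every $\pi \in \aut{\struct{A}}_{(A_0)}$ fixes $y$, whence $\pi(\phi_y) = \phi_{\pi y} = \phi_y$ and $\pi(\lambda_y) = \lambda_y$, so that $A_0$ supports $\lambda_y$.

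The one genuinely load-bearing input — and the step I expect to require the most care — is the identification of the topological derivative filtration $Y^{(j)}$ with the Morley-rank filtration, together with the matching of Cantor--Bendixson degree with Morley degree and the integrality of the ranks. All three follow from $\omega$-categoricity together with $\omega$-stability (finiteness of rank is exactly the content underlying Lemma~\ref{l:binary:tree}), and they are precisely the facts flagged in the Remark, so they may be invoked here; but the argument stands or falls on deploying them correctly. I note that one can instead attempt the more hands-on route of letting $\lambda_y$ be the atom, containing $y$, of the finite Boolean algebra of $\aut{\struct{A}}_y$-invariant subsets of $X$; this makes support and uniformity transparent, but proving that such an atom isolates $y$ runs into the difficulty that $\lambda_y$ is then only an \emph{infinite} union of translated isolating sets, so an ultrafilter could contain $\lambda_y$ while missing every translate. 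Dispatching that spurious limit point again requires $\omega$-stability (an offending point yields an infinite binary tree as in Lemma~\ref{l:binary:tree}), which is why I prefer the type-theoretic construction above.
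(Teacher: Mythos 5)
Your construction is exactly the paper's: take $\lambda_y$ to be the set cut out by the normalised defining formula of the type $y$ supplied by Lemma~\ref{l:parameters:in:definable:types}, and derive support and uniformity from normalisation together with the invariance of Morley rank and degree under automorphisms. The only difference is that you spell out the isolation step (via the identification of the Cantor--Bendixson filtration with the Morley-rank filtration and the degree-$1$ condition), which the paper leaves implicit, relying on the Remark preceding the lemma; your account is correct.
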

\begin{proof}
Set $\lambda_y$ may be chosen to be the normalised defining formula for $y$ from Lemma~\ref{l:parameters:in:definable:types}. Consider any permutation $\pi$. If $\lambda_y$ is normalised, then $\pi(\lambda_y)$ must be normalised (by the definition of normality) and belongs to $\pi(y)$. Because both the rank and the degree are preserved by automorphisms, $\pi(\lambda_y)$ can be chosen as a defining formula for $\pi(y)$.
\end{proof}

The above lemma implies the existence of an equivariant injection $\overline{X} \rightarrow \mathcal{P}(X)$ sending an ultrafilter $p$ to a nice set contained in $p$.

\begin{lemma}[Nice isolated sets are linearly independent in $\mathcal{K}^X$]\label{l:independent}
The proof is by induction on sets $\overline{X}_{(i)}$. For $i = 1$ sets $\lambda_x$ are singletons $\{x\}$, so they are linearly independent as functions $X \rightarrow \{0, 1\} \rightarrow \mathcal{K}$ for any ring $\mathcal{K}$. Let us assume that the set $\{\lambda_p \colon \exists_{k < i} \; p \in \overline{X}_{(k)}\}$ is linearly independent. Consider any linear combination that equals $0$, that is: 
$\alpha + a_1 \lambda_{p_1} + a_2 \lambda_{p_2} + \cdots a_n \lambda_{p_n} = 0$, where $p_1, p_2, \dotsc,  p_n$ belong to $\overline{X}_{(i)}$ and $\alpha$ is a linear combination of some $\lambda_p$ for $p \in \overline{X}_{(k)}$ for $k < i$. Because $\alpha$ is a finite combination of functions that are zero at every $p \in \overline{X}_{(i)}$, function $\alpha$ must be itself zero at every $p \in \overline{X}_{(i)}$, therefore $a_1 \lambda_{p_1} + a_2 \lambda_{p_2} + \cdots a_n \lambda_{p_n}$ must be zero at every $p \in \overline{X}_{(i)}$. But $(a_1 \lambda_{p_1} + a_2 \lambda_{p_2} + \cdots a_n \lambda_{p_n})(p) = a_j$ for $p = p_j$ and so $a_1 = a_2 = \cdots = a_n = 0$. And then, $\alpha = 0$.
\end{lemma}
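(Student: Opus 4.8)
The plan is to prove linear independence by induction on the Cantor–Bendixson–type stratification $\overline{X} = \bigsqcup_{i} \overline{X}_{(i)}$ furnished by the limit-point operator, exploiting the isolation property of the nice sets $\lambda_p$ from Lemma~\ref{l:nice:exists}. The crucial device is to evaluate a function $f \in \mathcal{K}^X$ at an ultrafilter $p \in \overline{X}$ by its \emph{ultrafilter limit}: since every $f$ that arises in the argument is a finite $\mathcal{K}$-combination of characteristic functions, it takes finitely many values, the fibres $f^{-1}(v)$ partition $X$, and $p$ selects exactly one block; I write $\tilde f(p)$ for the common value of $f$ on that block. First I would record two elementary facts about this evaluation: it is $\mathcal{K}$-linear in $f$, and for a characteristic function $\lambda \subseteq X$ one has $\tilde{\lambda}(p) = [\lambda \in p]$. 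In particular, a finite combination that vanishes identically on $X$ has $\tilde{(\cdot)}$ equal to $0$ at every $p$.

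With this in hand I would translate the isolation property into a Kronecker-delta statement. By Lemma~\ref{l:nice:exists}, $\lambda_p$ isolates $p$ inside $\overline{X}^{(i-1)}$ whenever $p \in \overline{X}_{(i)}$; spelled out, $\lambda_p \in z \Leftrightarrow z = p$ for all $z \in \overline{X}^{(i-1)}$, so $\tilde{\lambda_p}(z) = [z = p]$ on that stratum. Since the sequence $\overline{X}^{(0)} \supseteq \overline{X}^{(1)} \supseteq \cdots$ is decreasing, a nice set $\lambda_q$ attached to a lower stratum $\overline{X}_{(k)}$ with $k < i$ also isolates $q$ in the larger set $\overline{X}^{(k-1)} \supseteq \overline{X}^{(i-1)}$, and therefore vanishes at every point of $\overline{X}_{(i)}$.

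For the induction, the base case $i = 1$ is immediate: the isolated points of $\overline{X}$ are exactly the principal ultrafilters, so $\lambda_x = \{x\}$, and the indicators of distinct singletons are visibly independent. For the step, assume $\{\lambda_q : q \in \overline{X}_{(k)},\ k < i\}$ is independent and consider a vanishing combination $\alpha + \sum_{l=1}^n a_l \lambda_{p_l} = 0$ with $p_l \in \overline{X}_{(i)}$ and $\alpha$ a combination of lower-stratum nice sets. I would apply $\tilde{(\cdot)}$ at each $p_j$: by the previous paragraph $\tilde{\alpha}(p_j) = 0$, while $\tilde{\lambda_{p_l}}(p_j) = [p_l = p_j] = \delta_{jl}$, so the equation collapses to $a_j = 0$. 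This leaves $\alpha = 0$, whence the induction hypothesis kills the remaining coefficients.

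The main obstacle, and essentially the only place where anything beyond bookkeeping happens, is the passage from functions on $X$ to values at ultrafilters: one must check that evaluating a vanishing $\mathcal{K}$-combination of characteristic functions at an ultrafilter is well defined and linear, and that \emph{isolation in} $\overline{X}^{(i-1)}$ really yields $\tilde{\lambda_{p}}(z) = [z = p]$ on that stratum rather than on all of $\overline{X}$ — the latter would fail, since $\lambda_p$ may well lie in ultrafilters belonging to strictly lower strata. Keeping the stratification straight, so that higher-stratum points see lower-stratum nice sets as $0$ and same-stratum nice sets as a delta, is precisely what makes the triangular elimination go through.
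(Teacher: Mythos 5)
Your proof is correct and follows essentially the same route as the paper's: induction along the Cantor--Bendixson-style stratification $\overline{X}=\bigsqcup_i \overline{X}_{(i)}$, with lower-stratum nice sets vanishing at higher-stratum points and same-stratum nice sets acting as Kronecker deltas, yielding the triangular elimination. The only difference is that you make explicit the ultrafilter-limit evaluation $\tilde f(p)$ and its linearity, which the paper uses implicitly when it writes $(a_1\lambda_{p_1}+\cdots+a_n\lambda_{p_n})(p)=a_j$; this is a clarification of the same argument, not a different one.
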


\begin{lemma}[Nice isolated sets spans $\mathcal{K}^X$]\label{l:spans}
Let $\mor{f}{X}{\mathcal{K}}$ be a finitely supported function to a classical (i.e.~without atoms) ring $\mathcal{K}$. Because $\mathcal{K}$ is classical and $X$ has finitely many orbits, such $f$ must take only finitely many values in $\mathcal{K}$, say $r_1, r_2, \dotsc, r_n$. These values induce decomposition of $X$ into $n$ disjoint subsets $\phi_i = f^{-1}[r_i] \subseteq X$, such that $f = \sum_{i=1}^n r_i\phi_i$, where $\phi_i$ are treated as characteristic functions $\mor{\phi_i}{X}{\{0, 1\} \rightarrow \mathcal{K}}$. Moreover, we may drop $i$ such that $r_i = 0$ from the sum. In the below we shall restrict to $\phi_j$ such that $r_j \neq 0$. 

The proof is by induction on $k$ such that $k = \max_{0 \leq i < n} p \in \overline{X}_{(i)} \land \phi_j \in p$, i.e.~the biggest $i$ such that $\phi_j$ belongs to an ultrafilter in  $\overline{X}_{(i)}$. For $k=0$ subset $\phi_j$ must be finite, therefore it is the sum of singletons $\{x\}$ such that $x \in \phi_j$ and the sum is disjoint, thus interpreted the same way in any ring $\mathcal{K}$. Let us now assume that the theorem is true for all $k' < k$. By the assumption $\phi_j$ does not belong to an ultrafilter from $\overline{X}_{(k'')}$ for $k'' > k$, so the number of ultrafilters $p$ from $\overline{X}_{(k)}$ that $\phi_j$ belongs to $p$ must be finite, say the ultrafilters are $p_1, p_2, \dotsc, p_m$. Then $g = \phi_j - (\lambda{p_1} + \lambda_{p_2} + \cdots + \lambda_{p_s})$ is a finitely supported function such that none of $g^{-1}[r]$ for $r \neq 0$ is contained in an ultrafilter from  $\overline{X}_{(k'')}$ for $k'' > k-1$. By inductive hypothesis $g$ is a linear combination of nice isolated sets, say $a_1\lambda{p_1'} + a_2\lambda_{p_2'} + \cdots + a_t\lambda_{p_t'}$, therefore $\phi_j = a_1\lambda{p_1'} + a_2\lambda_{p_2'} + \cdots + a_t\lambda_{p_t'} + \lambda{p_1} + \lambda_{p_2} + \cdots + \lambda_{p_s}$.     

\end{lemma}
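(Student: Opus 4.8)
The plan is to reduce the statement to the case of characteristic functions and then to induct on the ``Cantor--Bendixson depth'' recorded by the filtration $\overline{X} = \overline{X}^{(0)} \supseteq \overline{X}^{(1)} \supseteq \cdots$. First I would note that since $\mathcal{K}$ is classical, all of its elements are fixed by every automorphism, so an $A_0$-supported $\mor{f}{X}{\mathcal{K}}$ is automatically constant on each orbit of $X$; as $X$ is definable there are only finitely many orbits, whence $f$ takes only finitely many values $r_1, \dotsc, r_n$. Writing $\phi_i = f^{-1}[r_i]$ as a finitely supported subset of $X$ gives $f = \sum_i r_i \phi_i$ with the $\phi_i$ read as characteristic functions, and after discarding the indices with $r_i = 0$ it suffices to express the characteristic function of each finitely supported $\phi \subseteq X$ as a finite $\mathcal{K}$-linear combination of the nice isolated sets $\lambda_p$ supplied by Lemma~\ref{l:nice:exists}.

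For a fixed such $\phi$ I would induct on the depth $k(\phi)$, defined as the largest $i$ for which some ultrafilter $p \in \overline{X}_{(i)}$ satisfies $\phi \in p$; this is well defined because the filtration terminates by $\omega$-stability (Lemma~\ref{l:binary:tree}). In the base case the only ultrafilters containing $\phi$ are principal, i.e.\ lie in $\overline{X}_{(1)}$; by Remark~\ref{r:non:principal:ultrafilters:exist} this forces $\phi$ to be finite, and then $\phi = \sum_{x \in \phi} \{x\}$ is a disjoint sum of the singleton nice sets $\lambda_x = \{x\}$, which is interpreted identically over every ring $\mathcal{K}$.

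The inductive step, and the part I expect to be the main obstacle, is the claim that $\phi$ belongs to only \emph{finitely many} ultrafilters of maximal depth $k = k(\phi)$. Internally this is the statement that the set $\{p \in \overline{X} : \phi \in p\}$ is closed under the limit operation $(-)'$: if a non-principal ultrafilter on $\overline{X}$ concentrates on ultrafilters all containing $\phi$, then its $\mu$-image again contains $\phi$. Hence, were there infinitely many $p \in \overline{X}_{(k)}$ with $\phi \in p$, Remark~\ref{r:non:principal:ultrafilters:exist} would produce a non-principal ultrafilter on that (definable) infinite set whose limit point lies in $(\overline{X}^{(k-1)})' = \overline{X}^{(k)}$, hence has depth strictly greater than $k$, yet still contains $\phi$ -- contradicting the maximality of $k$. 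Listing these finitely many ultrafilters as $p_1, \dotsc, p_s$ (a set that is $A_0$-supported, so that $\sum_l \lambda_{p_l}$, and therefore $g := \phi - (\lambda_{p_1} + \cdots + \lambda_{p_s})$, is finitely supported by the equivariance $\pi(\lambda_p) = \lambda_{\pi(p)}$), a short local computation shows that $g$ vanishes on a set belonging to every ultrafilter of depth $\geq k$: near each $p_l$ both $\phi$ and $\lambda_{p_l}$ equal $1$ while the other $\lambda_{p_{l'}}$ vanish, since $\lambda_{p_{l'}}$ isolates $p_{l'}$ in $\overline{X}^{(k-1)} \ni p_l$; and near any remaining depth-$\geq k$ ultrafilter all of $\phi, \lambda_{p_1}, \dotsc, \lambda_{p_s}$ vanish. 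Consequently no fibre $g^{-1}[r]$ with $r \neq 0$ meets an ultrafilter of depth $\geq k$, so each such fibre has depth strictly less than $k$.

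Finally, applying the first-paragraph reduction to the finitely supported function $g$ and then the induction hypothesis to each of its strictly lower-depth fibres expresses $g$ as a finite $\mathcal{K}$-linear combination of nice isolated sets; adding back $\lambda_{p_1} + \cdots + \lambda_{p_s}$ yields the desired expression for $\phi$, closing the induction. Combined with the linear independence established in Lemma~\ref{l:independent}, this shows that the nice isolated sets form a basis of $\mathcal{K}^X$, which is exactly the content of Theorem~\ref{t:free:space}.
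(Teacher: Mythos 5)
Your proof follows essentially the same route as the paper's: the same reduction of $f$ to characteristic functions of its fibres, the same induction on the depth $k$ in the filtration $\overline{X}^{(i)}$, and the same correction term $g = \phi - (\lambda_{p_1} + \cdots + \lambda_{p_s})$ before invoking the inductive hypothesis on the lower-depth fibres of $g$. The only difference is that you spell out the justification for the finiteness of the set of depth-$k$ ultrafilters containing $\phi$ and the local computation showing that $g$ drops in depth, both of which the paper asserts without detail; your filled-in arguments are correct.
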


\begin{remark}[A few words about dual modules]
    Although this section is devoted to vector spaces, that is: modules over a field, an inspection of Lemma~\ref{l:independent} and Lemma~\ref{l:spans} shows that Theorem~\ref{t:free:space} holds for modules over arbitrary ring. 
\end{remark}

Theorem~\ref{t:free:space} together with Lemma~\ref{l:parameters:in:definable:types} say that the vector space of linear functionals on an $A_0$-definable set has an $A_0$-definable basis. We can slightly extend the theorem to include sets of bounded support. We start with a lemma, which says that linear functionals on a definable set cotensored with any classical set has a basis.

\begin{lemma}[On dual basis of $\kappa \times X$]\label{l:orbit:cardinal:space}
Let $X$ be an equivariant set consisting of a single orbit and $\kappa$ a (classical) cardinal. Then the vector space $\mathcal{K}^{\kappa \times X}$ has an equivariant basis $\mathcal{B}(\kappa) \times \overline{X}$, where $\mathcal{B}(\kappa)$ is a basis of $\mathcal{K}^\kappa$.
\end{lemma}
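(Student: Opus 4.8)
The plan is to absorb the classical factor $\kappa$ into the coefficient ring and then invoke Theorem~\ref{t:free:space} in its module form. Concretely, I set $R = \mathcal{K}^\kappa$, view it as a \emph{classical} (atom-free) commutative ring under pointwise operations, and reduce the claim to the statement that $R^X \cong F_R(\overline{X})$, the free $R$-module on $\overline{X}$. Expanding each coefficient in $R$ with respect to the $\mathcal{K}$-basis $\mathcal{B}(\kappa)$ will then produce the desired $\mathcal{K}$-basis $\mathcal{B}(\kappa) \times \overline{X}$.

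\textbf{Step 1: Currying.} First I would establish a support-preserving isomorphism $\mathcal{K}^{\kappa \times X} \cong (\mathcal{K}^\kappa)^X$ via $f \mapsto \hat{f}$, where $\hat{f}(x)$ is the function $k \mapsto f(k,x)$. Since $\kappa$ and $\mathcal{K}$ are classical, every $\pi \in \aut{\struct{A}}_{(A_0)}$ fixes both pointwise, so $\hat{f}(x) \in R$ is a genuine (automatically equivariant) element of $R$, and for such $\pi$ one computes $(\pi \bullet \hat{f})(x) = \hat{f}(\pi^{-1} x)$ while $A_0$-support of $f$ gives $f(k, \pi^{-1}x) = f(k,x)$; hence $\hat{f}$ is $A_0$-supported exactly when $f$ is. This is a routine check that I would carry out but not belabour.

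\textbf{Step 2: Module version of the free-space theorem.} Next I would apply Theorem~\ref{t:free:space} with coefficients in $R$ rather than in $\mathcal{K}$, as licensed by the remark following Lemma~\ref{l:spans}, to conclude $R^X \cong F_R(\overline{X})$ with $R$-basis $\{\lambda_p : p \in \overline{X}\}$ supplied by Lemma~\ref{l:nice:exists}. The one point requiring care is that $R = \mathcal{K}^\kappa$ is an \emph{infinite} classical ring: I must re-examine the spanning argument of Lemma~\ref{l:spans}, whose first move is that a finitely supported $\mor{g}{X}{R}$ takes only finitely many values. This survives verbatim, because that finiteness used only that the coefficient object is classical (so that $g$ is constant on $\aut{\struct{A}}_{(A_0)}$-orbits of $X$) and that $X$ has finitely many orbits, never that the coefficients form a small set; linear independence (Lemma~\ref{l:independent}) is likewise insensitive to the ring.

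\textbf{Step 3: From an $R$-basis to a $\mathcal{K}$-basis, and equivariance.} Finally, since $\mathcal{B}(\kappa)$ is a $\mathcal{K}$-basis of $R = \mathcal{K}^\kappa$ and $\{\lambda_p\}$ is an $R$-basis of $F_R(\overline{X})$, every element is a finite $R$-combination $\sum_p r_p \lambda_p$ with each $r_p$ a finite $\mathcal{K}$-combination $\sum_b c_{b,p}\, b$, yielding a finite $\mathcal{K}$-combination of the vectors $b \cdot \lambda_p$; under the isomorphisms above these correspond to $(k,x) \mapsto b(k)\lambda_p(x)$, indexed by $\mathcal{B}(\kappa) \times \overline{X}$. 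Linear independence of this family follows by combining $\mathcal{K}$-independence of $\mathcal{B}(\kappa)$ in $R$ with Lemma~\ref{l:independent}. For equivariance I would note that $\mathcal{B}(\kappa)$ is classical hence fixed, $\overline{X}$ is equivariant by Theorem~\ref{t:stone:cech:compactification}, and the assignment $p \mapsto \lambda_p$ is uniform on orbits by Lemma~\ref{l:nice:exists}, so $\pi(b \cdot \lambda_p) = b \cdot \lambda_{\pi p}$ and the whole indexed family is equivariant.

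\textbf{Main obstacle.} The delicate point is precisely Step~2: confirming that the proof of Theorem~\ref{t:free:space} is robust under enlarging the classical coefficient ring from $\mathcal{K}$ to the infinite ring $\mathcal{K}^\kappa$. Everything hinges on the observation that the essential finiteness in Lemma~\ref{l:spans} comes from $X$ having finitely many orbits together with the coefficients being atom-free, not from any size restriction on the coefficients; once this is secured, the remainder is bookkeeping about nested free-module structures.
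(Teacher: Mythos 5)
Your proof is correct, and it reaches the same product basis $\{(b,p)\mapsto b(i)\lambda_p(x)\}$ as the paper, but it is organized differently. The paper's own proof stays entirely over the field $\mathcal{K}$: it fixes a support $A_0$ for $f$, observes that each slice $f(i,-)$ is $A_0$-supported and hence constant on the finitely many $A_0$-orbits $X_j$ of $X$, writes $f=\sum_j r_{(-),j}\,X_j$, and then expands the $\kappa$-direction in $\mathcal{B}(\kappa)$ and each indicator $X_j$ in the basis $\overline{X}$ supplied by Theorem~\ref{t:free:space} (field version), multiplying out by hand. You instead curry to $(\mathcal{K}^\kappa)^X$, absorb $\kappa$ into a classical coefficient ring $R=\mathcal{K}^\kappa$, invoke the module form of Theorem~\ref{t:free:space} licensed by the remark after Lemma~\ref{l:spans}, and only then refine the $R$-basis to a $\mathcal{K}$-basis. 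The core finiteness fact is identical in both routes (classical coefficients plus finitely many $A_0$-orbits force finitely many values), but you correctly identify that your route places real weight on the claim that Lemma~\ref{l:spans} and Lemma~\ref{l:independent} are insensitive to enlarging the classical coefficient ring to an infinite one, and you verify exactly the step where that could fail; the paper sidesteps this by never applying the theorem over anything but $\mathcal{K}$ itself. What your version buys is a cleaner separation of the $\kappa$- and $X$-directions and an explicit use of the ring-generality of the free-space theorem; what the paper's version buys is independence from that generality at the cost of redoing the first move of Lemma~\ref{l:spans} inline.
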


Note, however, that it is no longer true that the basis of $\mathcal{K}^{\kappa \times X}$ can be obtained as the Stone-Čech compactification of $\kappa \times X$ -- the later is just much bigger for infinite cardinals $\kappa$. The proof of Lemma~\ref{l:orbit:cardinal:space} and next Theorem~\ref{l:orbit:cardinal:space} is in Appendix~\ref{sec:app:proofs}.

\begin{theorem}[On the existence of dual basis]\label{t:dual:basis}
For every $\omega$-categorical $\omega$-stable structure $A$, the set theory with atoms over $A$ satisfies the following: For every $X$ of bounded support the vector space $\mathcal{K}^X$ has a basis of a bounded support for every classical field $\mathcal{K}$. Moreover, if $X$ is $A_0$-equivariant (resp.~$A_0$-definable), then we may choose the basis to be $A_0$-equivariant (resp.~$A_0$-definable).
\end{theorem}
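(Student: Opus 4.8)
The plan is to reduce the statement about a set $X$ of \emph{bounded support} to the already-established results about \emph{definable} sets (Theorem~\ref{t:free:space}) and about orbits cotensored with classical cardinals (Lemma~\ref{l:orbit:cardinal:space}). First I would unwind what ``bounded support'' means: there is a finite $A_0$ such that $X$ is $A_0$-equivariant, and then the canonical action of $\aut{\struct{A}\sqcup A_0}$ on $X$ decomposes $X$ into orbits. The crucial point is that for an $\omega$-categorical structure there are only finitely many \emph{distinct} orbit types (open subgroups up to conjugacy), so after adjoining $A_0$ as constants there are only finitely many isomorphism classes of orbits; the possibly-infinitely-many orbits of $X$ therefore sort into finitely many classes, each class being an orbit type $O$ appearing with some cardinal multiplicity $\kappa_O$. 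Passing to the expansion $\struct{A}\sqcup A_0$ (which preserves $\omega$-categoricity and $\omega$-stability), I may assume $X$ is equivariant and write $X \cong \bigsqcup_{j} \kappa_j \times O_j$ with each $O_j$ a single-orbit equivariant definable set and each $\kappa_j$ a classical cardinal.

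Next I would use that $\mathcal{K}^{(-)}$ turns coproducts into products: $\mathcal{K}^X \cong \prod_j \mathcal{K}^{\kappa_j \times O_j}$. Since this is a \emph{finite} product (finitely many orbit classes), it coincides with a finite coproduct of vector spaces, and a basis of a finite direct sum is the disjoint union of bases of the summands. By Lemma~\ref{l:orbit:cardinal:space}, each factor $\mathcal{K}^{\kappa_j \times O_j}$ has an equivariant basis $\mathcal{B}(\kappa_j) \times \overline{O_j}$, where $\mathcal{B}(\kappa_j)$ is a (classical, hence bounded-support) basis of $\mathcal{K}^{\kappa_j}$ and $\overline{O_j}$ is the definable Stone–Čech compactification supplied by Theorem~\ref{t:stone:cech:compactification}. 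Taking the disjoint union $\mathcal{B} = \bigsqcup_j \mathcal{B}(\kappa_j)\times\overline{O_j}$ gives a basis of $\mathcal{K}^X$. Each piece is equivariant and the index set is finite, so $\mathcal{B}$ is equivariant in $\struct{A}\sqcup A_0$, i.e.\ $A_0$-equivariant and of bounded support in the original structure.

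For the refinement to the \emph{definable} case, I would observe that if $X$ is $A_0$-definable then it has \emph{finitely many} orbits, so every multiplicity $\kappa_j$ is finite; then each $\mathcal{B}(\kappa_j)$ may be taken to be the finite standard basis of $\mathcal{K}^{\kappa_j}$, and $\mathcal{B}(\kappa_j)\times\overline{O_j}$ is a finite disjoint union of copies of the definable set $\overline{O_j}$, hence definable. A finite disjoint union of $A_0$-definable sets is $A_0$-definable, so $\mathcal{B}$ is $A_0$-definable, recovering the sharper conclusion (consistent with Theorem~\ref{t:free:space}, which is the one-orbit-multiplicity-one instance).

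The main obstacle I anticipate is purely organisational rather than deep: making the decomposition $X \cong \bigsqcup_j \kappa_j\times O_j$ rigorous and genuinely \emph{finite} in the number of orbit types. This requires invoking $\omega$-categoricity to bound the number of open subgroups of $\aut{\struct{A}\sqcup A_0}$ up to conjugacy (equivalently, finitely many $1$-types, hence finitely many transitive actions up to isomorphism), and checking that grouping orbits by isomorphism type is compatible with the group action so that the resulting basis is genuinely equivariant and not merely a basis of the underlying classical space. Once that bookkeeping is in place, the algebra (finite products equal finite coproducts, bases add up) and the two cited results do all the real work; the remaining verification that $\mathcal{B}$ spans and is independent reduces factorwise to Lemma~\ref{l:orbit:cardinal:space}.
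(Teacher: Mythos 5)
Your overall strategy---decompose $X$ into finitely many orbit types with cardinal multiplicities, apply Lemma~\ref{l:orbit:cardinal:space} factorwise, and glue the bases along the finite product-equals-coproduct identification---is exactly the paper's proof, and everything downstream of the decomposition is correct. The gap is in the decomposition itself, and it is not merely organisational. You read ``bounded support'' as ``there is a finite $A_0$ such that $X$ is $A_0$-equivariant'', but that is just the statement that $X$ is a symmetric set; the hypothesis the theorem actually uses is that the \emph{elements} of $X$ have supports of uniformly bounded size $b$. Consequently your key claim---``for an $\omega$-categorical structure there are only finitely many open subgroups up to conjugacy''---is false: already for pure sets the pointwise stabilisers $\aut{\struct{N}}_{(\{1,\dotsc,k\})}$ for $k=1,2,3,\dotsc$ are pairwise non-conjugate open subgroups, so an equivariant set can have infinitely many pairwise non-isomorphic orbits. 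What $\omega$-categoricity (together with elimination of imaginaries) gives is finiteness of the orbit types \emph{among orbits whose elements have support of size at most $b$}: each such orbit embeds as an equivariant orbit of $A^{b}$, of which there are finitely many. This is precisely where the bounded-support hypothesis enters the paper's argument, and without it your proof would also ``establish'' that $\mathcal{K}^{\mathcal{P}_{\mathit{fin}}(A)}$ has a basis---which the paper explicitly poses as an open problem.

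Once the decomposition $X \cong \coprod_{j} \kappa_j \times O_j$ into \emph{finitely many} orbit types is justified via the uniform bound $b$, the remainder of your argument, including the refinement for definable $X$ (where each $\kappa_j$ is finite and the basis is a finite union of definable sets), coincides with the paper's proof.
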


Nonetheless, we do not know if every vector space of the form $\mathcal{K}^X$ (where $X$ is not necessarily definable $X$) in $\classifying{A}$ for $\omega$-categorical $\omega$-stable structure has a basis. The following problem is crucial for answering this question.

\begin{problem}[Dual basis of $\mathcal{P}_\mathit{fin}(A)$]
Let $A$ be the set of atoms in $\classifying{A}$ for $\omega$-categorical $\omega$-stable structure $\struct{A}$ and denote by $\mathcal{P}_\mathit{fin}(A)$ the set of finite (finitely supported) subsets of $A$. Does the vector space $\mathcal{K}^{\mathcal{P}_\mathit{fin}(A)}$ have a basis?
\end{problem}
We do not know the answer even in case $\mathcal{K} = 2$ and $\struct{A}$ is the stucture of pure sets from Example~\ref{e:pure:sets}.

We close these considerations by proving that for vector spaces $V$ and $W$ with $A_0$-definable basis the space of linear functions $V \multimap W$ has an $A_0$-definable basis.

\begin{theorem}[Vector spaces over definable sets are monoidaly closed]\label{t:monoidal:closed}
Let $X$ and $Y$ be $A_0$-definable sets in $\classifying{A}$ for $\omega$-categorical $\omega$-stable structure $\struct{A}$. The space of linear functions $\mathit{Lin}(F(X), F(Y))$ from $F(X)$ to $F(Y)$ has an $A_0$-definable basis.    
\end{theorem}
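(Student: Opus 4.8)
The plan is to reduce the statement to the dual-basis results already established for function spaces $\mathcal{K}^{(-)}$ and then to cut out the relevant subspace. First I would use the universal property of the free vector space: a linear map out of $F(X)$ is determined by its values on the basis $X$, so $\mathit{Lin}(F(X), W) \cong W^X$ for any $W$, naturally in $W$; with $W = F(Y)$ this gives $\mathit{Lin}(F(X), F(Y)) \cong F(Y)^X$. Since $F(Y) \subseteq \mathcal{K}^Y$ is the subspace of finitely supported functions, a map $\phi$ is recorded by the function $g_\phi \in \mathcal{K}^{X \times Y}$ with $g_\phi(x,y)$ the $y$-coordinate of $\phi(x)$, and $\phi \mapsto g_\phi$ identifies $\mathit{Lin}(F(X), F(Y))$ with the subspace
\[ W = \{\, g \in \mathcal{K}^{X\times Y} : \text{for every } x\in X,\ \{y\in Y : g(x,y)\neq 0\}\ \text{is finite}\,\}. \]
The product $X\times Y$ is $A_0$-definable, so by Theorem~\ref{t:stone:cech:compactification} and Theorem~\ref{t:free:space} the ambient space $\mathcal{K}^{X\times Y}$ is free on $\overline{X\times Y}$, with the concrete basis of nice isolated sets $\{\lambda_p\}_{p\in\overline{X\times Y}}$ supplied by Lemma~\ref{l:nice:exists}. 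It therefore remains to exhibit an $A_0$-definable basis of the subspace $W$ inside $F(\overline{X\times Y})$.

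For the candidate basis I would take exactly those nice isolated sets whose underlying subset of $X \times Y$ lies in $W$, i.e.~$B = \{p \in \overline{X\times Y} : \lambda_p \in W\}$, equivalently the set of $p$ admitting a member all of whose vertical ($x$-)sections are finite. The condition defining $W$ is invariant under the action of $\aut{\struct{A} \sqcup A_0}$ and the assignment $p\mapsto\lambda_p$ is equivariant, so $B$ is an $A_0$-equivariant subset of $\overline{X\times Y}$; as $\overline{X\times Y}$ has finitely many orbits, $B$ is $A_0$-definable. Linear independence of $\{\lambda_p : p\in B\}$ is then immediate, being a subfamily of the linearly independent family of all nice isolated sets (Lemma~\ref{l:independent}).

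The substance is the spanning claim: every $g \in W$ is a finite $\mathcal{K}$-combination of $\lambda_p$ with $p \in B$. Here I would re-run the induction of Lemma~\ref{l:spans} while staying inside $W$. Starting from the finitely many nonzero level sets $\phi_i = g^{-1}[r_i]$, each of which lies in $W$ because $g$ does, the induction peels off, at the top limit-depth $k$, the finite set of ultrafilters $p_1,\dots,p_s \in \overline{X\times Y}_{(k)}$ containing the current set $\phi_j$ and subtracts $\lambda_{p_1}+\cdots+\lambda_{p_s}$. The point to verify is that each such $p_\ell$ lies in $B$: since $\phi_j\in p_\ell$ and $\phi_j\in W$, the ultrafilter $p_\ell$ contains a set with finite vertical sections, and intersecting the isolating set $\lambda_{p_\ell}$ with $\phi_j$ yields an isolating set for $p_\ell$ (intersection with a member of $p_\ell$ only shrinks the associated clopen neighbourhood, so it still meets $\overline{X\times Y}^{(k-1)}$ in $p_\ell$ alone) which now has finite vertical sections. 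Consequently the remainder $\phi_j - \sum_\ell \lambda_{p_\ell}$ again lies in $W$, the induction descends within $W$, and only nice sets indexed by $B$ are ever used.

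I expect the spanning step to be the main obstacle, for two reasons. First, one must guarantee that reducing inside $W$ never forces the appearance of a nice isolated set with an infinite vertical section; this is precisely what the intersection-with-$\phi_j$ argument secures, but it requires checking that replacing $\lambda_{p_\ell}$ by $\lambda_{p_\ell}\cap\phi_j$ is compatible, orbit by orbit, with the equivariant (uniform) choice underlying Lemma~\ref{l:nice:exists}, so that the result is still a single $A_0$-equivariant family of isolating sets. Second, one should confirm that the limit-depth bookkeeping of Lemma~\ref{l:spans} is unaffected by this replacement, which holds because $\lambda_{p_\ell}\cap\phi_j$ still isolates $p_\ell$ at the same depth. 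Once these points are settled, $\{\lambda_p : p\in B\}$ is an $A_0$-definable basis of $\mathit{Lin}(F(X),F(Y))$, and by Lemma~\ref{l:parameters:in:definable:types} it is cut out by normalised defining formulas, which gives the explicit description of the basis as an $A_0$-equivariant subset of $\overline{X\times Y}$ promised in the list preceding Theorem~\ref{t:free:space}.
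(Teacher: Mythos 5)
Your proposal follows essentially the same route as the paper: identify $\mathit{Lin}(F(X),F(Y))\cong F(Y)^X$ with the subspace of $\mathcal{K}^{X\times Y}\cong F(\overline{X\times Y})$ consisting of functions with finite vertical sections, and take as basis exactly those normalised sets $\lambda_p$ lying in that subspace, which is $A_0$-definable by equivariance and $\omega$-categoricity. The only divergence is in the spanning step --- where you re-run the induction of Lemma~\ref{l:spans} inside $W$, the paper argues that no nontrivial combination of the remaining $\lambda_p$ (those with a nonzero $F(U(Y))$-component) can land in $F(Y)^X$ --- but both arguments turn on the same observation, and yours is, if anything, more explicit about the point that needs verification.
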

\begin{proof}
    It suffices to prove the theorem for equivariant $X, Y$. Because $F(X)$ is free, there is an isomorphism $\mathit{Lin}(F(X), F(Y)) \approx F(Y)^X$. Because $\overline{Y} = U(Y) \sqcup Y$ we have that: $F(\overline{Y})^X \approx F(U(Y) \sqcup Y)^X \approx F(U(Y))^X \times F(Y)^X$. Therefore, $F(Y)^X$ is a closed subspace of $F(\overline{Y})^X$. On the other hand, $F(\overline{Y}) \approx \mathcal{K}^Y$ and so $F(\overline{Y})^X \approx (\mathcal{K}^Y)^X \approx \mathcal{K}^{X \times Y}$. By Theorem~\ref{t:free:space} vector space $\mathcal{K}^{X \times Y}$ has a basis isomorphic to $\overline{X \times Y}$. Explicitly, the basis consists of normalised sets $\lambda_p$ for each type $p$ of $X \times Y$. Notice, however, that any linear combination of these $\mor{\lambda_p}{X}{F(U(Y)) \times F(Y)}$ that have a non-zero component in $F(U(Y))$ must have a non-zero component in $F(U(Y))$, because otherwise the set corresponding to $F(U(Y))$ would be finite contradicting the definition of $U(Y)$. Therefore, the set of these $\lambda_p$ that factors through $F(Y)$ form a basis of $\mathit{Lin}(F(X), F(Y))$.
\end{proof}
For example, the basis of $\mathit{Lin}(F(A), F(A))$ in $\classifying{\mathcal{N}}$ consists of: the identity $\lambda x . x$; the constant functions $\lambda x . a$ for every $a \in A$; the ``singletons'', i.e.~functions that map $a \mapsto b$ for fixed $a, b \in A$ and all other elements $x \neq a$ to $0$.

\subsection{A few notes on unstable theories}
\label{ss:unstable}
We saw in Example~\ref{e:types:in:graphs} that for an unstable theory, the set of (definable) ultrafilters on a definable set need not be definable. Moreover, we saw in Remark~\ref{r:stability:is:necessary} that for an unstable theory even if the set of ultrafilters is definable, it does not have to correspond to a basis of the dual space. These two observations bring at least three questions.
\begin{enumerate}
  \item Does every dual vector space $\mathcal{K}^X$ for $X$ definable in a not necessarily $\omega$-stable theory have a basis? 
\end{enumerate}
The answer is no by Theorem~\ref{t:polar:counterexample} from Appendix~\ref{sec:app:polar}. The theorem shows that this property may not hold even in case of very well-behaved theories. In particular, for every prime number $p$ we may construct an $\omega$-categorical, ultrahomogenous, simple structure $\struct{G}_{\mathcal{F}_p}$ such that $\mathcal{F}_p^{X}$ does not have a basis in $\classifying{\struct{G}_{\mathcal{F}_p}}$ for definable $X$. 
Nonetheless, in Remark~\ref{r:stability:is:necessary} we show that $\mathbf{K}^Q$ has a definable basis in $\classifying{\struct{Q}}$ for unstable structure $\struct{Q}$ of rational numbers with their natural ordering (DLO). But what about $\mathbf{K}^X$ for other definable sets in $\classifying{\struct{Q}}$?
\begin{enumerate}
    \setcounter{enumi}{1}
    \item Does every dual vector space $\mathcal{K}^X$ for $X$ definable in DLO have a basis?
\end{enumerate}
Theorem~\ref{t:rational:basis} shows that for every definable $X$ in DLO the dual vector space $\mathcal{K}^X$ has a definable basis. Note however, that Theorem~\ref{t:definable:types:dlo} tells us that for every definable set $X$ in DLO, its Stone-Čech compactification is also definable. Therefore, one may wonder the following. 
\begin{enumerate}
    \setcounter{enumi}{2}
    \item Does every dual vector space $\mathcal{K}^X$ for $X$ definable in a not necessarily $\omega$-stable theory have a basis on condition $\overline{X}$ is definable?
\end{enumerate}

We do not know the answer to the question, but we suspects that the answer is affirmative.
\begin{conjecture}
    Let $\struct{A}$ be an $\omega$-categorical structure. The following are equivalent:
    \begin{itemize}
        \item $\struct{A}$ is NIP
        \item for every definable set $X$ the vector space $\mathbf{K}^X$ has a definable basis in $\classifying{\struct{A}}$
        \item for every definable set $X$ its Stone-Čech compactification $\overline{X}$ is definable in $\classifying{\struct{A}}$
    \end{itemize}
\end{conjecture}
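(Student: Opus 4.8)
The plan is to establish the cycle $(1)\Rightarrow(3)\Rightarrow(2)\Rightarrow(1)$, reusing the stable machinery of Section~\ref{sec:ultrafilter:monad} and Section~\ref{sec:definable:spaces} but replacing definability of types (which is peculiar to the stable case) by the weaker tameness afforded by finite VC dimension. The three families of examples already in hand fix the shape of the argument and confirm that NIP is the expected dividing line: the random graph of Example~\ref{e:types:in:graphs} shows how the Independence Property forces infinitely many orbits of ultrafilters and hence the failure of $(3)$; the structures $\struct{G}_{\mathcal{F}_p}$ of Theorem~\ref{t:polar:counterexample} witness the failure of $(2)$; and DLO, via Remark~\ref{r:stability:is:necessary} together with Theorem~\ref{t:rational:basis} and Theorem~\ref{t:definable:types:dlo}, shows that both $(2)$ and $(3)$ can hold in a genuinely unstable but NIP setting.

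For $(3)\Rightarrow(2)$ I would adapt the proof of Theorem~\ref{t:free:space}. Assuming $\overline{X}$ is definable, the limit-point filtration $\overline{X}\supseteq\overline{X}'\supseteq\overline{X}''\supseteq\cdots$ is a descending chain of $A_0$-equivariant sets, so by $\omega$-categoricity it stabilises; the difference from the stable case is that I no longer obtain $\overline{X}^{(n)}=\emptyset$ for free, since that vanishing came from $\omega$-stability through Lemma~\ref{l:binary:tree}. Instead I would extract a definable basis as a definable \emph{subset} of $\overline{X}$ rather than all of $\overline{X}$, exactly as the DLO computation in Remark~\ref{r:stability:is:necessary} produces the basis $\{1,p^{*},p^{<}\}\subsetneq\overline{Q}$. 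Concretely, I would retain the nice isolated sets $\lambda_y$ of Lemma~\ref{l:independent} and Lemma~\ref{l:spans} as candidate basis vectors and show, using only definability of $\overline{X}$, that a definable sub-collection of them is simultaneously linearly independent and spanning: the linear-independence half should survive essentially verbatim, while the spanning half must be re-proved without invoking the vanishing of the filtration.

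The implication $(1)\Rightarrow(3)$ is the model-theoretic heart, and I expect it to be the main obstacle. Here I would replace the Cherlin--Harrington--Lachlan support bound of Lemma~\ref{l:parameters:in:definable:types}, which rests on finite Morley rank, by a uniform bound coming from finite VC dimension: for each formula $\phi(x,\overline{y})$ the Sauer--Shelah lemma bounds the number of $\phi$-types over a finite set polynomially, and I would try to promote this, through $\omega$-categoricity, to the statement that every finitely supported ultrafilter on a definable $X$ is supported by a bounded number of parameters, whence $S_n(A)$ has finitely many $\aut{\struct{A}}$-orbits exactly as in the proof of Theorem~\ref{t:stone:cech:compactification}. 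The delicate step --- and the reason this remains a conjecture --- is that passing from polynomial growth of type-counts to an \emph{absolute} bound on support size is not automatic in a merely NIP structure; it amounts to a finiteness-of-dp-rank statement, which one would have to derive from $\omega$-categoricity together with NIP and which touches open questions in the theory of NIP structures.

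Finally, for the contrapositives $\neg(1)\Rightarrow\neg(3)$ and $\neg(1)\Rightarrow\neg(2)$ I would argue uniformly from an abstract IP witness: if $\struct{A}$ has the Independence Property then some formula $\phi(x,\overline{y})$ shatters arbitrarily large finite sets, and I would use this shattering configuration to embed into the space of ultrafilters, reproducing the infinite family of orbits constructed for the random graph in Example~\ref{e:types:in:graphs} (which kills $(3)$) and the no-basis phenomenon isolated in Theorem~\ref{t:polar:counterexample} (which kills $(2)$). These two directions should be the most routine, since the required counterexamples already exist in the paper and need only be regenerated from a generic IP witness rather than from the specific structures; together with $(1)\Rightarrow(3)\Rightarrow(2)$ they close the equivalence.
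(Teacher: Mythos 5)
This statement is an open conjecture in the paper: the author explicitly writes ``We do not know the answer to the question'' and offers no proof, so there is nothing to compare your argument against except the partial evidence already assembled in the text (Example~\ref{e:types:in:graphs}, Remark~\ref{r:stability:is:necessary}, Theorems~\ref{t:rational:basis}, \ref{t:definable:types:dlo} and \ref{t:polar:counterexample}). Your proposal is a research plan rather than a proof, and you say so yourself; the honest thing to record is that each of the three arrows you sketch still has a genuine hole. For $(1)\Rightarrow(3)$ the hole is exactly where you locate it: the proof of Theorem~\ref{t:stone:cech:compactification} uses $\omega$-categoricity alone to identify symmetric ultrafilters with definable types, but the finiteness of orbits comes entirely from Lemma~\ref{l:parameters:in:definable:types}, i.e.\ from the Cherlin--Harrington--Lachlan two-parameter normalised defining formula, which is a finite-Morley-rank fact. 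Sauer--Shelah bounds the number of $\phi$-types over a \emph{finite} parameter set; it gives no bound on the number of parameters needed to define a definable type over the whole model, and without such a bound you cannot conclude that $S_n(A)$ has finitely many orbits. Nothing in the paper supplies a substitute, and the DLO case is handled by a completely ad hoc induction (Lemma~\ref{l:decomposition:dlo}) that exploits quantifier elimination for orders, not any abstract NIP property.

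The other two arrows are weaker than you suggest. For $(3)\Rightarrow(2)$, the spanning argument of Lemma~\ref{l:spans} is an induction on the level $k$ of the limit-point filtration whose base case is precisely the vanishing $\overline{X}^{(n)}=\emptyset$ delivered by Lemma~\ref{l:binary:tree}; if the filtration stabilises at a nonempty set, the induction has no foundation, and the DLO basis of Theorem~\ref{t:rational:basis} is \emph{not} obtained as a definable subset of $\overline{Q^{<n}}$ picked out by isolation --- it is a hand-built family of truncated hyperrectangles whose independence is proved by an external Lebesgue-measure dimension argument. So ``retain the $\lambda_y$ and re-prove spanning'' is not a strategy the existing machinery supports. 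Finally, calling $\neg(1)\Rightarrow\neg(2)$ routine is over-optimistic: Theorem~\ref{t:polar:counterexample} depends on the intersection property of algebraically closed supports, on the bilinear form, and on the characteristic-$p$ cancellation $\sum_{l\in L}f_l\equiv 0$ across $p+1$ lines with pairwise trivial common support --- none of which is produced by an abstract formula with the independence property. A generic IP witness gives you a shattered family, which plausibly yields $2^{\aleph_0}$ many orbits of symmetric ultrafilters and hence $\neg(3)$ along the lines of Example~\ref{e:types:in:graphs}, but extracting the no-basis phenomenon for $\mathcal{K}^X$ from it is an open problem, not a corollary.
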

For the definition and basic properties of NIP theories see: \cite{simon2015guide}.
\subsubsection{On dense linear orderings}
Let us recall that theory DLO from \ref{e:rationals} has quantifier elimination, therefore every formula is a finite disjoint disjunction of conjunctions of atomic formulas. The atomic formulas are of the form $x < y$ or $x = y$, where $x, y$ can be either variables or parameters from $\mathcal{Q}$. By the above, a single $A_0$-orbit is a conjunction of formulas of the form $x_i < q$, $x_i > q$ or $x_i = q$ for $q \in A_0$ -- i.e.~the set defined by a single-orbit formula is a hyperrectangle with possibly infinite sides restricted to the half-hyperspace $x_1 < x_2 < \dotsc < x_n$. We would like associate with formulas of DLO an invariant like we did for stable theories, but the usual construction would not work here (i.e.~every infinite formula in DLO has an infinite Morley rank). In fact, there is no general theory of dimension for unstable theories. In the particular case of DLO, one could develop the notion of dimension through the machinery of Thorn-forking, but for our applications it suffices to define an ad-hoc notion of dimension in the following way.

Let us consider a slightly bigger model of DLO than $Q$, namely the set of real numbers $\mathcal{R}$ together with their natural ordering. Then we say that a formula $\phi$ with parameters from $Q$ is $n$-dimensional if and only if the set $\phi(R)$ has a non zero $n$-dimensional Lebesgue measure and is of measure zero according to $n+1$-dimensional Lebesgue measure. From the definition, we have that if $\phi$ is $n$ dimensional, than it cannot be a union of finitely (even countably!) many formulas of dimension $k < n$. We shall write $\mathit{dim}(\phi)$ for the dimension of $\phi$.
 
Fix $n$. Every weakly increasing sequence $q_1 \leq q_2 \leq \dotsc \leq q_n$ of $n$ rational numbers extended with $\infty$ defines an infinite $n$-dimensional hyperrectangle:
$$H(\overline{q}) = \{\tuple{x_1, x_2, \cdots, x_n} \colon x_1 < q_1 \land x_2 < q_2 \land \dotsc \land x_n < q_n \}$$
Moreover, for every choice $C$ of $n-k$ variables we have a $k$-dimensional hyperrectangle defined as:
$$H^C(\overline{q}) = \{\tuple{x_1, x_2, \cdots, x_n} \colon \forall_{x_i \in C}\; x_i < q_i \land \forall_{x_i \not\in C} x_i = q_i\}$$
We shall write $T^C(\overline{q})$ for the $k$-dimensional truncated hyperrectangle $H^C(\overline{q}) \cap A^{<n}$. Observe that the dimensions of hyperrectangles agree with the dimensions of their defining formulas. 
Let us denote by $Q^{<n}$ the set $\{\tuple{x_1, x_2, \dotsc, x_n} \in Q^n \colon  x_1 < x_2 < \dotsc < x_n\}$.
\begin{remark}\label{r:intersect:dim}
    If $T^C(\overline{q})$ and $T^{C'}(\overline{q'})$ are $k$-dimensional and $C \neq C'$ then the dimension of $T^C(\overline{q}) \cap T^{C'}(\overline{q'})$ is strictly smaller than $k$. This is because if $C \neq C'$ then there must be $x_i \in C' \land x_i \not\in C$ and then $T^C(\overline{q}) \cap T^{C'}(\overline{q'}) \subseteq T^{C \cup \{x_i\}}(\overline{s})$, where $\overline{s}$ is just $\overline{q}$ with $q_i$ substituted with $q'_i$. 
\end{remark}

\begin{lemma}[Truncated hyperrectangles are linearly independent]\label{l:hyper:independent}
The sets $T^C(\overline{q})$ are linearly independent in $\mathcal{K}^{Q^{<n}}$.
\end{lemma}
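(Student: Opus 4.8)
The plan is to prove the contrapositive of linear dependence: assume a finite $\mathcal{K}$-linear combination $\sum_{j} a_j\, T^{C_j}(\overline{q^j}) = 0$ holds as a function on $Q^{<n}$, with the data $(C_j, \overline{q^j})$ pairwise distinct (so that the sets themselves are pairwise distinct), and deduce that every $a_j = 0$. The model case to keep in mind is $n=1$: on $Q$ the truncated hyperrectangles are exactly the down-rays $(-\infty,q) = T^{\{1\}}((q))$ and the singletons $\{q\} = T^{\emptyset}((q))$; a down-ray indicator drops by $1$ precisely at $q$, so a vanishing combination forces every jump, hence every ray coefficient, to vanish (read the locally constant step function off rationals interior to the gaps between the finitely many thresholds), and the leftover point masses then force the singleton coefficients to vanish. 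This is the DLO analogue, for half-open sets, of the single-point isolation used in Lemma~\ref{l:independent}; the essential difference is that down-rays are detected by \emph{jumps} rather than by a point at which only one set is nonzero.

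The main step is an induction on the number of variables $n$ that lifts this one-dimensional jump/atom dichotomy. Fix an arbitrary lower tuple $x_1 < \dots < x_{n-1}$ in $Q^{<(n-1)}$ and regard the vanishing combination as a function $g$ of the last coordinate $x_n$ ranging over $(x_{n-1}, +\infty)$. Each box splits according to the role of coordinate $n$: if $n \in C_j$ the $x_n$-slice contributes a down-step $a_j\,\mathbf{1}_{\mathrm{low}_j}\,\mathbf{1}_{x_n < q^j_n}$, where $\mathbf{1}_{\mathrm{low}_j}$ is the indicator, in the fixed lower coordinates, of the truncated hyperrectangle $T^{C_j \cap \{1,\dots,n-1\}}((q^j_1,\dots,q^j_{n-1}))$ on $Q^{<(n-1)}$; if $n$ is a pinned index of box $j$ it contributes a point mass $a_j\,\mathbf{1}_{\mathrm{low}_j}\,\mathbf{1}_{x_n = q^j_n}$. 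Thus $g$ is a step function plus finitely many spikes. Evaluating $g$ at rationals in the open gaps between the finitely many thresholds $q^j_n$ shows the step part vanishes identically, hence each of its jumps vanishes; evaluating at the threshold values then shows each aggregate of point masses vanishes. Since the lower tuple was arbitrary, we obtain for every threshold $c$ (including $c=\infty$, read off the constant tail as $x_n \to \infty$) the two functional identities $\sum_{j:\,n\in C_j,\ q^j_n = c} a_j\,\mathbf{1}_{\mathrm{low}_j} = 0$ and $\sum_{j:\,n\ \mathrm{pinned},\ q^j_n = c} a_j\,\mathbf{1}_{\mathrm{low}_j} = 0$ on $Q^{<(n-1)}$.

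Each of these is again a vanishing combination of truncated hyperrectangles, now in $n-1$ variables, so the induction hypothesis applies once I verify that the $\mathbf{1}_{\mathrm{low}_j}$ occurring in a single identity are pairwise distinct: two boxes sharing the value $c$ in coordinate $n$, the same role of $n$, and the same lower box would satisfy identical constraints and hence define the same set, contradicting distinctness. By induction the coefficients in each identity are zero, and since every $a_j$ occurs in exactly one such identity, all $a_j = 0$.

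I expect the main obstacle to be organisational rather than conceptual: one must keep the ``jump part versus atom part'' bookkeeping honest (handling the threshold $c=\infty$ for unbounded free coordinates, and checking that nonempty slices exist once $x_{n-1}$ is small enough that $q^j_n > x_{n-1}$), and one must confirm that distinctness of the original sets descends to distinctness of the reduced lower boxes inside each group. This route is preferable to a single ``corner'' functional isolating one box by an inclusion--exclusion over the $2^{|C|}$ vertices of a small cube at its corner, because such a cube may straddle the diagonal and leave the strict-order region $Q^{<n}$ precisely when the corner $\overline{q}$ sits on a wall $q_i = q_{i+1}$; varying one coordinate at a time inside its open interval avoids this degeneracy entirely.
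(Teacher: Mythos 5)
Your argument is correct, but it takes a genuinely different route from the paper's proof. You induct on the number of variables $n$: fixing a lower tuple $x_1 < \dots < x_{n-1}$ and reading the putative dependence as a function of $x_n$ alone, you recover from its jumps and its point values one vanishing combination of $(n-1)$-variable truncated hyperrectangles per threshold and per role of the last coordinate, and then invoke the inductive hypothesis. The two points you flag are indeed the ones that need checking, and both go through: the identities a priori arise only for thresholds $c > x_{n-1}$, but for a nonempty box with $q_n = c$ the lower indicator already vanishes whenever $x_{n-1} \ge c$ (otherwise the box itself would be empty), so each identity extends to all of $Q^{<(n-1)}$; and two boxes in the same group with equal lower sets and the same last-coordinate constraint would coincide as subsets of $Q^{<n}$, contradicting distinctness, so the inductive hypothesis applies term by term. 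The paper instead inducts on its ad-hoc notion of dimension (defined via Lebesgue measure after passing from $Q$ to $\mathbb{R}$) together with the lexicographic order on parameter tuples: given a dependence, it isolates a single box $T^C(\overline{q})$ of maximal dimension and lex-maximal tuple for its index set $C$, and uses Remark~\ref{r:intersect:dim} to produce a point of $T^C(\overline{q})$ lying in none of the other boxes of the combination, whence $1=0$. Your route is more elementary and self-contained --- no embedding into the reals and no measure-theoretic dimension --- and it meshes with the recursive product structure of $Q^{<n}$, at the cost of heavier bookkeeping; the paper's route kills one coefficient at a time with a single well-chosen evaluation point, at the cost of setting up the auxiliary dimension machinery.
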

\begin{proof}
The proof proceeds by induction over the dimension $k$ of truncated hyperrectangles and sequences $\overline{q}$ with the lexicographical order. For $k=0$ the lemma is obvious. So let us assume $k > 0$. Consider any $T^C(\overline{q})$ of dimension $k$. Denote by $S$ the space spanned by all $T^C(\overline{q'})$ for $\overline{q'}$ strictly smaller than $\overline{q}$ in the lexicographical order, i.e. $S = \mathit{span}(\{T^C(\overline{q'}) \colon \overline{q'} < \overline{q}\})$. Denote by $V$ the space spanned by all $k-1$-dimensional truncated hyperrectangles together with $k$-dimensional truncated hyperrectangles with $T^{C_i}(\overline{q'})$ for $C_i \neq C$. We claim that $T^C(\overline{q}) \not\in S \oplus V$. For contradiction, suppose that $T^C(\overline{q}) \in S \oplus V$, what means that there are some tuples $\overline{q^1} < \overline{q^2} < \dotsc < \overline{q^k} < \overline{q}$ such that $T^C(\overline{q}) = \sum_{i=1}^a r_i T^C(\overline{q^i}) + \sum_{i=1}^b s_i T^{D_i}(\overline{{q'}^i})$, where $\mathit{dim}(T^{D_i}(\overline{{q'}^i})) \leq k$ and $D_i \neq C$. On the other hand the set $P = \{\overline{p} \colon \forall_{x_i \in C}\; q^k_i < p_i < q_i \land \forall_{x_i \not\in C}\; p_i = q_i\} \subseteq T^C(\overline{q})$ has dimension $k$ (i.e.~has a non-zero $k$-dimensional Lebesgue measure). Therefore, $T^C(\overline{q})$ and $\bigcup_{i=1}^a T^{C}(\overline{q_i})$ differ on a set of dimension $k$. Moreover, by Remark~\ref{r:intersect:dim} above, the intersection:
$$I = T^C(\overline{q}) \cap \bigcup_{i=1}^b T^{D_i}(\overline{{q'}^i}) =  \bigcup_{i=1}^b T^C(\overline{q}) \cap T^{D_i}(\overline{{q'}^i})$$
has dimension strictly smaller than $k$. Therefore $P \setminus I$ is non-empty and picking any $\overline{p} \in P \setminus I$ leads to the contradiction:     
    $$1 = T^C(\overline{q})(\overline{p}) = \sum_{i=1}^k r_i T^C(\overline{q_i})(\overline{p}) + \sum_{i=1}^b s_i T^{D_i}(\overline{{q'}^i})(\overline{p}) = 0$$
what completes the inductive step.
\end{proof}

\begin{lemma}[Truncated hyperrectangles span $\mathcal{K}^{Q^{<n}}$]\label{l:hyper:spans}
The sets $T_k(\overline{q})$ span $\mathcal{K}^{Q^{<n}}$.
\end{lemma}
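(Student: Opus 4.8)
The plan is to show that every finitely supported function $\mor{f}{Q^{<n}}{\mathcal{K}}$ lies in the span of the truncated hyperrectangles $T^C(\overline{q})$, mirroring the strategy of Lemma~\ref{l:hyper:independent} but now working \emph{downwards} in dimension. First I would observe, exactly as in Lemma~\ref{l:spans}, that since $\mathcal{K}$ is classical and $Q^{<n}$ has finitely many orbits, any finitely supported $f$ takes only finitely many values and so decomposes as a finite $\mathcal{K}$-linear combination $f = \sum_i r_i \phi_i$ of characteristic functions of definable subsets $\phi_i \subseteq Q^{<n}$. Thus it suffices to prove that the characteristic function of every definable subset of $Q^{<n}$ is in the span of the $T^C(\overline{q})$.

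By quantifier elimination in DLO (recalled before Lemma~\ref{l:hyper:independent}), every such definable set is a finite disjoint union of single-orbit cells, each of which is an (infinite or finite-sided) hyperrectangle cut out by conditions $x_i < q$, $x_i > q$, $x_i = q$ restricted to the half-hyperspace $x_1 < \dotsb < x_n$. So I would further reduce to showing that the characteristic function of a single such cell is a finite integer combination of the $T^C(\overline{q})$. The key computational tool is inclusion–exclusion: a bounded box $\{p \leq x_i \leq q\}$ can be written as a signed sum of the ``lower corners'' $H(\overline{q})$ (equivalently the $T^C$'s), since $T^C(\overline{q})$ is by definition the down-set $\{x_i < q_i\}$ truncated to $Q^{<n}$, and finite differences of down-sets recover boxes. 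Half-open and degenerate (equality) faces are handled by the lower-dimensional $T^C(\overline{q})$ with a strictly smaller coordinate set $C$, which is precisely why the family is indexed by all subsets $C$ of variables and not merely the top-dimensional ones.

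Concretely, I would run an induction on the dimension $k$ of the cell. In the top-dimensional case the cell is (up to the $Q^{<n}$ truncation) an axis-aligned box $\prod_i (p_i, q_i)$; writing each coordinate interval as $(-\infty,q_i) \setminus (-\infty,p_i]$ and expanding the product gives $\chi_{\text{box}} = \sum_{J} (-1)^{|J|} \chi_{D_J}$, where each $D_J$ is a lower-corner region whose characteristic function is $T^{C}(\overline{s})$ for the appropriate $C$ and shifted sequence $\overline{s}$; the endpoints $p_i$ (closed faces) contribute equality constraints that drop into lower-dimensional $T^C$'s with $x_i \notin C$, which are available by the inductive hypothesis. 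Summing over the finitely many cells of the original definable set, and over the finitely many values $r_i$, expresses $f$ as a finite $\mathcal{K}$-linear combination of truncated hyperrectangles.

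The main obstacle I anticipate is purely bookkeeping rather than conceptual: carefully matching the boundary behaviour (strict versus non-strict inequalities, and the interaction of the box faces with the global ordering constraint $x_1 < \dotsb < x_n$) so that the inclusion–exclusion expansion lands exactly on the sets $T^C(\overline{q})$ as \emph{defined} in the paper, with the right choice of $C$ and $\overline{q}$ in each term. One must check that reducing a closed face to an equality $x_i = q_i$ genuinely yields a truncated hyperrectangle of strictly smaller dimension (so that the induction is well-founded) and that no spurious terms of dimension $> k$ appear. Once the single-box case is verified, linearity and the finiteness of the orbit/value decomposition close the argument, and together with Lemma~\ref{l:hyper:independent} this establishes that the $T^C(\overline{q})$ form a basis of $\mathcal{K}^{Q^{<n}}$.
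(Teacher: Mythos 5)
Your proposal is correct and follows essentially the same route as the paper: reduce via $\omega$-categoricity and quantifier elimination to single-orbit hyperrectangular cells, then express each cell as a signed (inclusion--exclusion) combination of the down-set hyperrectangles $H^C(\overline{q})$, handling closed/equality faces with lower-dimensional $T^C$'s, and finally intersect with the half-hyperspace $x_1 < \dotsb < x_n$. The paper phrases the inclusion--exclusion step as iterated ``flipping'' of a single constraint $x_i < q_i$ to $x_i \geq q_i$ via $H^C(\overline{q'}) - H^C(\overline{q})$, which is exactly one step of your expansion.
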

\begin{proof}
Let $\phi \subseteq A^{<n}$ be $A_0$-supported. By $\omega$-categoricity it can be written as a disjoint union of its $A_0$-orbits. Therefore, it suffices to show that every $A_0$-supported orbit can be obtained as a linear combination of $T^C(\overline{q})$. By quantifier elimination, a single orbit is a conjunction of formulas of the form $x_i < q$, $x_i > q$ or $x_i = q$ for $q \in A_0$ -- i.e.~the set defined by a single-orbit formula is a hyperrectangle with possibly infinite sides restricted to the half-hyperspace $x_1 < x_2 < \dotsc < x_n$. The fact that we can obtain any hyperrectangle from hyperrectangles of the form $H^C(\overline{q})$ for arbitrary $\overline{q}$ (i.e.~not necessarily weakly increasing) is classic, but the exact formula is clumsy and depends on the characteristic of the field $\mathcal{K}$. First observe that we can ``flip'' any $H^C(\overline{q})$ by replacing a constraint $x_i < q_i$ with $x_i \geq q_i$ for any $x_i \in C$ in the following way: if we substitute $q_i$ in $\overline{q}$ with $\infty$ to obtain $\overline{q'}$ then because $H^C(\overline{q'}) - H^C(\overline{q}) = H^C(\overline{q'}) \cap \neg H^C(\overline{q})$ and $\neg H^C(\overline{q}) = \{\overline{x} \colon \exists_{x_j \in C}\;  x_j \geq q_j \lor \exists_{x_j \not\in C}\; x_j \neq q_j\}$ we have that:
$$H^C(\overline{q'}) - H^C(\overline{q})  = \{\overline{x} \colon \forall_{x_j \in C \setminus \{x_i\}}\; x_j < q_j \land x_i \geq q_i \land \forall_{x_j \not\in C}\; x_j = q_j\}\}$$   
For a $k$-dimensional hyperrectangle $H$ we have to add/subtract $H^C(\overline{q})$ for all $\overline{q}$ located at the corners of $H$ and then supply them with possibly missing $k-1$-dimensional faces. Moreover, $X \cap x_1 < x_2 < \dotsc < x_n$ is defined by the same hyperrectangles but intersected with $x_1 < x_2 < \dotsc < x_n$, what completes the proof.    
\end{proof}

\begin{theorem}[Dual basis in DLO]\label{t:rational:basis}
    Let $X$ be a definable set in the Ordered Fraenkel-Mostowski Model of Set Theory with Atoms. Then for any field $\mathcal{K}$ the vector space $\mathcal{K}^X$ has a definable basis.
\end{theorem}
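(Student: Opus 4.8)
The plan is to reduce everything to the building block $Q^{<k}$ already treated in Lemma~\ref{l:hyper:independent} and Lemma~\ref{l:hyper:spans}. Exactly as in the proof of Theorem~\ref{t:stone:cech:compactification}, after adjoining the finitely many parameters $A_0$ (which, by the transfer principle, changes neither the ambient category nor quantifier elimination), I may assume that $X$ is an equivariant subset of $Q^n$. By $\omega$-categoricity I would first split $X = \bigsqcup_{i=1}^{r} O_i$ into its finitely many orbits, so that $\mathcal{K}^X \approx \prod_{i=1}^{r}\mathcal{K}^{O_i}$. Since this is a \emph{finite} product of vector spaces, the disjoint union of definable bases of the factors (each basis vector extended by zero) is a definable basis of $\mathcal{K}^X$; hence it suffices to produce an $A_0$-definable basis of $\mathcal{K}^O$ for a single orbit $O$.

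Next I would read off the shape of $O$. Writing $A_0 = \{a_1 < \dots < a_m\}$, the parameters cut $Q$ into the open intervals $I_0 = (-\infty,a_1),\dots, I_m = (a_m,+\infty)$, each order-isomorphic to $Q$ and so a DLO in its own right. By quantifier elimination the quantifier-free type of a tuple over $A_0$ records only which interval or point each coordinate occupies together with the relative order of coordinates sharing an interval; consequently $O$ is $A_0$-definably isomorphic to a finite product $\prod_j I_j^{<k_j}$, where coordinates pinned to a parameter or forced equal contribute only trivial factors. For a single factor $I_j^{<k_j}$ the proofs of Lemma~\ref{l:hyper:independent} and Lemma~\ref{l:hyper:spans} apply verbatim---they use only the order and quantifier elimination---so the truncated hyperrectangles $T^C(\overline{q})$, with $\overline{q}$ drawn from $I_j$ together with its endpoints, form an $A_0$-definable basis of $\mathcal{K}^{I_j^{<k_j}}$.

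Then I would lift this to the product, taking as candidate basis the \emph{product hyperrectangles} $\prod_j T^{C_j}(\overline{q}^{j})$, that is, indicators of products of truncated hyperrectangles, one per factor. For spanning, the key observation is that every $A_0$-definable subset of $\prod_j I_j^{<k_j}$ is a finite union of definable rectangles $\prod_j U_j$: by quantifier elimination its defining formula is a Boolean combination of atomic formulas, and every atomic relation between coordinates lying in \emph{distinct} intervals (or between a coordinate and a parameter outside its interval) has constant truth value, so only the within-interval conditions survive; a disjunctive normal form then exhibits the set as a finite union of rectangles. Each rectangle indicator factors as $\prod_j \chi_{U_j}$, which I expand through the single-factor bases and resolve into product hyperrectangles by distributivity and inclusion--exclusion. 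Independence is the standard fact that a pointwise product of two linearly independent families is again linearly independent---equivalently that the canonical map $\mathcal{K}^{S}\otimes\mathcal{K}^{T}\to\mathcal{K}^{S\times T}$ is injective---applied inductively across the factors. The product hyperrectangles are manifestly $A_0$-definable, which finishes the orbit case and hence the theorem.

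I expect the product step to be the main obstacle. One must justify carefully that cross-interval order relations are frozen, so that definable subsets of a product genuinely decompose into finitely many definable rectangles, and then verify that the tensor of the single-factor bases is simultaneously spanning and linearly independent while remaining definable. By comparison, the orbit decomposition and the single-factor case are routine once Lemma~\ref{l:hyper:independent} and Lemma~\ref{l:hyper:spans} are in hand.
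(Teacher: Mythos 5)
Your proof is correct and its core is the same as the paper's: decompose $X$ into finitely many orbits, get a truncated-hyperrectangle basis for each orbit from Lemma~\ref{l:hyper:independent} and Lemma~\ref{l:hyper:spans}, and assemble via the finite product $\mathcal{K}^{X}\approx\prod_i\mathcal{K}^{X_i}\approx F(\bigsqcup_i\Lambda_i)$. Where you genuinely diverge is in the non-equivariant case. The paper proves only the equivariant case, where every orbit of $Q^n$ really is isomorphic to some $Q^{<n_i}$, and dismisses the general case as ``analogous''; but once parameters $A_0$ are present an orbit is a \emph{product} $\prod_j I_j^{<k_j}$ over the intervals cut out by $A_0$, which is not of the form $Q^{<m}$ (the intervals $I_j$ are not even isomorphic to $Q$ as symmetric sets, only as abstract orders). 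Your tensor-product step --- cross-interval order relations are frozen, so definable subsets of the orbit decompose into finitely many rectangles, and the pointwise product of the single-factor bases is independent because $\mathcal{K}^S\otimes\mathcal{K}^T\to\mathcal{K}^{S\times T}$ is injective --- is exactly what is needed to make the ``analogous'' case go through, so your write-up is in fact more complete than the paper's. Two small points to tighten: for spanning you must handle functions supported by an arbitrary finite $B\supseteq A_0$, not just $A_0$-definable ones (your rectangle argument does generalise verbatim, since $B$-orbits of the product are still products of $B$-orbits of the factors, but you should say so); and when you run Lemma~\ref{l:hyper:independent} on a bounded interval $I_j$ you should note that the Lebesgue-measure dimension argument is applied inside $I_j\cap\mathcal{R}$ with the endpoints of $I_j$ playing the role of $\pm\infty$.
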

\begin{proof}
    For simplicity of the proof we shall assume that $X$ is equivariant. The general case is analogous.
    By $\omega$-categoricity of $\struct{Q}$ $X$ is a finite disjoint union of its orbits $X = \bigsqcup_{i=1}^n X_i$ where $X_i \approx Q^{<n_i}$ for some $n_i$. By Lemma~\ref{l:hyper:independent} and Lemma~\ref{l:hyper:spans} we have that $\mathcal{K}^{Q^{<n_i}}$ has an equivariant definable basis $\Lambda_i$ consisting of truncated hyperrectangles. Therefore:
    $$\mathcal{K}^{X} \approx \mathcal{K}^{\bigsqcup_{i=1}^n X_i} \approx \prod_{i=1}^n \mathcal{K}^{X_i} \approx \prod_{i=1}^n F(\Lambda_i) \approx F(\bigsqcup_{i=1}^n \Lambda_i)$$
    Therefore, $\bigsqcup_{i=1}^n \Lambda_i$ is a basis of $\mathcal{K}^X$.
\end{proof}


\begin{theorem}[Stone-Čech compactification in DLO]\label{t:definable:types:dlo}
Let $X$ be a definable set in the Ordered Fraenkel-Mostowski Model of Set Theory with Atoms. Then its Stone-Čech compactification $\overline{X}$ is definable.
\end{theorem}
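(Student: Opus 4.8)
The plan is to reduce the statement about the Stone-Čech compactification $\overline{X}$ to the computation of definable types in DLO, exactly paralleling the role that Lemma~\ref{l:parameters:in:definable:types} and Theorem~\ref{t:stone:cech:compactification} played in the $\omega$-stable case. First I would recall that ultrafilters on a definable set $X$ are the same thing as complete types over the model realised in $X$, so that $\overline{X}$ is (a subspace of) the type space $S_n(Q)$ once we view $X$ as an equivariant subset of $Q^{<n}$ via $\omega$-categoricity. The key structural fact we may import is that DLO has quantifier elimination, so every type over $Q$ is determined by the cut it induces on each coordinate together with the relative ordering of the variables. Concretely, a $1$-type over $Q$ is one of: a principal type $x = q$ for $q \in Q$, a ``left-of-$q$'' type $q^{-}$, a ``right-of-$q$'' type $q^{+}$, or one of the two endpoint types $\pm\infty$ — precisely the decomposition $\overline{Q} \approx Q \sqcup Q \sqcup Q \sqcup 2$ already computed in Remark~\ref{r:stability:is:necessary}.

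The central step is to show that each such type is \emph{definable} from finitely many parameters, so that the whole type space is an $A_0$-equivariant set with finitely many orbits. For a $1$-type the cut it determines is specified by its defining parameter $q$ (together with a side, left/right/exact), and by quantifier elimination the $\phi$-definition $D_\phi(\overline{y}, q)$ of the type is itself a DLO formula in the parameter $q$; hence the type is $\{q\}$-definable. I would then argue, as in the proof of Theorem~\ref{t:stone:cech:compactification}, that a type being $A_1$-definable is exactly the condition that the corresponding ultrafilter is $A_1$-supported, so that definable (finitely-supported) ultrafilters correspond bijectively to finitely-definable types. Running this coordinatewise over the $n$ variables of $Q^{<n}$, each type is described by an $n$-tuple of cuts with a consistent interleaving, which is again definable from the finite set of parameters naming those cuts. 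Since by $\omega$-categoricity there are only finitely many ``shapes'' of interleaving (i.e.\ finitely many equivariant orbit-types of the parameter configuration), $S_n(Q)$ has only finitely many orbits, hence is definable; and the general definable $X = \bigsqcup_i X_i$ is handled by preservation of finite coproducts under the ultrafilter monad (Lemma~\ref{l:ultrafilter:finite:coproducts}), just as in Theorem~\ref{t:stone:cech:compactification}.

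The main obstacle, and the place where the $\omega$-stable argument genuinely fails and must be replaced, is the counting step: in the stable case finiteness of the orbit count came for free from Lemma~\ref{l:parameters:in:definable:types} (two parameters suffice, Morley degree one), whereas DLO is unstable and has no finite Morley rank, so I cannot invoke that lemma. Instead I must bound the number of orbits of types directly using quantifier elimination — showing that a complete $n$-type over $Q$ is determined by finitely much combinatorial data (for each variable, whether it sits at a named rational, immediately left/right of one, or at $\pm\infty$, plus the linear order the variables induce among themselves and with the parameters), and that this data ranges over only finitely many $\emptyset$-orbits. The delicate point is verifying that there are no ``hidden'' types arising from unnamed cuts: a genuinely new (non-finitely-supported) cut would give a non-symmetric ultrafilter, and one must check that every \emph{symmetric} ultrafilter is pinned to a named parameter. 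This is where the ad-hoc dimension theory developed above and the intersection estimate of Remark~\ref{r:intersect:dim} can be leveraged, mirroring how the limit-point filtration $Y^{(i)}$ organised the stable proof; I expect the cleanest route is to show directly that the symmetry (finite support) of an ultrafilter forces the associated cut on each coordinate to be definable, and therefore named, thereby closing the argument.
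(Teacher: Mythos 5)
Your proposal follows essentially the same route as the paper: reduce to $Q^{<n}$ via Lemma~\ref{l:ultrafilter:finite:coproducts}, identify symmetric ultrafilters with finitely definable types as in the proof of Theorem~\ref{t:stone:cech:compactification} (an argument using only $\omega$-categoricity), compute the $1$-types as in Remark~\ref{r:stability:is:necessary}, and observe that a type on strictly ordered variables is determined by the tuple of cuts of its coordinates --- which is exactly what the paper's Lemma~\ref{l:decomposition:dlo} formalises as an injective equivariant projection $S_n^{<}(Q)\to S^{fd}_1(Q)\times S^{<}_{n-1}(Q)$, iterated by induction on $n$. Your closing ``delicate point'' is resolved precisely by the route you suggest (an $A_0$-definable cut must be named by $A_0$ or be $\pm\infty$); the appeal to the hyperrectangle dimension theory of Remark~\ref{r:intersect:dim} is not needed here, as that machinery serves the dual-basis Theorem~\ref{t:rational:basis} rather than this one.
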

For simplicity of the proof we shall assume that $X$ is equivariant. The general case is analogous.
    By Lemma~\ref{l:ultrafilter:finite:coproducts} it is sufficient to prove the claim for sets of the form $Q^n$ and by remarks in proof of Theorem~\ref{t:stone:cech:compactification} it is sufficient to consider finitely definable types $S^{fd}_n(Q)$ in $S_n(Q)$ definable in a finite number of parameters $Q_0$. The proof proceeds by induction on the number $n$ of variables. For $n=1$ the types are described in Remark~\ref{r:stability:is:necessary}. Consider an $n$-type $p$. There are two cases: either $p$ contains a formula $x_i = x_j$ for distinct variables $x_i, x_j$, or not. In the first case, $p$ is completely determined by an $n-1$-type, so let us focus on the second case. We have that $x_i \neq x_j$ for every pair of distinct variables $x_i, x_j$. Then it must be the case that either $x_i < x_j$ or $x_j < x_i$ for every distinct pair of variables $x_i, x_j$ since $p$ is a type. Up to a permutation of variables, we may assume that the formula is of the form $x_1 < x_2 < \dotsc < x_n$. Let $\phi \in p$ and assume that $\phi$ is supported by $A_0 \subset Q$. Then $\phi$ is a finite disjoint union of its $A_0$-orbits. Because $p$ is a type, we may assume that one of its orbits belongs to $p$. This means, that such single-orbit formulas generate $p$. Therefore, we shall restrict to such formulas only. For fixed $A_0$ they are just conjunctions of: $x_i < a_i$ and $x_i > b_i$ for $a_i, b_i \in A_0$.  We claim that type $p$ is determined by two types $s \in S^{fd}_1(Q)$ and $q \in S^{fd}_{n-1}(Q)$. The following lemma says a bit more.

\begin{lemma}[Decomposition of types in DLO]\label{l:decomposition:dlo}
    If $\overline{x}$ is a sequence of $n$ variables then for every proper subsequence $x_{i_1}, x_{i_2}, \dotsc, x_{i_k}$ we have a projection map $\mor{\pi_k}{S^{fd}_n(Q)}{S^{fd}_k(Q)}$ defined as follows:
    $$\pi(p) = \{ \exists_{x_{i_{k+1}}, x_{i_{k+2}}, \dotsc, x_{i_n}} \phi(x_1, x_2, \dotsc, x_n) \colon \phi(x_1, x_2, \dotsc, x_n) \in p\} $$
    Let us denote by $S_n^{<}(Q) \subseteq S^{fd}_n(Q)$ the set of types $p$ such that: $x_1 < x_2 < \cdots < x_n \in p$. Then the mapping $\mor{\nabla}{S_n^{<}(Q)}{S^{fd}_1(Q) \times S_{n-1}^{<}(Q)}$ defined as: $\nabla(p) = \tuple{\pi_1(p), \pi_{n-1}(p)}$
    is injective.
\end{lemma}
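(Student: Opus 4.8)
The plan is to use quantifier elimination in DLO, recalled above, to reduce the injectivity of $\nabla$ to a statement purely about atomic formulas. In DLO every formula is equivalent modulo the theory to a Boolean combination of atomic formulas, which here are of the form $x_i = x_j$, $x_i < x_j$, $x_i = q$, $x_i < q$, or $x_i > q$ for parameters $q \in Q$. Hence a complete type $p \in S^{fd}_n(Q)$ is determined by the set of atomic formulas (with parameters) it contains, and to prove $\nabla$ injective it suffices to show that for $p \in S_n^{<}(Q)$ every such atomic formula is determined by the pair $\tuple{\pi_1(p), \pi_{n-1}(p)}$.

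First I would record the elementary fact that $\pi_k$ simply restricts $p$ to formulas in the retained variables: if $\psi$ mentions only $x_{i_1}, \dotsc, x_{i_k}$, then $\exists_{x_{i_{k+1}}, \dotsc, x_{i_n}} \psi$ is logically equivalent to $\psi$, so $\psi \in \pi_k(p) \Leftrightarrow \psi \in p$. For the map $\nabla$ I take $\pi_1$ to be the projection onto $x_1$ and $\pi_{n-1}$ the projection onto $x_2, \dotsc, x_n$; the latter retains the constraint $x_2 < \dotsc < x_n$, which is exactly why $\pi_{n-1}(p) \in S_{n-1}^{<}(Q)$, matching the stated codomain. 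Thus $\pi_1(p)$ contains precisely the atomic formulas of $p$ in $x_1$ and parameters, while $\pi_{n-1}(p)$ contains precisely those in $x_2, \dotsc, x_n$ and parameters.

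Next I would classify every atomic formula relevant to $p$ into three groups and show each is recovered from $\nabla(p)$: (i) atomic formulas among $x_2, \dotsc, x_n$ and parameters, determined by $\pi_{n-1}(p)$; (ii) atomic formulas between $x_1$ and parameters, determined by $\pi_1(p)$; and (iii) atomic formulas relating $x_1$ to some $x_j$ with $j \geq 2$, which are forced to be $x_1 < x_j$ by the hypothesis $x_1 < x_2 < \dotsc < x_n \in p$ and so are identical for every member of $S_n^{<}(Q)$. Since these three groups exhaust all atomic formulas in $x_1, \dotsc, x_n$, the full atomic content of $p$ is reconstructible from $\nabla(p)$. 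Therefore $\nabla(p) = \nabla(p')$ forces $p$ and $p'$ to contain exactly the same atomic formulas, whence $p = p'$.

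The only substantive point, and the reason the ordering constraint $p \in S_n^{<}(Q)$ is indispensable, is group (iii): it is precisely the strict ordering that pins down the cross-relations between the dropped variable $x_1$ and the retained variables $x_2, \dotsc, x_n$ with no extra data, so that nothing is lost in passing to the two projections. On the unrestricted space $S^{fd}_n(Q)$ this step fails and $\nabla$ is not injective. The remaining bookkeeping, namely that $\pi_k$ lands in $S_k^{fd}(Q)$ and that the projection of an ordered type remains ordered, is routine, and I would dispatch it briefly.
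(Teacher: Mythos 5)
Your proof is correct and follows essentially the same route as the paper's: both arguments rest on quantifier elimination together with the observation that, on $S_n^{<}(Q)$, every constraint splits into a part involving only $x_1$, a part involving only $x_2,\dotsc,x_n$, and cross-relations $x_1 < x_j$ that are already forced by the ordering. The paper phrases this as the contrapositive, factoring a single distinguishing single-orbit formula as $\psi(x_1)\land\psi'(x_2,\dotsc,x_n)\land x_1<x_2$, while you reconstruct the full atomic diagram of $p$ from $\nabla(p)$; the substance is identical.
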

\begin{proof}
Let $p \neq q \in S_n^{<}(Q)$, then by the above observation there must be a formula $\phi(x_1, x_2, \dotsc, x_n)$ that is a conjunction of: $x_i < a_i$ and $x_i > b_i$ for $a_i, b_i \in A_0$ and some $A_0$.
such that $\phi(x_1, x_2, \dotsc, x_n) \in p$ and $\phi(x_1, x_2, \dotsc, x_n) \not\in q$. Because $\phi(x_1, x_2, \dotsc, x_n) \equiv \psi(x_1) \land \psi'(x_2, \dotsc, x_n) \land x_1 < x_2$ and $x_1 < x_2$ belongs to both types, it must be that either $\psi(x_1) \not\in q$ and then $\psi(x_1) \not\in \pi_1(q)$ or $\psi'(x_2, \dotsc, x_n) \not\in q$ and then $\psi'(x_2, \dotsc, x_n) \not\in \pi_{n-1}(q)$.
\end{proof}
Now we can finish the proof of Theorem~\ref{t:definable:types:dlo}. By the inductive hypothesis we can assume that for $k < n$ sets $S^{fd}_k(Q)$ are definable equivariant sets. Every type $p \in S^{fd}_n(Q)$ contains exactly one formula $\phi \in S_n(\emptyset)$. By $\omega$-categoricity of $\struct{Q}$ the set $S_n(\emptyset)$ is finite. Therefore, $S^{fd}_n(Q)$ decomposes on finitely many sets $S_n^\phi(Q)$ where $\phi \in S_n(\emptyset)$. If $\phi$ includes equality between variables $x_i = x_j$, then $S_n^\phi(Q)$ is an equivariant subset of $S^{fd}_k(Q)$ for $k<n$ and by inductive hypothesis is equivariant definable; otherwise, i.e.~if $\phi$ does not include equality, then $S_n^\phi(Q)= S_n^{<}(Q)$ and by Lemma~\ref{l:decomposition:dlo}, there is an equivariant injection $S_n^{<}(Q) \rightarrow S^{fd}_1(Q)^n$, and so  $S_n^{<}(Q)$ is a equivariant definable. This means that $S_n^\phi(Q)$ is equivariant definable as it is a finite union of equivarian definable sets $S_n^\phi(Q)$.

\section{Probability measures}\label{sec:probability}
By a measurable space we shall mean a tuple $\tuple{X, \sigma}$, where $X$ is a set, and
$\sigma \subseteq \mathcal{P}(X)$ is a subset of the power-set of $X$ closed under Boolean operations
and countable unions/intersections:
\begin{itemize}
    \item $\emptyset \in \sigma$
    \item if $X_0 \in \sigma$ then $X \setminus X_0 \in \sigma$
    \item  if $(X_i)_{i\in \mathcal{N}}$ is a family of $X_i \in \sigma$ then $(\bigcup_{i\in \mathcal{N}} X_i) \in \sigma$
\end{itemize}
Set $\sigma$ is usually called a $\sigma$-algebra on $X$, or a Borel space on $X$. A measurable
function from a measurable space $\tuple{X, \sigma_X}$ to a measurable space $\tuple{Y, \sigma_Y}$ is a
function $\mor{f}{X}{Y}$ such that if $Y_0 \in \sigma_Y$ then $f^{-1}[Y_0] \in \sigma_X$. A sub-probability
measure $\mu$ on a measurable space $\sigma$, is a countably additive function $\mor{\mu}{\sigma}{[0, 1]}$, i.e.~for every countable family $(X_i)_{i\in \mathcal{N}}$ in $\sigma$ of pairwise disjoint sets $X_i \neq X_j$ whenever $i \neq j$, we have that:
$\mu(\bigcup_{i\in \mathcal{N}} X_i) = \sum_{i\in \mathcal{N}} \mu(X_i)$.
We call a sub-probability measure $\mu$ a probability measure if $\mu(X) = 1$.
Of a special interest are measurable spaces $\tuple{X, \sigma}$ whose $\sigma$-algebra $\sigma$ is the
full powerset on $X$. The reason is that every function from $X$ is measurable according to such $\tuple{X, \sigma}$.

Let us denote by $\catw{m}(X)$ the set of all probability measures on the powerset $\mathcal{P}(X)$ of $X$. Note, that the structure of $\catw{m}(X)$ for an arbitrary $X$ may be difficult to describe (i.e.~this structure highly depends on the foundations of the ambient set theory, in particular, it depends on the existence of large cardinals). Nonetheless, if $X$ is at most countable then one may easily describe the structure of $\catw{m}(X)$, i.e.~every measure $\mu$ on $\mathcal{P}(X)$ is discrete in the sense that $\mu$ is fully determined by its values on singletons $\{x\} \in \mathcal{P}(X)$. Therefore, every such a measure is tantamount to a function $\mor{\mu \downarrow X}{X}{[0, 1]}$ such that $\sum_{x\in X} (\mu \downarrow X)(x) = 1$. The next theorem extends this characterisation to all definable sets in $\classifying{A}$ for an $\omega$-categorical $\omega$-stable structure $\struct{A}$.

\begin{theorem}[Characterisation of measures on a definable set]\label{t:definable:measure}
    Let $\struct{A}$ be an $\omega$-categorical and $\omega$-stable structure. For every definable set $X$ in $\classifying{A}$ every $A_0$-supported probability  measure on $X$ is a finite combination of $A_0$-supported ultrafilters on $X$, i.e.~every $A_0$-equivariant measure $\mor{\mu}{\mathcal{P}(X)}{[0, 1]}$ is of the form $\mu = r_1 p_1 + r_2 p_2 + \cdots + r_n p_n$ for some real numbers $0 < r_i \leq 1$, and ultrafilters $p_i \in \overline{X}$ for $1 \leq i \leq n$.
\end{theorem}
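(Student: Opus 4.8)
The plan is to isolate the \emph{atomic part} of $\mu$ and then to rule out any continuous part by invoking $\omega$-stability. Throughout I treat $\mu$ as a \emph{finitely} additive $[0,1]$-valued function on the symmetric subsets of $X$ (countable additivity is automatic and, as explained in the introduction, vacuous in this setting). For an ultrafilter $p\in\overline{X}$ put $r_p=\inf\{\mu(S)\colon S\in p\}$, and call $p$ an \emph{atom} of $\mu$ if $r_p>0$. Since $\mu\circ\pi=\mu$ for every $\pi$ fixing $A_0$, the assignment $p\mapsto r_p$ is $A_0$-invariant, hence constant on each $A_0$-orbit of $\overline{X}$.

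First I would show that $\mu$ has only finitely many atoms. The elementary fact in play is that finitely many distinct ultrafilters $p_1,\dots,p_N$ can be separated by pairwise disjoint members $S_i\in p_i$; then finite additivity together with $\mu\le 1$ gives $N\cdot\min_i r_{p_i}\le\sum_i\mu(S_i)=\mu(\bigsqcup_i S_i)\le 1$. Consequently no \emph{infinite} $A_0$-orbit can consist of atoms of a common positive mass, so every orbit $O$ with $r_O>0$ is finite. As $\overline{X}$ is $A_0$-definable, by Theorem~\ref{t:stone:cech:compactification} it has only finitely many orbits, so the set $P=\{p\colon r_p>0\}$ is a finite, $A_0$-supported subset of $\overline{X}$; write $P=\{p_1,\dots,p_n\}$ and $r_i=r_{p_i}>0$.

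Next I would compare $\mu$ with the candidate $\sum_i r_i p_i$, where each $p_i$ is read as the $\{0,1\}$-valued measure $S\mapsto[S\in p_i]$, and set $\nu=\mu-\sum_i r_i p_i$. Separating, inside any given $S$, those $p_i$ that contain $S$ by pairwise disjoint subsets of $S$ yields $\mu(S)\ge\sum_{i\colon S\in p_i}r_i$, so $\nu\ge 0$. The same separation technique, now choosing members of $p_i$ that \emph{avoid} the other $p_j$, shows that $\nu$ has no atoms: the mass at each former atom $p_i$ is driven to $0$, and points outside $P$ were already atomless. Thus $\nu$ is a nonnegative, finitely additive, atomless, $A_0$-supported measure.

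The main obstacle — and the only place where $\omega$-stability is genuinely used — is to conclude $\nu=0$. I would argue by contradiction. If $\nu(X)>0$, then atomlessness forces every set of positive $\nu$-measure to split into two sets of positive $\nu$-measure (an indivisible positive-measure set would define, via $q=\{T\colon \nu(T\cap S_0)=\nu(S_0)\}$, a symmetric ultrafilter that is an atom of $\nu$). Iterating this splitting produces an infinite binary tree $(S_w)_{w\in\{0,1\}^*}$ of symmetric sets, pairwise disjoint along branches and all of positive measure, hence all infinite; exactly as in the proof of Lemma~\ref{l:binary:tree} such a tree contradicts $\omega$-stability of $\struct{A}$. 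Therefore $\nu(X)=0$, and since $\nu\ge 0$ is monotone this forces $\nu=0$, i.e.\ $\mu=\sum_{i=1}^n r_i p_i$. Evaluating at $X$, which belongs to every ultrafilter, gives $\sum_i r_i=\mu(X)=1$, so each $0<r_i\le 1$, as claimed. The points requiring care are the equivariance and finite-support bookkeeping that keep $P$ and $\nu$ genuinely $A_0$-supported, and the verification that the final splitting can be iterated into a bona fide infinite binary tree of definable sets: boundedness of $\mu$ drives the finiteness of the atomic part, while $\omega$-stability drives the vanishing of the continuous part.
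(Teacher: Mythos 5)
Your proof is correct, but it takes a genuinely different route from the paper. The paper's proof quotes Keisler's decomposition theorem for Keisler measures in $\omega$-stable theories (Theorem~\ref{t:stable:keisler:measure}): every measure is a \emph{countable} convex combination $\sum_{i=0}^{\infty} r_i p_i$ of types; it then shows the sum must be finite by evaluating $\mu$ on the normalised isolating sets $\lambda_{p_{i_j}}$ from Lemma~\ref{l:nice:exists} — if infinitely many $r_i$ were nonzero, these values would form an infinite family of distinct reals attained on a single $A_0$-orbit of $\mathcal{P}(A^n)$, contradicting $A_0$-invariance of $\mu$. You instead avoid the external black box entirely: you extract the atomic part directly via $r_p=\inf\{\mu(S)\colon S\in p\}$, get finiteness of the atom set from the separation/boundedness estimate combined with the finitely-many-orbits property of $\overline{X}$ (so you still rely on Theorem~\ref{t:stone:cech:compactification}, as does the paper), and then kill the atomless remainder $\nu$ by the binary-tree splitting argument, which is exactly the mechanism of Lemma~\ref{l:binary:tree}. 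What your route buys is self-containedness and a cleaner division of labour — $\omega$-categoricity controls the atomic part, $\omega$-stability annihilates the continuous part — at the cost of redoing some measure-theoretic bookkeeping (positivity and atomlessness of $\nu$, the ultrafilter $q$ extracted from an indivisible set) that the citation handles implicitly; the paper's route is shorter given the cited result and reuses the $\lambda_p$ machinery of Section~\ref{sec:definable:spaces}. One small point applies equally to both arguments: what one actually obtains is that the finite set $\{(r_i,p_i)\}$ is $A_0$-invariant, so each $p_i$ is supported by a finite extension of $A_0$ rather than literally $A_0$-supported as the statement phrases it; this is a defect of the statement, not of your proof.
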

We claim that $(\lambda_p)_{p \in \overline{X}}$ generate measures on $X$ in the following sense: every assignment $\mor{f}{\overline{X}}{[0, 1]}$ extends to at most one measure $\mu$ with $\mu(\lambda_p) = f(p)$. Moreover, if the measure is $A_0$-definable then $f$ must be also $A_0$-definable. This will give us an upper-bound on the size of structure $\catw{m}(X)$.
We prove the claim by induction on $k$ such that $\lambda_p$ for $p \in \overline{X}_{(k)}$ generate measures on $Y$ that are not contained in ultrafilters outside of $\overline{X}_{(k)}$. Let us take any subset $Y \subseteq X$. If $Y$ does not belong to any non-principal ultrafilter, then $Y$ is finite and $\mu(Y) = \sum_{y \in Y} \{y\}$, where the singletons $\{y\}$ are $\lambda_y$ for $y$ treated as principal ultrafilter. Assume that the theorem is true for all $k' < k$ and the maximal $p$ such that $Y \in p$ belongs to $\overline{X}_{(k)}$. Then by Theorem~\ref{t:free:space} for the $2$-element field $Y = \lambda_{p_1} \oplus \lambda_{p_2} \oplus \cdots \oplus \lambda_{p_n} \oplus Y_0$ for $p_i \in \overline{X}_{(k)}$ and $Y_0$ such that if $Y_0 \in p$ then $p \in \overline{X}_{(ki)}$ for $k' < k$. Observe, that by our choice of $\lambda_p$ we have that $\lambda_{p_i} \cap \lambda_{p_j}$ does not belong to a type $p \in \overline{X}_{(k')}$ for $k' \geq k$. Therefore, we can write the measure on each pair as $\mu(\lambda_{p_{i}} \oplus \lambda_{p_{i+1}}) = \mu(\lambda_{p_{i}}) + \mu(\lambda_{p_{i+1}}) - 2\mu(\lambda_{p_i} \cap \lambda_{p_j})$. Thus,we can restrict to the case with at most one $p_1$, i.e.~$Y = \lambda_{p_1} \oplus Y_0$. But then: $\mu(Y) = \mu(\lambda_{p_1}) - 2\mu(\lambda_{p_1} \cap Y) + \mu(Y)$, where $\mu(\lambda_{p_1} \cap Y)$ and $\mu(Y)$ are given by the inductive hypothesis.

It is possible to impose some restrictions on $\mor{f}{\overline{X}}{[0, 1]}$ to induce at least one measure on $\mathcal{P}(X)$ and give a direct proof of Theorem~\ref{t:definable:measure} along this line. Instead, we use a characterisation of measures in $\omega$-stable structures from \cite{chernikov2022definable} (Remark~2.2), which is originally due to H.J~Keisler.
\begin{theorem}[Keisler on Keisler measures in $\omega$-stable structures]\label{t:stable:keisler:measure}
    Let $T$ be $\omega$-stable and $\mu$ a probability measure over the Monster model $\mathcal{U}$ of $T$. Then $\mu = \sum_{i = 0}^\infty r_i p_i$  for $p_i \in S_n(U)$ and $r_i \in [0, 1]$ such that $\sum_{i = 0}^\infty r_i = 1$ .
\end{theorem}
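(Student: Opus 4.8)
The plan is to realise the Keisler measure as a genuine Radon probability measure on the Stone space of types, to use $\omega$-stability to force that space to be \emph{scattered}, and to conclude that the measure must then be purely atomic, its atoms being exactly the types $p_i$ and their masses the coefficients $r_i$. Concretely, a Keisler measure $\mu$ is a finitely additive probability measure on the Boolean algebra of $U$-definable subsets of $U^n$, i.e.\ on the clopen algebra of the compact, zero-dimensional, Hausdorff Stone space $S_n(U)$. Since every clopen set is compact, any countable clopen partition of a clopen set is in fact finite, so finite additivity already yields countable additivity on clopen partitions; by the Riesz--Markov theorem (equivalently, Carath\'eodory extension from the clopen algebra) $\mu$ extends uniquely to a regular Borel --- indeed Radon --- probability measure on $S_n(U)$, which I continue to denote $\mu$. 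Identifying a type $p\in S_n(U)$ with the point mass $\delta_p$ it determines, it then suffices to prove that $\mu$ is purely atomic, i.e.\ $\mu=\sum_i r_i\delta_{p_i}$ with $r_i=\mu(\{p_i\})>0$; finiteness of $\mu$ forces the index set to be countable and $\sum_i r_i=\mu(S_n(U))=1$.

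The key model-theoretic input is that $S_n(U)$ is scattered: every non-empty closed subset has an isolated point, equivalently the iterated Cantor--Bendixson derivative $Y\mapsto Y'$ (passage to limit points, as used throughout the paper) drives every closed set to $\emptyset$. This is precisely the binary-tree dichotomy of Lemma~\ref{l:binary:tree}: if some closed $Y\subseteq S_n(U)$ had a non-empty perfect part $P$ with $P'=P$, one could split $P$ repeatedly into two infinite clopen pieces, producing a tree $(P_w)_{w\in\{0,1\}^*}$ of consistent formulas whose parameters lie in a single countable subset $B\subseteq U$, and hence $2^{\aleph_0}$ types over the countable set $B$, contradicting $\omega$-stability. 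Thus $S_n(U)$ is scattered and its derivative sequence $S_n(U)\supseteq S_n(U)'\supseteq S_n(U)''\supseteq\cdots$ reaches $\emptyset$ at some ordinal.

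The final step is the classical fact that a Radon probability measure on a scattered compact Hausdorff space is purely atomic. I would split $\mu=\mu_{\mathrm{at}}+\nu$, where $\mu_{\mathrm{at}}=\sum_p\mu(\{p\})\delta_p$ collects the (at most countably many) point masses and $\nu$ is the continuous remainder, and show $\nu=0$. For the relatively open set $I$ of isolated points of any closed $K$, inner regularity gives $\nu(I)=\sup\{\nu(C):C\subseteq I\ \text{compact}\}$, and since $I$ is discrete every such $C$ is finite, so $\nu(I)=0$ by continuity of $\nu$. Applying this along the derivative sequence shows $\nu(K^{(\alpha)})=\nu(K)$ at every stage (the limit stages being handled by the $\tau$-additivity of a Radon measure), whence $\nu(S_n(U))=\nu(\emptyset)=0$ by scatteredness. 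Therefore $\nu=0$ and $\mu=\sum_i r_i p_i$ with $r_i>0$ and $\sum_i r_i=1$, as claimed.

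I expect the main obstacle to be the scatteredness step, namely converting the abstract absence of a perfect subset of $S_n(U)$ into the concrete infinite binary tree of formulas that contradicts $\omega$-stability; Lemma~\ref{l:binary:tree} essentially packages this argument, so the remaining work --- extending the finitely additive measure to a Radon measure and extracting the atoms derivative by derivative --- is routine once the Stone space is known to be scattered.
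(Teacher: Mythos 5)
Your proof is essentially correct, but note that the paper does not actually prove this statement: it imports it wholesale from the literature (Remark~2.2 of \cite{chernikov2022definable}, attributed to Keisler) and only uses it as a black box in the proof of Theorem~\ref{t:definable:measure}. So there is no in-paper argument to compare against; what you have written is a self-contained proof of the cited result. Your route is the standard one and all three steps check out: (i) the identification of a Keisler measure with a regular Borel (Radon) measure on $S_n(U)$ is sound, since compactness of clopen sets makes finite additivity on the clopen algebra automatically countably additive there, so Carath\'eodory/Riesz--Markov applies; (ii) scatteredness of $S_n(U)$ under $\omega$-stability is exactly the binary-tree dichotomy --- a non-empty perfect closed subset yields a tree of formulas over a countable parameter set $B$ and hence $2^{\aleph_0}$ types in $S_n(B)$ --- and you are right that Lemma~\ref{l:binary:tree} is the paper's internal packaging of this argument (there phrased for orbit-finite $Y$, where the derivative sequence even terminates in finitely many steps); (iii) the Rudin-type fact that a Radon probability measure on a scattered compact Hausdorff space is purely atomic, proved by killing the continuous part $\nu$ on the isolated points of each Cantor--Bendixson derivative via inner regularity and handling limit stages by $\tau$-additivity, is classical and correctly deployed. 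Two cosmetic points: the statement permits $r_i=0$, so your strictly positive atoms can simply be padded with zeros to match the indexing by $\omega$; and for the general (not $\omega$-categorical) $\omega$-stable case the derivative sequence may need transfinitely many steps, which your transfinite induction already accommodates. What your argument buys over the paper's citation is a proof that visibly runs through the same Cantor--Bendixson stratification $\overline{X}_{(i)}$ that the paper uses everywhere else (Lemmas~\ref{l:independent}, \ref{l:spans} and the uniqueness sketch preceding Theorem~\ref{t:stable:keisler:measure}), so it would integrate naturally with the surrounding material.
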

Because we are working with $\omega$-categorical structure $\struct{A}$, we can replace the Monster model $\mathcal{U}$ with $\struct{A}$. So the theorem says that every probability measure on $A^n$ is an infinite positive convex combination of countably many ultrafilters on $A^n$. But in case of $\omega$-categorical structure this result can be improved to \emph{finitely} many ultrafilters.

\begin{proof}[Proof of Theorem~\ref{t:definable:measure}]
    Let an $A_0$-supported measure $\mor{\mu}{\mathcal{P}(A^n)}{[0, 1]}$ be given. By Theorem~\ref{t:stable:keisler:measure} we have that: $\mu = \sum_{i = 0}^\infty r_i p_i$. 
    Because $\overline{A^n}$ has finitely many orbits (by Lemma~\ref{l:nice:exists}), if the number of non-zero $r_i$ is infinite, then there must be an orbit (using the notation from the proof of Theorem~\ref{t:free:space}) in $\overline{A^n}_{(k)}$ that contains an infinite sequence of $p_{i_j}$ such that $r_{i_j}$ is strictly decreasing to zero. Let us assume that $k$ is the smallest number with this property. This means, that there are only finitely many ultrafilters $p_{k_1}, p_{k_2}, \cdots, p_{k_m}$ belonging to $\overline{A^n}_{(k')}$ for $k' > k$.  Consider the measures of $\lambda_{p_{i_j}}$. If $p \in \overline{A^n}_{(k)}$ then $\lambda_{p_{i_j}} \in p$ only if $p = p_{i_j}$, because $\lambda_{p_{i_j}}$ isolates $p_{i_j}$ in $\overline{A^n}^{(k)}$. Therefore:
    $\mu(\lambda_{p_{i_j}}) = r_{i_j} + r_{k_1}p_{k_1}(\lambda_{p_{i_j}}) + r_{k_2}p_{k_2}(\lambda_{p_{i_j}}) + \cdots + r_{k_m}p_{k_m}(\lambda_{p_{i_j}})$.
    But there are only $2^m$ distinct subsets of $r_{k_1}, r_{k_2}, \dotsc, r_{k_m}$, thus they can produce at most $2^m$ distinct values. Therefore, $\mu(\lambda_{p_{i_j}})$ must take infinitely many distinct values, what contradicts the fact $\mu$ is finitely supported. For general equivariant set $X$, observe that by elimination of imaginaries $X \subseteq A^n$ for some $n$ and $A^n \setminus X$ is also equivariant. Therefore, by additivity, every measure on $X$ is just a restriction of a measure on $A^n$. 
\end{proof}

\subsection{Measurable spaces, random variables and Giry monad}

Measures are interesting because we can integrate functions with respect to them. In case of probability measures, we also use the term ``expected value''.
A measurable space $\tuple{A, \sigma}$ with a distinguished probability measure $\mor{\mu}{\sigma}{[0, 1]}$ is called a \emph{probability space} and denoted by $\tuple{A, \sigma, \mu}$. A measurable function from a probability space $\tuple{A, \sigma, \mu}$ to the canonical measurable space of real numbers $\mathcal{R}$ is called a \emph{random variable}. Notice, that if $\sigma = \mathcal{P}(A)$ then every function $\mor{X}{A}{\mathcal{R}}$ is measurable, therefore every such $X$ may be treated as a random variable after fixing a probability measure $\mu$ on $A$.    
Let $X$ be a random variable on a definable probability space $\tuple{A, \sigma, \mu}$. Then the expected value of $X$ will be denoted as $E[X]$ or  $\int_{a\in A} X(a) d\mu$ and defined as $E[X] = \sum_{r \in \mathit{Im}(X)} r \mu(\{a \in A \colon X(a) = r\})$.

Observe that in case $A$ is definable, the image of $X$ is finite, so the above definition is sound. The notion of expected value gives a convenient way to define more advanced concepts. Till the end of the section we will restrict to probability measures on the full $\sigma$-algebras. Consider probability measures $\mor{p}{\mathcal{P}(Q)}{[0, 1]}$ and $\mor{q}{\mathcal{P}(W)}{[0, 1]}$ on definable spaces $Q, W$. We can define a function $\mor{h}{\mathcal{P}(Q\times W)}{[0, 1]^W}$ as the transposition of the following composition:
$\mathcal{P}(Q\times W) \times W \approx \mathcal{P}(Q)^W \times W \overset{\epsilon}\rightarrow \mathcal{P}(Q) \overset{p}\rightarrow [0, 1]$.
Then: $\mu(P_0) = \int_{x \in W} h(P_0)(x) dq$ defines a probability measure on $Q \times W$. There is a symmetric way to define a measure on $Q\times W$ -- i.e.~by swapping the order in the product. That is, define $\mor{k}{\mathcal{P}(Q\times W)}{[0, 1]^Q}$ as the transposition of $\mathcal{P}(Q\times W) \times Q \approx \mathcal{P}(W)^Q \times Q \overset{\epsilon}\rightarrow \mathcal{P}(W) \overset{q}\rightarrow [0, 1]$
and then: $\mu'(P_0) = \int_{x \in Q} k(P_0)(x) dp$. It follows that for $\omega$-stable theories measures $\mu$ and $\mu'$ coincide. Therefore, we can speak of \emph{the product measure}.
\begin{remark}[Product measures in non-stable theories]
    For non-stable theories $\mu$ and $\mu'$ can be different. For example, consider ultrafilters $q=p=0^{+}$ on $Q$ in DLO (see Remark~\ref{r:stability:is:necessary}) and treat them as probability measures. By the definition we have: $h(R_0)(x) = p(\{a \in Q \colon R_0(a,x) \}) \Leftrightarrow (0, c) \times \{x\} \subseteq R_0$ for some positive $c$, and so:
$\mu(R_0)=\int_{x \in Q}\;\exists_{c>0}\; [(0, c) \times \{x\} \subseteq R_0] dq = 0^{+}(\{x \in Q \colon \exists_{c>0} [(0, c) \times \{x\} \subseteq R_0]\}) = \exists_{d>0}\; (0, d) \subseteq \{x \in Q \colon \exists_{c>0} (0, c) \times \{x\} \subseteq R_0\}$. Therefore, $R_0 \in \mu$ if and only if $R_0$ contains a triangle with vertices $(0, 0), (d, d), (0, d)$ for some $d>0$. Similarly, $k(R_0)(a) = p(\{x \in Q \colon R_0(a,x) \} \Leftrightarrow \{a\} \times (0, c) \subseteq R_0$ for some $c>0$, and: $\mu'(R_0) = \int_{a \in Q} k(R_0)(a) dp = p(\{a \in Q \colon k(R_0)(a)\}) = 0^{+}(\{a \in Q \colon \exists_c \{a\} \times (0, c) \subseteq R_0\}) = \exists_{d>0}\; (0, d) \subseteq \{a \in Q \colon \exists_{c>0} \{a\} \times (0, c) \subseteq R_0\}$. Therefore, $R_0 \in \mu'$ if and only if $R_0$ contains a triangle with vertices $(0, 0), (d, d), (d, 0)$ for some $d>0$. Intuitively, when constructing the product measure, we have to favour one of the directions, because every 2-dimensional set decomposes on subsets defined by $x > y$, $x < y$ and $x=y$. 
\end{remark}

The construction $\catw{m}(X)$ extends to a functor on the category of measurable spaces and measurable functions. Moreover, this functor can be equipped with the usual structure of a Giry monad \cite{giry2006categorical}. The monad is strong and so gives the structure of an internal monoid on maps $X \rightarrow \catw{m}(x)$. The Kleisli unit $\mor{\eta_X}{X}{\catw{m}(X)}$ assigns to an element $x \in X$ the principal ultrafilter generated by $x$ and the Kleisli composition of $\mor{f}{X}{\catw{m}(Y)}$ with $\mor{g}{Y}{\catw{m}(Z)}$ is defined as the composition of Markov kernels (see:  \cite{giry2006categorical}), i.e.~$g(f(x))(Z_0) = \int_{y\in Y} g(y)(Z_0)df(x)$. In case $X, Y, Z$ are definable this reduces to $g(f(x))(Z_0) = \sum_{r \in[0, 1]} r \cdot f(x)(\{y \in Y \colon g(y)(Z_0) = r\})$, where the extension function $\mor{\overline{g}}{\catw{m}(Y)}{\catw{m}(Z)}$ is just: $\overline{g}(\mu)(Z_0) = \sum_{r \in[0, 1]} r \cdot \mu(\{y \in Y \colon g(y)(Z_0) = r\})$ and the summations are effectively finite by Theorem~\ref{t:definable:measure}.

\section{Conclusions and future work}\label{sec:conclusions}
The paper investigates properties of Stone-Čech compactification of discrete spaces in various models of Zermelo-Fraenkel Set Theory with Atoms. This theme is interesting from both theoretical and practical perspectives.

From the theoretical point of view, we show that in ZFA over certain $\omega$-categorical structures, the Stone-Čech compactification of a definable set is definable -- this is the case $\omega$-stable structures (Theorem~\ref{t:stone:cech:compactification}) and rational numbers with their natural ordering $\struct{Q}$ (Theorem~\ref{t:definable:types:dlo}), but not the case of random graphs (Example~\ref{e:types:in:graphs}), nor the polar geometry (similar argument). We conjecture that this is true exactly for the class of NIP structures. Moreover, for $\omega$-stable structures, the process of Stone-Čech compactification gives an explicit description of a basis of the dual vector space $V^*$ to a vector space $V$ with a definable basis $\Lambda$ -- i.e.~the basis of $V^*$ is Stone-Čech compactification $\overline{\Lambda}$ of $\Lambda$ (Theorem~\ref{t:dual:basis}). This does not hold for non-$\omega$-stable structures (Remark~\ref{r:stability:is:necessary}), but still can help -- e.g.~for $\struct{Q}$ such a basis exists and is a definable subset of $\overline{\Lambda}$ (Theorem~\ref{t:definable:types:dlo} together with Theorem~\ref{t:rational:basis}). We conjecture that this is generally true for the class of NIP structures. The existence of a definable basis for such a dual space is the main ingredient in proving that the category of vector spaces on definable sets is monoidal closed (Theorem~\ref{t:monoidal:closed}). This is quite remarkable, because the category of definable sets is usually \emph{not} closed. We believe that studying further properties of the category of vector spaces over definable basis will lead to many interesting and practical results. E.g.~the authors of \cite{BKM21} showed that vector spaces over definable basis in $\struct{Q}$ are of finite length. Can this be generalised to other ZFA, etc.? In Section~\ref{sec:probability} we show that probability measures on definable sets are quite well behaved and allow for developing a bit of probability theory. Interestingly, probability measures on a definable set $X$ in an $\omega$-stable structure are \emph{finite} convex combinations of mass-measures on the Stone-Čech compactification $\overline{X}$ of $X$ (Theorem~\ref{t:definable:measure}). This is analogous to the classical fact that probability measures on a finite set are finite convex combinations of mass-measures on it (or on its Stone-Čech compactification) and gives an explicit description of the structure of such measures. We believe that further properties of internal measures should be studied.

From the practical point of view, we answer some open question raised in \cite{BKM21} and in \cite{10.1145/3531130.3533333} and give smoother and more general results for existence of dual basis. The existence of these basis is the main tool in \cite{10.1145/3531130.3533333} to prove solvability of systems of definable linear equations. The bare fact that the Stone-Čech compactification of a definable set is definable allows us to show that register machines extended with the ability to erase the content of their registers can be reduced to the classical register machines (Theorem~\ref{t:expressive:ultra:automata}). The fact that the category of vector spaces on definable basis is monoidal closed allows us to slightly generalise the construction of weighted automaton without changing the concept of the recognised language and obtain a general equivalence between languages recognised by weighted automata and languages recognised by linear monoids (Theorem~\ref{t:monoid}). Nonetheless, perhaps the most interesting application of the concepts developed in this paper is the definition and characterisation of general probabilistic register machines -- according to our knowledge, this concept has not been studied in such a generality before. It turns out, that just like in the classical setting, i.e.~finite case, they can be embedded in weighted automata over the Stone-Čech compactification of the states (Theorem~\ref{t:probabilistic:automaton}). Finally, in \cite{sabok2021probabilistic} probabilistic semantics for lambda calculus with fresh names are studied. We believe that the natural setting for these semantics is ZFA with internal measures.

\section*{Acknowledgement}
This research was supported by the National Science Centre, Poland, under projects 2018/28/C/ST6/00417.

\printbibliography

\appendix

\section{One theorem from the introduction}\label{sec:app:intro:theorem}

\begin{theorem}[Ultrafilters in ZFA over $\omega$-categorical $\omega$-stable structures]\label{t:ultrafilters:zfa}
Let $A$ be a non-trivial $\omega$-categorical and $\omega$-stable structure. Then:
\begin{enumerate}
    \item Boolean Prime Ideal Theorem does not hold in $\classifying{A}$
    \item for every infinite set $X$ in $\classifying{A}$ there is a non-principal ultrafilter on $\mathcal{P}(X)$.
\end{enumerate}
\end{theorem}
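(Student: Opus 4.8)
The plan is to prove the two parts by essentially opposite mechanisms: part (1) is an \emph{impossibility} forced by an external stability-theoretic obstruction, whereas part (2) is an \emph{existence} statement obtained by reducing to the definable case already treated in the body.

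For part (1) I would argue contrapositively, using the classical fact (a theorem of $\mathsf{ZF}$, hence of $\mathsf{ZFA}$) that the Boolean Prime Ideal Theorem implies the Ordering Principle, i.e.~that every set admits a linear ordering. Since $\classifying{A} \models \mathsf{ZFA}$, this implication relativises to the permutation model, so if BPIT held in $\classifying{A}$ the set of atoms $A$ would carry a linear ordering ${<} \subseteq A \times A$ that is an \emph{element} of $\classifying{A}$, hence finitely supported, say by finite $A_0 \subseteq A$. By $\omega$-categoricity there are only finitely many $A_0$-orbits on $A \times A$, so any $A_0$-supported subset of $A^2$ is a finite union of orbits and therefore $A_0$-definable. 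Naming $A_0$ as constants preserves both $\omega$-categoricity and $\omega$-stability, so $(A,A_0)$ would be an $\omega$-stable structure carrying a \emph{definable} linear order on its infinite universe (infinite because the theory has no finite models). But a definable infinite linear order realises arbitrarily long increasing chains, hence witnesses the order property, contradicting stability and a fortiori $\omega$-stability. The only points to check are that the $\mathsf{ZFA}$-proof of ``BPIT $\Rightarrow$ Ordering Principle'' genuinely applies inside $\classifying{A}$ and that the support of $<$ can indeed be absorbed into definability, both of which are routine.

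For part (2), after expanding $\struct{A}$ by the finitely many constants supporting $X$ I may assume $X$ is equivariant, and I split on its orbit structure. If $X$ has an \emph{infinite} orbit $O$, then $O$ is a single orbit, hence a definable infinite set, so by Remark~\ref{r:non:principal:ultrafilters:exist} (which rests on the binary-tree argument of Lemma~\ref{l:binary:tree} and thus on $\omega$-stability) there is a non-principal ultrafilter $p$ on $O$. Writing $X = O \sqcup (X \setminus O)$ and pushing $p$ forward via the recipe of Lemma~\ref{l:ultrafilter:finite:coproducts}, namely $\mu(S) = p(S \cap O)$, produces a symmetric ultrafilter on $\mathcal{P}(X)$ that is non-principal because $p$ is. This disposes of the delicate ``amorphous-like'' sets such as $A$ itself, whose only non-principal ultrafilter is the cofinite one.

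The remaining case---$X$ infinite with \emph{every} orbit finite---is where I expect the real work to lie, and it is the main obstacle. Here the action of $\aut{\struct{A}}$ on $X$ degenerates, and the plan is to show that $X$ is then necessarily Dedekind-infinite: fixing a finite support rigidifies the finite orbits enough to extract a countably infinite symmetric subset $\{x_0,x_1,\dotsc\} \subseteq X$ whose enumeration is itself finitely supported. Transporting along this enumeration a non-principal ultrafilter on $\mathbb{N}$, which exists in the underlying $\mathsf{ZFC}$ ground model and is pure hence symmetric, yields a non-principal ultrafilter on $\mathcal{P}(X)$. The point demanding care is precisely the claim that ``no infinite orbit'' forces Dedekind-infiniteness in $\classifying{A}$; one must verify that finite non-trivial orbits (which can arise from definable finite quotients of $\aut{\struct{A}}$) are trivialised upon naming finitely many atoms, so that a symmetric enumeration is available. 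Once this structural reduction is secured, part (2) follows, and the contrast advertised in the introduction---BPIT failing for general Boolean algebras yet the power-set algebras retaining non-principal ultrafilters---is established.
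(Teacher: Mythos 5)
Part (1) of your argument is essentially the paper's: the paper derives from BPIT the compactness theorem for propositional logic and uses it to manufacture a linear order on the atoms, which is exactly the standard proof of the Ordering Principle that you invoke as a black box. Both routes end at the same place --- a finitely supported, hence (by $\omega$-categoricity, after naming the support) definable, linear order on the infinite universe, which has the order property and contradicts stability. You actually spell out the ``finitely supported $\Rightarrow$ definable $\Rightarrow$ order property'' step more carefully than the paper does, and that step is needed. For part (2), your Case~A (some orbit infinite) is correct and clean: Remark~\ref{r:non:principal:ultrafilters:exist} gives a non-principal ultrafilter on that orbit and the pushforward $\mu(S)=p(S\cap O)$ of Lemma~\ref{l:ultrafilter:finite:coproducts} is symmetric and non-principal. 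Note that your case split (``some orbit infinite'' vs.\ ``all orbits finite'') differs from the paper's (``finitely many orbits'' vs.\ ``infinitely many orbits''); your Case~A subsumes the paper's finite-index case, since an infinite set with finitely many orbits must have an infinite orbit.

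The gap is in your Case~B. You want a finitely supported enumeration of a countably infinite subset of $X$, and your justification is that each finite orbit is ``trivialised upon naming finitely many atoms.'' That is true orbit by orbit --- every element of a symmetric set is finitely supported, so each finite orbit becomes a set of fixed points after naming the supports of its elements --- but it does \emph{not} hold uniformly: the supports of the elements of the $n$-th orbit may grow without bound with $n$, so no single finite set of named atoms fixes a transversal, and a finitely supported injection $\omega\to X$ need not exist. This is precisely a Russell-socks configuration, and Dedekind-infiniteness of $X$ is exactly what is in doubt there, so you cannot assume it. The paper avoids element-level choice entirely by working on the orbit set $I=X/{\equiv}$: the induced action on $I$ is trivial, so \emph{any} ground-model non-principal ultrafilter $\mu_I$ on $I$ is automatically symmetric, and one defines $S\in\mu \Leftrightarrow \{i\in I \colon S\cap X_i\in\mu_i\}\in\mu_I$ for a symmetric family of equivariant fiber ultrafilters $\mu_i$; non-principality of $\mu$ is then inherited from $\mu_I$ alone, so the $\mu_i$ may all be principal. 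If you adopt this Fubini-product device your Case~B closes, modulo producing the equivariant fiber ultrafilters (for a finite orbit this means a fixed point, for an infinite one a fixed point of $\overline{X_i}$) --- a point the paper itself asserts rather than proves, but one that is strictly weaker than the symmetric transversal your sketch requires.
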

\begin{proof}
For (1) recall that BPIT is equivalent over ZF(A) to the compactness theorem of propositional calculus (see, for example \cite{blass2011partitions} or \cite{jech2008axiom}). Let $A$ be the set of atoms, and consider the following set of propositional variables $\mathit{Var} = A^2$ with the following set of propositions:
\begin{itemize}
    \item $\{\neg (\tuple{a, b} \land \tuple{b, a}) \colon \tuple{a, b}, \tuple{b, a} \in \mathit{Var}\}$
    \item $\{\tuple{a, b} \lor \tuple{b, a} \colon \tuple{a, b}, \tuple{b, a} \in \mathit{Var} \land a \neq b\}$
    \item $\{\tuple{a, b} \land \tuple{b, c} \rightarrow \tuple{a, c} \colon \tuple{a, b}, \tuple{b, c}, \tuple{a, c} \in \mathit{Var}\}$
\end{itemize}
Intuitively, the sets of propositions say that there exists a strict linear ordering on $A$. Because, every finite subset of the sets of propositions is satisfiable (i.e.~there are definable orders in $A$ of any finite length), by the compactness theorem for propositional calculus, the whole set is satisfiable, which means that there is a strict linear order on $A$. But this contradicts $\omega$-stability of structure $\struct{A}$. 

For (2), without loss of generality we assume that $X$ is equivariant. Therefore, it is a disjoint union of $I$ equivariant sets $(X_i)_{i \in I}$ consisting of its orbits. If $I$ is finite, then by $\omega$-categoricity of $\struct{A}$ set $X$ is definable and we have the structure theorems (Theorem~\ref{t:stone:cech:compactification}, see also Remark~\ref{r:non:principal:ultrafilters:exist}) for the set of ultrafilters on $X$. So, let us assume that $I$ is infinite. Then, assuming AC (or at least BPIT) in the external (meta-)mathematics, there is a non-principal ultrafilter $\mu_I$ on $I$. Let us associate with every $i \in I$ an equivariant ultrafilter (principal or not) $\mu_i$ on $X_i$. Then, we may define an ultrafilter $\mu$ on $X$ as the Fubini-like product:
    $$A \in \mu \Leftrightarrow \{i \in I \colon A \downarrow X_i \in \mu_i\} \in \mu_I $$
    for every $A \subseteq X$, where  $A \downarrow X_i$ is the restriction of $A$ to $X_i$. To see that $\mu$ is symmetric, let us consider any permutation $\pi$. We have:
    $$\pi(A) \in \mu \Leftrightarrow \pi(\{i \in I \colon A \downarrow X_i \in \mu_i\}) \in \mu_I \Leftrightarrow \{i \in I \colon \pi(A \downarrow X_i) \in \mu_i\} \in \mu_I$$
    Obviously, $A \downarrow X_i \subseteq X_i$ and $X_i$ consists of a single orbit, thus $\pi(A \downarrow X_i) \subseteq \pi(X_i) = X_i$. Therefore, by equivariance of $\mu_i$ we have that: $\pi(A \downarrow X_i) \in \mu_i \Leftrightarrow A \downarrow X_i \in \mu_i$ and so: $A \in \mu \Leftrightarrow \pi(A) \in \mu$. It is a rutine to check that such defined $\mu$ is an ultrafilter:
    \begin{itemize}
        \item $\emptyset \in \mu \Leftrightarrow \{i \in I \colon \emptyset \in \mu_i\} \in \mu_I \Leftrightarrow \emptyset I$, so $\emptyset$ does not belong to $\mu$ because it does not belong to $I$
        \item if $A, B \in \mu$ then both  $I_A = \{i \in I \colon A \downarrow X_i \in \mu_i\}$ and $I_B = \{i \in I \colon A \downarrow X_i \in \mu_i\}$ and because $I$ is an ultrafilter $I_A \cap I_B \in \mu_I$; but $I_A \cap I_B = \{i \in I \colon A \downarrow X_i \in \mu_i \land B \downarrow X_i \in \mu_i\}$ and because $\mu_i$ is an ultrafilter $(A \downarrow X_i) \cap B (\downarrow X_i) = (A\cap B)\downarrow X_i \in \mu_i$, so $A\cap B \in \mu$
        \item by definition $(X \setminus A) \in \mu \Leftrightarrow  \{i \in I \colon (X \setminus A) \cap X_i \in \mu_i\} \in \mu_I$; we have however, $(X \setminus A) \cap X_i = X_i \setminus A = X_i \setminus (A \cap X_i)$, so $(X \setminus A) \cap X_i \in \mu_i \Leftrightarrow A \cap X_i \not\in \mu_i$; but $\{i \in I \colon A \cap X_i \not\in \mu_i\} \in \mu_I \Leftrightarrow \{i \in I \colon A \cap X_i \in \mu_i\} \not\in \mu_I \Leftrightarrow A \not\in \mu$
    \end{itemize}
\end{proof}

\section{On polar geometry}\label{sec:app:polar}
Let $V$ and $V^*$ be two disjoint (necessarily isomorphic) free $\aleph_0$-dimensional vector space over a finite field $\mathcal{F}$.  Let $\Phi$ be a bilinear map $\mor{\Phi}{V \times V^*}{L}$ satisfying the following axiom: (Space Extension Axiom) for every finite sequence of linearly independent vectors $\{v_i\}_{1 \leq i \leq k}$ from $V$ (resp.~$V^*$) together with a sequence of scalars $\{r_i \in \mathcal{F}\}_{1 \leq i \leq k}$ and a finite set of vectors $W \subset V^*$ (resp.~$W \subset V$)  there exists $w \notin W$ such that $\Phi(v_i, w) = r_i$ (resp.~$\Phi(w, v_i) = r_i$).
The polar geometry over finite field $\mathcal{F}$ is the structure $\struct{G}_\mathcal{F} = \tuple{V \cup V^*, B(-), \Phi, +, (-)r}$, where $+$ and $(-)r$ are interpreted separately on each of vector spaces $V,  V^*$ and predicate $B$ distinguishes vectors $V$ from v$V^*$, i.e.~$B(x) \Leftrightarrow x \in V$. The polar geometry is $\omega$-categorical (by the usual back-and-forth argument), but not $\omega$-stable. It has however, a good notion of independence (i.e.~it is a simple theory \cite{casanovas2007stable}), which will be important for the proof of the next theorem, which is essentially due to Harry West \cite{harry2021}. In the below, we shall also use the classic fact that polar geometry has the intersection property of algebraically closed supports (i.e.~a set is algebraically closed if it contains elements of every finite set definable in it). For a prime number $p$, let us denote by $\mathcal{F}_p$ the finite field of characteristic $p$.

\begin{theorem}[Dual basis in polar geometries]\label{t:polar:counterexample}
    For every prime number $p$, there is an equivariant definable set $X$ in $\classifying{\struct{G}_{\mathcal{F}_p}}$ such that $\mathcal{F}_p^X$ does not have a basis.
\end{theorem}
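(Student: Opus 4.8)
The plan is to take $X = V$, the first sort of the polar geometry (equivariant and definable, being cut out by the predicate $B$), and to argue by contradiction. The key objects are the functionals supplied by the form: for each $w \in V^*$ put $\phi_w = \Phi(-, w) \in \mathcal{F}_p^V$. Each $\phi_w$ is $\{w\}$-supported, hence a genuine element of $\classifying{\struct{G}_{\mathcal{F}_p}}$, and bilinearity of $\Phi$ gives the crucial relation $\phi_{w} + \phi_{w'} = \phi_{w + w'}$. First I would record two facts about these functionals: by the Space Extension Axiom, for $w$ generic over any finite algebraically closed $A_0$ the functional $\phi_w$ is not $\mathrm{acl}(A_0)$-supported (an automorphism fixing $\mathrm{acl}(A_0)$ pointwise but moving $w$ to $w' \neq w$ sends $\phi_w$ to $\phi_{w'} \neq \phi_w$, since some $v$ has $\Phi(v, w) \neq \Phi(v, w')$); and the minimal support of $\phi_w$ is contained in $\mathrm{acl}(A_0 w)$.

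Now suppose, for contradiction, that $\mathcal{F}_p^V = F(\mathcal{B})$ for some basis $\mathcal{B}$, and let $A_0$ be a finite support of $\mathcal{B}$, which I would close under algebraic closure (still finite, by $\omega$-categoricity). The main technical step is an invariance lemma: if $f = \sum_i c_i b_i$ is the unique expansion of $f \in \mathcal{F}_p^V$ in $\mathcal{B}$ with all $c_i \neq 0$, then every $b_i$ is supported within $\mathrm{acl}(A_0 \cup \mathrm{supp}(f))$. This holds because any automorphism fixing $A_0 \cup \mathrm{supp}(f)$ pointwise fixes $f$ and permutes $\mathcal{B}$, so by uniqueness it permutes the finite set $\{b_i\}$ preserving coefficients; each $b_i$ thus has a finite orbit and is algebraic over $A_0 \cup \mathrm{supp}(f)$. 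Applied to $f = \phi_w$ this places all basis vectors occurring in the expansion of $\phi_w$ inside $\mathrm{acl}(A_0 w)$, and since $\phi_w$ is not $\mathrm{acl}(A_0)$-supported, at least one of them, say $b^\star$, satisfies $\mathrm{supp}(b^\star) \not\subseteq \mathrm{acl}(A_0)$.

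To reach the contradiction I would choose $w_1, w_2 \in V^*$ generic and independent over $A_0$ and expand $\phi_{w_1}$, $\phi_{w_2}$, $\phi_{w_1 + w_2}$ in $\mathcal{B}$; their basis vectors lie in $\mathrm{acl}(A_0 w_1)$, $\mathrm{acl}(A_0 w_2)$ and $\mathrm{acl}(A_0(w_1 + w_2))$ respectively. Here I invoke the intersection property of algebraically closed supports cited above: for pairwise independent generics these three algebraically closed sets meet pairwise exactly in $\mathrm{acl}(A_0)$. Take $b^\star$ occurring in the expansion of $\phi_{w_1}$ with $\mathrm{supp}(b^\star) \not\subseteq \mathrm{acl}(A_0)$, and compare coefficients in the identity $\mathrm{rep}(\phi_{w_1}) + \mathrm{rep}(\phi_{w_2}) = \mathrm{rep}(\phi_{w_1+w_2})$, which holds by uniqueness since both sides represent $\phi_{w_1+w_2}$. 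The coefficient of $b^\star$ on the right is $0$ (otherwise $\mathrm{supp}(b^\star) \subseteq \mathrm{acl}(A_0 w_1) \cap \mathrm{acl}(A_0(w_1+w_2)) = \mathrm{acl}(A_0)$), so $b^\star$ must also occur in the expansion of $\phi_{w_2}$ in order to cancel; but then $\mathrm{supp}(b^\star) \subseteq \mathrm{acl}(A_0 w_1) \cap \mathrm{acl}(A_0 w_2) = \mathrm{acl}(A_0)$, contradicting the choice of $b^\star$. Hence no basis exists.

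The routine parts are the invariance lemma and the verification that $\phi_w$ fails to be $\mathrm{acl}(A_0)$-supported. The main obstacle, and the only place the genuine model theory of the polar geometry enters, is the independence input: that linearly independent generic vectors of $V^*$ over $A_0$ are forking-independent and that their algebraic closures intersect in $\mathrm{acl}(A_0)$ -- precisely the intersection property of algebraically closed supports. It is the failure of stability (the independence property carried by $\Phi$) that makes such an abundant supply of mutually independent functionals $\phi_w$ available in the first place, in sharp contrast with the $\omega$-stable situation of Theorem~\ref{t:free:space}.
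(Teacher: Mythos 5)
Your proof is correct, and it reaches the contradiction by a genuinely different linear dependency than the paper's. The paper fixes a $2$-dimensional subspace $\mathit{span}(\alpha,\beta)\subseteq V^*$ independent over $V_0^*$ and works with the $p+1$ indicator functions $f_l$, one for each line $l$ of that subspace, where $f_l(v)=0$ iff $l\subseteq\ker\Phi(v,-)$; rank--nullity then gives $\sum_l f_l\equiv 0$ because the $p$ nonzero contributions sum to $p\cdot 1=0$ in $\mathcal{F}_p$. You instead use the functionals $\phi_w=\Phi(-,w)$ directly and the three-term relation $\phi_{w_1}+\phi_{w_2}=\phi_{w_1+w_2}$ coming from bilinearity. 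From that point the two arguments run in parallel and rest on the same two pillars: first, that every basis vector occurring in the expansion of an $S$-supported function is itself $\mathit{acl}(A_0\cup S)$-supported --- your ``invariance lemma'' is exactly Remarks~\ref{r:definable:coefficients} and~\ref{r:definable:elements} of the paper, and be aware that the step from ``finite orbit under the pointwise stabiliser'' to ``supported by the algebraic closure'' is precisely where Remark~\ref{r:definable:elements}, hence the intersection property of acl-supports, is needed (it is not automatic); second, the combination of the span computation $\mathit{acl}(A_0w)\cap\mathit{acl}(A_0w')=\mathit{acl}(A_0)$ for $w,w'$ spanning distinct lines over $V_0^*$ with the intersection property of supports, which forces any basis vector occurring in two of the three expansions to be $\mathit{acl}(A_0)$-supported --- you fold both of these into one phrase, but they are distinct facts and both are used. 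The non-supportedness of $\phi_w$ over $\mathit{acl}(A_0)$ follows from the Space Extension Axiom exactly as the paper's corresponding lemma for $f_l$ does. What your route buys is economy: a three-term relation whose length does not depend on $p$ and which uses only additivity of $\Phi$ in its second argument, versus the paper's $(p+1)$-term relation, which needs the characteristic to annihilate the sum; the paper's $f_l$ has the minor advantage of being canonically attached to the line $l$ rather than to a chosen spanning vector, but since $\phi_w$ is $\{w\}$-supported this changes nothing in the support bookkeeping.
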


The proof is by contradiction. Let us suppose that $\Lambda$ is a $V_0 \cup V_0^*$-supported basis of $\mathcal{F}_p^V$.  We shall assume that $V_0 \cup V_0^*$ is algebraically closed in $\struct{G}_{\mathcal{F}_p}$. Let $\alpha, \beta \in V^*$ be linearly independent over $V_0^*$, i.e.~$\alpha+V_0^*$ is linearly independent from $\beta + V_0^*$. Then they must be linearly independent over $\emptyset$ and so $\mathit{dim}(\mathit{span}(\alpha, \beta)) = 2$.
Let us denote by $L = \{\mathit{span}(\alpha + k\beta) \colon k \in \mathcal{F}_p\} \cup \{\mathit{span}(\beta)\}$ the set of all one-dimensional subspaces of $\mathit{span}(\alpha, \beta)$. For every one dimensional subspace $l \in L$ define the following function $\mor{f_l}{V}{\mathcal{F}_p}$:
    $$f_l(v) = \begin{cases}
0 \;\; \textit{if} \; \forall_{\gamma \in l}\; \Phi(v, \gamma) = 0,\\
1 \;\; \textit{otherwise}
\end{cases}$$
Notice that by definition $f_l$ is $l$-supported. The functions $f_l$ are chosen in such a way that they sum up to the zero function.
\begin{lemma}
    For  $\mor{f_l}{V}{\mathcal{F}_p}$ defined as in the above, we have that: $\sum_{l \in L} f_l \equiv 0$.
\end{lemma}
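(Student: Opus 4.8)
The plan is to verify the identity pointwise: it suffices to show $\sum_{l \in L} f_l(v) = 0$ in $\mathcal{F}_p$ for each fixed $v \in V$. First I would fix $v$ and introduce the map $\phi_v = \Phi(v, -)$ restricted to the two-dimensional space $W = \mathit{span}(\alpha, \beta)$; bilinearity of $\Phi$ makes $\phi_v \colon W \to \mathcal{F}_p$ a linear functional. The crucial reformulation is that, because each $l$ is one-dimensional, $f_l(v) = 0$ precisely when $\Phi(v, \gamma) = 0$ for a generator $\gamma$ of $l$, i.e.~precisely when $l \subseteq \ker \phi_v$, and $f_l(v) = 1$ otherwise. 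Thus $\sum_{l \in L} f_l(v)$ counts, modulo $p$, the number of lines $l \in L$ that are \emph{not} contained in $\ker \phi_v$.

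Next I would split into two cases according to whether $\phi_v$ vanishes. If $\phi_v = 0$ then $\ker \phi_v = W$ contains every line, so every summand is $0$ and the sum is $0$. If $\phi_v \neq 0$ then $\ker \phi_v$ is a one-dimensional subspace of $W$, hence equals exactly one of the lines of $L$, while all the others fail to be contained in it. Since $W$ is two-dimensional over $\mathcal{F}_p$, the set $L$ of its one-dimensional subspaces has exactly $p + 1$ elements --- it is the projective line over $\mathcal{F}_p$, matching the explicit enumeration $\{\mathit{span}(\alpha + k\beta) \colon k \in \mathcal{F}_p\} \cup \{\mathit{span}(\beta)\}$. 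Therefore exactly $p$ of the lines contribute $1$, giving $\sum_{l \in L} f_l(v) = p = 0$ in $\mathcal{F}_p$.

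Combining the two cases yields $\sum_{l \in L} f_l(v) = 0$ for every $v \in V$, that is $\sum_{l \in L} f_l \equiv 0$. The argument is essentially a single counting identity and I do not anticipate a genuine obstacle; the one point to get right is that the vanishing is \emph{forced by the characteristic}: it is exactly the number $p + 1 - 1 = p$ of lines of $W$ avoiding a fixed one-dimensional kernel that is divisible by $p$. This is precisely where the choice of the field $\mathcal{F}_p$ (rather than an arbitrary field) is used, and it is what makes $\{f_l\}_{l \in L}$ a nontrivial linear dependence --- the very feature exploited afterwards to obstruct the existence of a basis for $\mathcal{F}_p^X$.
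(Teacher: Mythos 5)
Your proof is correct and follows essentially the same route as the paper: both fix $v$, view $\Phi(v,-)$ restricted to $\mathit{span}(\alpha,\beta)$ as a linear functional, split on whether its kernel is the whole plane or a single line of $L$, and conclude from the count $|L\setminus\{l\}|=p$ that the sum vanishes in characteristic $p$. No substantive difference.
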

\begin{proof}
For any $v \in V$ let us consider the functional $\mor{h_v}{\mathit{span}(\alpha, \beta)}{\mathcal{F}_p}$ defined as the restriction of $\Phi(v, -)$ to $\mathit{span}(\alpha, \beta))$, i.e.~$h_v = \Phi(v, -)\downarrow \mathit{span}(\alpha, \beta)$. By the classical rank-nullity theorem we have that:
$$\mathit{dim}(\mathit{Ker}(h_v)) + \mathit{dim}(\mathit{Im}(h_v) = \mathit{dim}(\mathit{span}(\alpha, \beta)) = 2$$
and because $\mathit{Im}(\Phi(v, -)) \subseteq \mathcal{F}_p$ we have that $\mathit{dim}(\mathit{Ker}(h_v))$ is either $1$ or $2$. In case $\mathit{dim}(\mathit{Ker}(h_v)) = 2$ it must be that $\mathit{Ker}(h_v) = \mathit{span}(\alpha, \beta)$ and then $f_l(v) = 0$ for every $l \in L$ so $\sum_{l \in L} f_l(v) = 0$. In case $\mathit{dim}(\mathit{Ker}(h_v)) = 1$ it must be that $\mathit{Ker}(h_v) = l$ for some $l \in L$, and since $|L \setminus \{l\}| = p$
we have:
$$\sum_{l' \in L} f_{l'}(v) = f_l(v) + \sum_{l' \in L \setminus \{l\}} f_{l'}(v) = 0 + p\cdot1 = 0$$    
\end{proof}
Let $F_S \subseteq \mathcal{F}_p^V$ be the set of all functions $V \rightarrow \mathcal{F}_p$ supported by $V_0 \cup \mathit{span}(V_0^* \cup S)$.
\begin{lemma}
    For $l \neq l' \in L$ we have that if $f \in F_l \cap F_{l'}$ then $f \in F_\emptyset$.
\end{lemma}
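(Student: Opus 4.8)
The plan is to deduce the lemma from the intersection property of algebraically closed supports in $\struct{G}_{\mathcal{F}_p}$. Write $A_S = V_0 \cup \mathit{span}(V_0^* \cup S)$ for the support attached to $F_S$, so that $f \in F_S$ means precisely that $f$ is $A_S$-supported. First I would record what algebraic closure looks like in polar geometry: because of the Space Extension Axiom, no condition on the pairing $\Phi$ can pin down a new point (there are always infinitely many vectors realising any prescribed finite pattern of $\Phi$-values), so the only finite definable sets live inside the linear spans generated separately in $V$ and in $V^*$. Consequently the algebraic closure of a set is the union of the two subspaces it generates, one in $V$ and one in $V^*$. Since $V_0$ and $V_0^*$ are already algebraically closed (hence subspaces) and each $l \in L$ is a one-dimensional subspace of $V^*$, the sets $A_l = V_0 \cup (V_0^* + l)$, $A_{l'} = V_0 \cup (V_0^* + l')$ and $A_\emptyset = V_0 \cup V_0^*$ are exactly of this form, and so is their intersection.

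Given this, the intersection property says that a function supported by the two algebraically closed sets $A_l$ and $A_{l'}$ is supported by $A_l \cap A_{l'}$. Hence it suffices to establish the purely linear-algebraic identity $A_l \cap A_{l'} = A_\emptyset$; since $V_0 \subseteq V$ and the $V^*$-parts are disjoint from $V$, this reduces to showing $(V_0^* + l) \cap (V_0^* + l') = V_0^*$. For the dimension count I would use that $\alpha, \beta$ are linearly independent over $V_0^*$, so the plane $\mathit{span}(\alpha, \beta)$ meets $V_0^*$ only in $0$; in particular every line $l \in L$ has $l \cap V_0^* = \{0\}$, giving $\mathit{dim}(V_0^* + l) = \mathit{dim}(V_0^*) + 1$ and likewise for $l'$. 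As $l \neq l'$ are distinct lines of the two-dimensional space $\mathit{span}(\alpha, \beta)$ they span it, so $V_0^* + l + l' = V_0^* + \mathit{span}(\alpha, \beta)$ has dimension $\mathit{dim}(V_0^*) + 2$. The inclusion--exclusion formula for dimensions then yields $\mathit{dim}\big((V_0^* + l) \cap (V_0^* + l')\big) = \mathit{dim}(V_0^*)$, and since $V_0^*$ sits inside the intersection, equality of dimensions forces $(V_0^* + l) \cap (V_0^* + l') = V_0^*$. Thus $A_l \cap A_{l'} = A_\emptyset$ and any $f \in F_l \cap F_{l'}$ is $A_\emptyset$-supported, i.e.\ $f \in F_\emptyset$.

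The main obstacle is the first step rather than the computation: one must pin down algebraic closure in $\struct{G}_{\mathcal{F}_p}$ and check that $A_l$ and $A_{l'}$ are genuinely algebraically closed, so that the intersection property applies without modification. Once that is in place the remainder is an elementary dimension argument; the only subtlety is the bookkeeping between the two sorts, remembering that each $l$ lives in $V^*$ while $V_0$ lives in $V$, so the $V$- and $V^*$-components of the supports never interfere.
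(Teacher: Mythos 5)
Your proof is correct and follows essentially the same route as the paper's: both arguments reduce the claim to the identity $\mathit{span}(V_0^* \cup l) \cap \mathit{span}(V_0^* \cup l') = V_0^*$ (the paper via pairwise independence of nonzero $v \in l$, $w \in l'$ over $V_0^*$, you via an equivalent dimension count) and then invoke the intersection property of algebraically closed supports. Your extra care in checking that the sets $A_l$ are genuinely algebraically closed is a reasonable refinement of a step the paper leaves implicit, but it does not change the substance of the argument.
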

\begin{proof}
    $f \in F_l \cap F_{l'}$ if and only if $f$ is supported by both $V_0 \cup \mathit{span}(V_0^* \cup l)$ and $V_0 \cup \mathit{span}(V_0^* \cup l')$. Because $\alpha$ and $\beta$ are independent over $V_0^*$ every pair $\tuple{v, w}$ of non zero vectors $v \in l$ and $w \in l'$ for distinct $l, l' \in L$ is independent over $V_0^*$. Therefore, $\mathit{span}(V_0^* \cup l) \cap \mathit{span}(V_0^* \cup l') = V_0^*$ and by the intersection property of the acl-supports $f \in F_\emptyset$.
\end{proof}

\begin{lemma}
    For every $l \in L$ we have that $f_l \not\in F_\emptyset$.
\end{lemma}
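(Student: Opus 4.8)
The plan is to argue by contradiction: I would suppose $f_l \in F_\emptyset$, i.e.\ that $f_l$ is fixed by the pointwise stabiliser $\aut{\struct{G}_{\mathcal{F}_p}}_{(V_0 \cup V_0^*)}$, and then exhibit an automorphism in this group that moves $f_l$. First I would fix a generator $\gamma_0$ of the line $l$, so that $l = \mathit{span}(\gamma_0)$ and, by bilinearity, $f_l(v) = 0 \Leftrightarrow \Phi(v, \gamma_0) = 0$. Since $\alpha$ and $\beta$ are linearly independent over $V_0^*$ and $V_0 \cup V_0^*$ is algebraically closed (so $V_0^*$ is a subspace and $\mathit{span}(V_0^*) = V_0^*$), every nonzero $\mathcal{F}_p$-combination of $\alpha,\beta$ -- and in particular $\gamma_0$ -- lies outside $V_0^*$; as $\gamma_0 \in V^*$ it also lies outside $V_0$. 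Hence $\gamma_0 \notin V_0 \cup V_0^* = \mathit{acl}(V_0 \cup V_0^*)$.

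The key observation I would record is the equivariance of the construction: for any $\pi \in \aut{\struct{G}_{\mathcal{F}_p}}$ one has $\pi \bullet f_l = f_{\pi(l)}$. Indeed, $\pi$ fixes the scalars $\mathcal{F}_p$ pointwise, so $(\pi \bullet f_l)(v) = f_l(\pi^{-1} v)$, and since $\pi$ preserves $\Phi$ we get $f_l(\pi^{-1} v) = 0 \Leftrightarrow \Phi(\pi^{-1}v, \gamma_0) = 0 \Leftrightarrow \Phi(v, \pi\gamma_0) = 0 \Leftrightarrow f_{\pi(l)}(v) = 0$, and both functions are $\{0,1\}$-valued with the same zero set. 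It therefore suffices to find $\pi$ fixing $V_0 \cup V_0^*$ pointwise with $f_{\pi(l)} \neq f_l$.

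To produce such a $\pi$ I would invoke $\omega$-categoricity of $\struct{G}_{\mathcal{F}_p}$: since $\gamma_0 \notin \mathit{acl}(V_0 \cup V_0^*)$ and $V_0 \cup V_0^*$ is finite, the orbit of $\gamma_0$ under $\aut{\struct{G}_{\mathcal{F}_p}}_{(V_0 \cup V_0^*)}$ is infinite by the orbit-counting characterisation of algebraicity. As $\mathit{span}(\gamma_0)$ is finite, some $\pi$ in this stabiliser sends $\gamma_0$ to $\gamma_0' = \pi(\gamma_0) \notin \mathit{span}(\gamma_0)$, so that $\gamma_0$ and $\gamma_0'$ are linearly independent in $V^*$. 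Then I would distinguish $f_l$ from $f_{\pi(l)}$ using the Space Extension Axiom itself: applied to the independent pair $\gamma_0, \gamma_0'$ with prescribed scalars $0$ and $1$, it yields $v \in V$ with $\Phi(v, \gamma_0) = 0$ and $\Phi(v, \gamma_0') = 1$, whence $f_l(v) = 0$ while $(\pi \bullet f_l)(v) = f_{\pi(l)}(v) = 1$. Thus $\pi \bullet f_l \neq f_l$ with $\pi$ fixing $V_0 \cup V_0^*$ pointwise, contradicting $f_l \in F_\emptyset$, and so $f_l \notin F_\emptyset$.

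The main obstacle is the passage from non-algebraicity to a genuinely moving automorphism: one must guarantee not merely that $\gamma_0$ has an infinite orbit but that some conjugate \emph{escapes the finite line} $\mathit{span}(\gamma_0)$. This is exactly where both $\omega$-categoricity (through the orbit characterisation of $\mathit{acl}$) and the finiteness of $\mathcal{F}_p$ are used, and where it is essential that $V_0 \cup V_0^*$ was taken algebraically closed, so that $\gamma_0 \notin \mathit{acl}(V_0 \cup V_0^*)$ actually holds; the remaining steps are the routine equivariance computation $\pi \bullet f_l = f_{\pi(l)}$ and a direct application of the Space Extension Axiom.
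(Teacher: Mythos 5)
Your proof is correct, but it runs in the opposite direction from the paper's. The paper keeps the line $l$ fixed and moves points of the \emph{domain}: by the Space Extension Axiom it produces, for each $k \in \mathcal{F}_p$, a vector $v_k \in V$ that vanishes on all of $V_0^*$ and satisfies $\Phi(v_k, w) = k$ for a fixed generator $w$ of $l$; all the $v_k$ lie in a single orbit of the pointwise stabiliser of $V_0 \cup V_0^*$ (by homogeneity of the polar geometry, since their types over $V_0 \cup V_0^*$ are determined by the vanishing of $\Phi(v_k,-)$ on $V_0^*$), yet $f_l(v_0) = 0 \neq 1 = f_l(v_1)$, so $f_l$ is not $V_0 \cup V_0^*$-supported. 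You instead move the \emph{line}: you use the orbit characterisation of $\mathit{acl}$ in $\omega$-categorical structures to find $\pi$ in the stabiliser with $\pi(\gamma_0) \notin \mathit{span}(\gamma_0)$, check the equivariance identity $\pi \bullet f_l = f_{\pi(l)}$, and then apply the Space Extension Axiom once to the independent pair $\gamma_0, \pi(\gamma_0)$ to separate $f_l$ from $f_{\pi(l)}$. Both arguments rest on the extension axiom together with the fact that $\gamma_0$ (equivalently $l$) lies outside the algebraically closed set $V_0 \cup V_0^*$; the paper's version needs the explicit description of orbits of vectors orthogonal to $V_0^*$, while yours needs only the generic fact that non-algebraic elements have infinite orbits, at the price of the extra equivariance computation. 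Your passage from the infinite orbit of $\gamma_0$ to a conjugate escaping the finite set $\mathit{span}(\gamma_0)$, hence to a linearly independent pair, is exactly where the finiteness of $\mathcal{F}_p$ is used, and it is sound.
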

\begin{proof}
Let us consider a non-zero vector $w \in l$. By the extension axioms, for any $k \in \mathcal{F}_p$ there is $v_k \not\in V_0$ such that for every $\gamma \in V_0^*$ we have that $\Phi(v_k,\gamma) = 0$ and $\Phi(v_k, w) = k$. By the definition all $v_k$ are in the same $V_0 \cup V_0^*$-orbit, but for $k \neq k'$ we have that: $f_l(v_k) = k \neq k' = f_l(v_{k'})$, therefore $f_l$ is not $V_0 \cup V_0^*$-supported, i.e.~$f_l \not\in F_\emptyset$.    
\end{proof}

Now, let us make the following two simple remarks.
\begin{remark}[Coefficients are definable]\label{r:definable:coefficients}
    Let $\Lambda$ be an $A_0$-supported basis of a vector space $V$. For every $\lambda \in \Lambda$ denote by $\mor{c_\lambda}{V}{\mathcal{K}}$ the function that sends a vector $v \in V$ to its $\lambda$-coefficient. Then the set consisting of all coefficient functions $C = \{c_\lambda \colon \lambda \in \Lambda \}$ is supported by $A_0$. So see this, let us first define the set $\tilde{C} = \{\tilde{c}_\lambda \in \mathcal{K}^\Lambda \colon \tilde{c}_\lambda(\lambda') = [\lambda = \lambda'] \}$. The definition is in the terms of basis $\Lambda$, therefore the set is $A_0$-supported. Because the free vector space monad is equivariant it preserves the supports and so $C$ is also $A_0$-supported.
\end{remark}
\begin{remark}\label{r:definable:elements}
    Suppose that an $\omega$-categorical structure $\struct{A}$ has the intersection property for algebraically closed supports. If a finite set $X$ is supported by a minimal algebraically closed set $S$ then every $x \in X$ is supported by $S$. To see this, let us write $Z = \bigcup_{x\in X} \mathit{supp}(x)$. We want to show that $Z \subseteq S$. For contradiction let us assume that there exists $z \in \mathit{acl}(Z)$ such that $z \not\in S$. Then, because $S$ is algebraically closed the $S$-orbit $[z]_S$ of $z$ must be infinite (otherwise $z$ would be algebraic over $S$). Because $Z$ is finite and $\struct{A}$ is $\omega$-categorical $\mathit{acl}(Z)$ must be finite and so there is $\pi_S$ such that $\pi_S(z) \not\in \mathit{acl}(Z)$. On the other hand $\pi_S^{-1}(X) = X$ so every element $x \in X$ is also supported by $\pi_S^{-1}(Z)$ and by the intersection property for supports, it is supported by $\mathit{acl}(Z) \cap \pi_S^{-1}(\mathit{acl}(Z))$. Because $\pi_S(z) \not\in \mathit{acl}(Z)$ we have that $z \not\in \pi_S^{-1}(\mathit{acl}(Z)) = \mathit{acl}(\pi_S^{-1}(Z))$ and so $z \not\in \pi_S^{-1}(\mathit{acl}(Z)) \cap \mathit{acl}(Z)$. Therefore, $z$ is not in the minimal support of any $x \in X$.   
\end{remark}
Consider the expansion of $f_l$ in basis $\Lambda$, i.e.~$f_l = \sum_{i} c_{\lambda_i}(l) \lambda_i$. The set $I = \{\lambda_i \colon c_{\lambda_i}(l) \neq 0\}$ is $V_0 \cup \mathit{span}(V_0^* \cup l)$-supported by Remark~\ref{r:definable:coefficients} and by Remark~\ref{r:definable:elements} every $\lambda_i \in I$ is $V_0 \cup \mathit{span}(V_0^* \cup l)$-supported, therefore $\lambda_i \in F_l$. Because $f_l \not\in F_\emptyset$ there must be $\lambda_i \in I$ such that $\lambda_i \not\in F_\emptyset$. Therefore, for $l' \neq l$ we have that $\lambda_i \not\in F_{l'}$ and so $c_{\lambda_i}(f_{l'}) = 0$. But then: $0 \neq c_{\lambda_i}(f_{l}) = c_{\lambda_i}(\sum_{l' \in L} f_{l'}) = 0$.

\section{Proofs from the paper}\label{sec:app:proofs}

\begin{proof}[Proof of Lemma~\ref{l:orbit:cardinal:space}]
We claim that for $\lambda \in \mathcal{B}(\kappa)$ and $p \in \overline{X}$ vectors $\mor{(\lambda, p)}{\kappa \times X}{\mathcal{K}}$ defined as $(\lambda, p)(i, x) = \lambda(i)p(x)$ form the basis of $\mathcal{K}^{\kappa \times X}$. Linear independence of the vectors is obvious, therefore let us show that every definable function $\mor{f}{\kappa \times X}{\mathcal{K}}$ is a finite combination of these vectors. Because $f$ is definable it is $A_0$-supported for some finite $A_0$. The crucial observation is that for every $i \in \kappa$ the function $\mor{f(i, -)}{X}{\mathcal{K}}$ must be $A_0$-supported, so there are only finitely many $X_j$ such that $X = \coprod_{1 \leq j \leq n} X_j$ and the restrictions $\mor{f_j(i, -)}{X_j}{\mathcal{K}}$ are constant. Let us denote the constant associated to the pair $i, j$ by $r_{i, j}$. Then $r_{{(-)}, j}$ is a function $\kappa \rightarrow \mathcal{K}$ and as such has a unique decomposition in the basis $\mathcal{B}(\kappa)$, say: $r_{{(-)}, j} = \sum_{1 \leq s_j \leq N} c_{s_j} \lambda_{s_j}$. Moreover, by Theorem~\ref{t:free:space} each $X_j$ has its own decomposition in $\overline{X}$ as $X_j = \sum_{1 \leq t_j \leq M} b_{t_j}p_{t_j}$, where $M, N$ can be chosen to not depend on $j$. So:
$f(i, x) = \sum_{1 \leq j \leq n} f(i, x)X_j = \sum_{1 \leq j \leq n} r_{i, j}X_j = \sum_{1 \leq j \leq n}\sum_{1 \leq s_j \leq N}\sum_{1 \leq t_j \leq n} c_{s_j}b_{t_j} \lambda_{s_j}p_{t_j}$. 
\end{proof}
\begin{proof}[Proof of Theorem~\ref{t:dual:basis}]
Without loss of generality, we shall assume that $X$ is equivariant and $A$ eliminates imaginaries. Let $b$ be the bound on the size of the support of each element $x \in X$. Equivariant set $X$ can be written as a disjoint union of its equivariant orbits $(X_i)_{i \in I}$, where $I$ is a cardinal number. By elimination of imaginaries of $A$, every orbit $X_i$ is isomorphic to an equivariant orbit of $A^b$ and by $\omega$-categoricity of $A$ there are only finitely many of them. Therefore, there are some $X_{i_1}, X_{i_2}, \dotsc, X_{i_n}$ and cardinals $\kappa_1, \kappa_2, \dotsc, \kappa_n$ such that:
$X \approx \coprod_{1 \leq j \leq n} \kappa_j \times X_{i_j}$.
Because, the free vector space functor $F$ preserves colimits, and exponents map colimits to limits:$\mathcal{K}^X \approx \word{Lin}(F(X), \mathcal{K})$  
$\approx \word{Lin}({F(\coprod_{1 \leq j \leq n} \kappa_j \times X_{i_j}}), \mathcal{K}) \approx \prod_{1 \leq j \leq n} \word{Lin}(F(\kappa_j \times X_{i_j}), \mathcal{K}) = \coprod_{1 \leq j \leq n} \word{Lin}(F(\kappa_j \times X_{i_j}), \mathcal{K})$,
where the last equality follows from the fact that finite coproducts coincide with finite products for vector spaces. By Lemma~\ref{l:orbit:cardinal:space} we have that: $\mathcal{K}^{\kappa_j X_{i_j}} \approx F(\mathcal{B}(\kappa_j)\overline{X_{i_j}})$ and so:
$$\mathcal{K}^X \approx \coprod_{1 \leq i \leq n} F(\mathcal{B}(\kappa_j)\times \overline{X_{i_j}}) \approx F(\coprod_{1 \leq i \leq n} \mathcal{B}(\kappa_j)\times \overline{X_{i_j}})$$
Therefore, $\coprod_{1 \leq i \leq n} \mathcal{B}(\kappa_j)\times \overline{X_{i_j}}$ is (isomorphic to) a basis of $\mathcal{K}^X$. Observe that because each $\mathcal{B}(\kappa_j)$ and $\overline{X_{i_j}}$ are equivariant, the constructed basis is equivariant. By Theorem~\ref{t:stone:cech:compactification} each $\overline{X_{i_j}}$ is definable, therefore the support of its elements is bounded by some finite $b_j$. So $\coprod_{1 \leq i \leq n} \mathcal{B}(\kappa_j)\times \overline{X_{i_j}}$ is of a bounded support $\max_{1 \leq i \leq n} b_j$.
Moreover, if $X$ is definable, then the cardinals $\kappa_j$ must be finite, and by Theorem~\ref{t:stone:cech:compactification} the basis consists of finitely many orbits, thus is definable. 
\end{proof}

\end{document}